\newcommand{\Th}{\operatorname{th}}
\newcommand{\aTh}{\operatorname{ath}}
\newcommand{\ch}{\operatorname{ch}}
\newcommand{\sh}{\operatorname{sh}}
\newcommand{\id}{{\bf 1}}
\newcommand{\Strip}{\mbox{Strip}}
\newcommand{\Hol}{\mbox{Hol}}
\newcommand{\Spec}{\mbox{Spec}}
\newcommand{\supp}{\mbox{supp}}
\newcommand{\Dom}{\mbox{Dom}}
\begin{document}

\title{A UNIFIED MODE DECOMPOSITION METHOD FOR PHYSICAL FIELDS IN HOMOGENEOUS COSMOLOGY}

\author{ZHIRAYR G. AVETISYAN}

\address{Max Planck Institute for Mathematics in the Sciences, Inselstr. 22,\\
04103 Leipzig, Germany\\
\email{jirayrag@gmail.com}}

\address{Institut f\"ur Theoretische Physik, Universit\"at Leipzig, Br\"uderstr. 16,\\
04103 Leipzig, Germany}

\maketitle

\begin{abstract}
The methods of mode decomposition and Fourier analysis of
classical and quantum fields on curved spacetimes previously
available mainly for the scalar field on Friedman-Robertson-Walker
(FRW) spacetimes are extended to arbitrary vector bundle fields on
general spatially homogeneous spacetimes. This is done by
developing a rigorous unified framework which incorporates mode
decomposition, harmonic analysis and Fourier analysis. The limits
of applicability and uniqueness of mode decomposition by
separation of the time variable in the field equation are found.
It is shown how mode decomposition can be naturally extended to
weak solutions of the field equation under some analytical
assumptions. It is further shown that these assumptions can always
be fulfilled if the vector bundle under consideration is analytic.
The propagator of the field equation is explicitly mode
decomposed. A short survey on the geometry of the models
considered in mathematical cosmology is given and it is concluded
that practically all of them can be represented by a semidirect
homogeneous vector bundle. Abstract harmonic analytical Fourier
transform is introduced in semidirect homogeneous spaces and it is
explained how it can be related to the spectral Fourier transform.
The general form of invariant bi-distributions on semidirect
homogeneous spaces is found in the Fourier space which generalizes
earlier results for the homogeneous states of the scalar field on
FRW spacetimes.
\end{abstract}

\keywords{Mode decomposition; propagator; homogeneous cosmology;
homogeneous states.}

\ccode{Mathematics Subject Classification 2000: 83F05, 81T20,
43A85}

\section{Introduction}

As long as mankind is not in possession of a successful and
commonly accepted quantum theory of gravitation (and possibly even
thereafter), the quantum field theory on curved spacetimes (QFT in
CST) is an adequate and consistent theoretical framework for
astronomy and cosmology. Vicinities of black holes and the early
epoch of the universe are two prominent physical situations where
gravity is sufficiently strong so that its influence on the
quantum field theory cannot be neglected. At the same time, in
these situations the gravity is sufficiently uniform (i.e., its
local fluctuations are negligible) to be considered classical and
interacting with matter quantum fields mainly macroscopically.
This semiclassical picture is captured by letting quantum fields
propagate on a curved spacetime. The back reaction of the matter
on gravity is described by the semiclassical Einstein equation,
where gravity feels only the expectation values of quantum
entities. This is the domain of quantum field theory on curved
spacetimes, apart from its intrinsic appeal for the beauty and
variety of fundamental problems it poses in mathematical physics.
QFT in CST adopts the more modern algebraic quantum field theory
setup and, as appropriate to mathematical physics, attempts to be
as axiomatic and deductive as possible and mathematically
rigorous. The disadvantage of such an approach is the extreme
difficulty of producing explicit ready-made results which can be
applied on the observational level, and each such result can be
considered as a remarkable success. For instance, it was not until
2010 when the authors of \cite{Degner_Verch_2010} obtained the
first completely rigorous and at the same time explicit
description of cosmological particle creation in states of low
energy. This was done under several assumptions which can or
cannot be considered realistic in cosmology. Namely, the
Klein-Gordon field on FRW spacetimes was chosen, and homogeneity
and sufficient regularity was stipulated a priori for the desired
state of low energy. When one tries to step a bit beyond these
restrictions one immediately faces severe mathematical
difficulties along the entire way from the very setup until the
final expressions. The reason is that the chain of results used in
these constructions has been obtained only under the above
mentioned assumptions. The aim of the current work is the
extension of some of those mathematical methods to a generality
where they can be applied for practically all realistic
cosmological situations. To which extent this program has been
successful will become clear below.

A primary tool for obtaining explicit constructions are geometric
symmetries. After publishing his eminent work on general
relativity, Einstein declared he had no hope to see explicit
solutions of his equation in the near future. It was the rich
symmetry of the FRW spacetimes that allowed Friedmann to find
first explicit solutions and thus to dispel the despair of
Einstein shortly after his publication. This instant can be
considered as the birth of modern mathematical cosmology, which
until today remains one of the main appliers of explicit solutions
in general relativity. One of the merits of geometric symmetries
is the possibility of the separation of variables in the field
equation which helps to obtain explicit solutions. The mode
decomposition of the solutions of the field equation (also
referred to as the Fourier method in PDE, or expansion into
harmonic oscillators in physics) was probably first applied in the
cosmological context by Parker \cite{Parker1969} who performed it
on the flat FRW spacetime. The idea of the method is that one
tries to separate the time variable in the field equation, and
looks for solutions as linear combinations of products $X(\vec
x)T(t)$ where $X$ depends only on the spatial coordinates and $T$
only on time. What Parker discovered is that this is possible on
FRW spacetimes and represents a very handy tool for the analysis
of the dynamics. A thorough analytical investigation of the method
in the cosmological context was conducted in \cite{Fulling1989},
where an abstract functional analytical eigenfunction expansion
was introduced as a methodological background, and precise methods
were suggested for the mode decomposition of regular solutions on
FRW and ultrastatic spacetimes. The theory of the method does not
seem to have been developed any further until nowadays. In
particular, the following questions remain open. What are the
precise limits of applicability of the mode decomposition by means
of separation of the time variable? How many different
decompositions are possible for the same geometrical setup? When
and how can the decomposition be extended to weak (distributional)
solutions?

In the first part of this work satisfactory answers will be given
to these questions and establish a unified framework for the
method. Our geometrical setup will be a finite dimensional vector
bundle over an arbitrary globally hyperbolic Lorentzian manifold
furnished with a pseudo-Riemannian fiber metric and a fiber metric
linear connection. The field equation will then be given by the
field operator $D=\Box^\nabla+m^\star(x)$ where $\Box^\nabla$ is
the connection d'Alambert operator acting on the smooth sections,
and $m^\star(x)$ will be the variable smooth "mass term" (possibly
including a coupling to scalar curvature) to which mild
assumptions will be imposed. This seems to be the most general
setup of a (symmetric) hyperbolic linear field on a curved
spacetime, and covers most practical situations in the
cosmological context.

The results can be briefly described as follows. Precise
geometrical necessary and sufficient conditions are obtained for
the mode decomposition of smooth solutions by time separation to
be realizable. This mode decomposition is given, as perhaps
expected, by the time dependent Fourier transform, and is shown to
be basically the only such mode decomposition possible. The
decomposition is extended to all distributional solutions in a
natural manner given that there exists a choice of modes
fulfilling certain regularity conditions. The conditions become
fully explicit once one has a Paley-Wiener theorem for the spatial
Fourier transform, i.e., a precise description of the Fourier
image of the test functions space. This is the situation in FRW
spaces. Moreover, it is shown that if the bundle is analytic and
the dynamics of the geometry sufficiently rigid (precise
definitions are given) then the conditions are satisfied
regardless of the harmonic analysis involved. It turns out that
that the mode solutions of non scalar fields under certain
circumstances experience infrared instability periods not known
for scalar fields; the author is yet not sure about the physical
essence of this phenomenon. Apart from this, traditionally
infrared integrability issues arise when integrating modes over
the spectrum $\Spec(D_{\Sigma})$ of a Schr\"odinger operator
$D_\Sigma$ if $\overline{\Spec(D_{\Sigma})}$ includes the
eigenvalue $0$ \cite{Fewster:2003ey}. Here this question of
integrability of modes over the spectrum is settled by showing
that the suitably chosen mode solutions remain well under control
even at non positive spectral values. As an example of application
and as a byproduct the explicit formula of the propagator of the
field in the Fourier space is found, which generalizes one
obtained in \cite{Lueders_Roberts_1990}.

Another advantage of geometric symmetries is the possibility to
apply harmonic analysis. This is particularly true for the
cosmological models where a rather rich group of spatial
isometries is imposed. Then the spatial sections of the spacetime
can be considered as homogeneous spaces, and the spatial Fourier
transform can be investigated in much more detail with many
explicit consequences. These properties then can be dragged to the
time dependent Fourier transform and hence to the mode
decomposition. Of particular interest in the quantum field theory
on cosmological spacetimes are the spatially symmetric (invariant)
states, of which the 2-point functions are bi-distributional
solutions of the field equation which are invariant under the full
isometry group. In \cite{Lueders_Roberts_1990} the Fourier image
of the isotropic homogeneous states of the Klein-Gordon field on
FRW spacetime has been obtained, under an additional continuity
requirement which has no clear physical interpretation. In
contrast the most general form of invariant scalar
bi-distributions on $\mathbb{R}^d$ has been obtained in
\cite{Gelfand_Vilenkin1964} using a nice technique. What appears
to be missing is a generalization of these results to sufficiently
many homogeneous spaces so that practically all cosmological
situations are covered. The harmonic analysis of FRW symmetry
groups is well known since long, but strictly speaking isotropy is
not as fundamental in cosmology as homogeneity, and one is also
interested in cosmological models which are only homogeneous
(Bianchi models) or in addition only partially isotropic (LRS
models). The isometry groups of these spaces are described by
Bianchi groups with their quotients and semidirect extensions (in
case of LRS models). Some of these groups are solvable, others are
semisimple, with finite or infinite center. Therefore it is not
easy to establish a unified harmonic analytical approach for all
cases, although one has to admit that Kirillov's theory for the
solvable groups and Helgason's theory for semisimple groups
together would cover the majority of situations. To obtain a
unified theory one can adopt abstract harmonic analysis. This
beautiful branch of mathematics allows to obtain many results in
an admirable generality. However, apart from compact groups, it is
not completely clear how to relate the abstract group Fourier
transform with the eigenfunction expansion of the invariant
Laplace operator. At least there seems to be no unified exposition
of these techniques applied in the cosmological context.

The aim of our second part will be to put together some tools from
harmonic analysis which are adequate in cosmology, and to obtain
useful results with their help. First a short survey on
homogeneous bundles in general, and on semidirect homogeneous
bundles in particular is carried out, and it is shown that they
cover the vast and the most important majority of the realistic
cosmological structures. Next, the abstract harmonic analytical
Fourier transform is introduced with its requisites on semidirect
homogeneous spaces. Although the abstract Fourier transform on
groups and the representation theory in homogeneous spaces are to
be considered as well studied and widely known, the abstract
Fourier transform on homogeneous spaces is not that popular and
deserves a better exposition (at least we were not able to find a
satisfactory one in the literature). Then some properties of the
Fourier transformed distributions are established. Although some
of these results may be known to experts in harmonic analysis, we
were not able to locate them in the for required for our purposes
in the literature. And because these properties are needed to
obtain our results on invariant distributions they are included
here with proofs. Next an attempt is made to unify the abstract
Fourier transform and the usual spatial Fourier transform given by
the eigenfunction expansion of the Laplace operator. Several
remarks are made on this way, which may serve as a guideline to
completely build the desired correspondence once a particular
structure is specified. This has been indeed performed for the
purely homogeneous cosmological models, which will appear in a
subsequent publication. The necessity of such an explicit
correspondence consists in the ability to transfer the results
obtained in the abstract setup to the situation with the usual
Fourier transform, which is far more useful a tool for concrete
calculations. Finally, by a generalization of the above mentioned
technique in \cite{Gelfand_Vilenkin1964} the general form of the
invariant bi-distributions in arbitrary semidirect homogeneous
vector bundle is found without any additional assumptions on their
regularity. It is concluded that the polynomial bound of the
Fourier transformed homogeneous state as found in
\cite{Lueders_Roberts_1990} is a consequence of the imposed
regularity requirements.

\section{Mode Decomposition of Hyperbolic Fields}

\subsection{Linear hyperbolic fields}

It is generally believed that the forces of nature are described
by tensor and spinor fields. A geometrical generalization of those
are the vector bundle fields, i.e., fields as smooth sections of
some vector bundles. In general relativity one works mainly on a
four dimensional Lorentzian smooth manifold $(M,g)$ which is
called a spacetime. We will be concerned with hyperbolic fields
given by a wave equation, hence we put an additional constraint on
the spacetime $(M,g)$ to be globally hyperbolic, so that the
Cauchy problem of the wave equation is well-posed. For simplicity
only linear fields will be discussed here. For the reduction of
the Maxwell and Proca fields to linear hyperbolic fields the
reader is referred to
\cite{Fewster:2003ey},\cite{BarGinouxPfaffle200703}. We summarize
the basic setup of the the linear hyperbolic fields in the
following section.

Let $V$ be an $n$-dimensional vector space. Let
$\mathcal{T}\xrightarrow[]{\pi}M$ be a vector bundle with standard
fiber $V$ and with a pseudo-Riemannian metric $\langle
u,v\rangle_\mathfrak{g}$. Let further
$\mathcal{E}(\mathcal{T})=C^\infty(\mathcal{T})$ and
$\mathcal{D}(\mathcal{T})=C_0^\infty(\mathcal{T})$ be the spaces
of smooth sections and of those with compact support,
correspondingly. Let $\nabla$ be a metric connection on
$\mathcal{T}$ and $\Box^\nabla$ the associated d'Alambert operator
on $\mathcal{E}(\mathcal{T})$. Define the field operator
\index{Field operator} to be the normal hyperbolic operator
$D=\Box^\nabla+m^\star(x)$ acting on $\mathcal{E}(\mathcal{T})$,
where $m^\star\in C^\infty(M)$ is a generalization of the usual
mass term $m^2$ which now can also contain the coupling term $\xi
R$. Note that because differential operators are
support-decreasing,
$D\mathcal{D}(\mathcal{T})\subset\mathcal{D}(\mathcal{T})$. A free
linear hyperbolic field $\phi\in\mathcal{E}(\mathcal{T})$ is a
solution of the field equation $D\phi=0$.

Being a globally hyperbolic spacetime,
$M=\mathcal{I}\times\Sigma$, where
$\mathcal{I}\subseteq\mathbb{R}$ is an interval, and for each
$t\in\mathcal{I}$ the hypersurface $\Sigma_t\sim\Sigma$ is a three
dimensional embedded Riemannian submanifold, which is spacelike
with respect to $g$ and is a Cauchy surface in the sense described
below. Thanks to \cite{Bernal_Sanchez_2005} one can choose a
smooth global time function $t$ and a coordinate atlas such that
$x=(t,\vec x)=(x_0,x_1,x_2,x_3)$ where $t\in\mathcal{I}$ and $\vec
x\in\Sigma$, i.e., $\Sigma_t$ are equal $t$ hypersurfaces. The
restriction of the bundle $\mathcal{T}$ to the submanifold
$\Sigma_t$ will be denoted by $\mathcal{T}_t=\pi^{-1}(\Sigma_t)$.
The spaces of smooth sections will be
$\mathcal{E}(\mathcal{T}_t)=C^\infty(\mathcal{T}_t)$ and
$\mathcal{D}(\mathcal{T}_t)=C_0^\infty(\mathcal{T}_t)$. If
$i_t:\mathcal{T}_t\to\mathcal{T}$ is the identical embedding, then
its pullback $i_t^*$ is the restriction map for objects on
$\mathcal{T}$ to $\mathcal{T}_t$. In particular
$i_t^*:\mathcal{E}(\mathcal{T})\to\mathcal{E}(\mathcal{T}_t)$ and
$i_t^*:\mathcal{D}(\mathcal{T})\to\mathcal{D}(\mathcal{T}_t)$ are
linear surjective maps. The embedding $\pi\circ
i_t\circ\pi^{-1}:M\to\Sigma$ gives rise to a natural embedding
$i_t:TM\to T\Sigma$ and of all tensor bundles (using the same
symbols $i_t$, $i^*_t$ for different restrictions in the spirit of
polymorphism should not lead to a confusion). The Riemannian
metric $h$ on $T\Sigma$ will be $h=-i^*_t(g)$, with minus sign
here because of the signature convention $(+,-,-,-)$. The
restriction $i^*_t(\nabla)=\nabla_{i^*_t(.)}=\nabla^t$ is a metric
connection on $\mathcal{T}_t$. The associated Laplace operator
$\Delta_t=\Delta^{\nabla^t}$ is an elliptic operator on
$\mathcal{E}(\mathcal{T}_t)$ (so that $-\Delta_t$ is a positive
operator). The restriction of the field operator $D$ to
$\mathcal{E}(\mathcal{T}_t)$ will be denoted by
$D_{\Sigma_t}=-\Delta_t+m^\star(x)$.

An existence and uniqueness theorem
\cite{BarGinouxPfaffle200703},\cite{Gunther1988},\cite{Dimock1980}
for wave operators tells that the Cauchy problem is well posed:
there exists a bijective linear map
$$
\mathcal{E}(\mathcal{T}_t)\oplus
\mathcal{E}(\mathcal{T}_t)\ni(f_0,f_1)\to
\jmath(f_0,f_1)\in\left\{f\in\mathcal{E}(\mathcal{T})\mbox{:
}Df=0\right\}
$$
such that $f_0=i_t^*(f)$ and $f_1=i_t^*(\nabla_tf)$, where
$\nabla_t=\nabla_{\frac{\partial}{\partial t}}$. Furthermore,
there exist unique Green's operators
$E^\pm:\mathcal{D}(\mathcal{T})\to\mathcal{E}(\mathcal{T})$
satisfying $DE^\pm=E^\pm D=id_{\mathcal{D}(\mathcal{T})}$ and
$supp\{G^\pm f\}\subset J^\pm(supp\{f\})$ for all
$f\in\mathcal{D}(\mathcal{T})$. Here $J^\pm(N)$ with a subset
$N\subset M$ denotes the causal future/past of $N$. Define by
$E=E^+-E^-$ the {\it propagator}\index{Propagator} of $D$, which
satisfies $DE=ED=0$. Now
$Sol(\mathcal{T})=\jmath(\mathcal{E}(\mathcal{T}_t)\oplus
\mathcal{E}(\mathcal{T})_t)$ and
$Sol_0(\mathcal{T})=\jmath(\mathcal{D}(\mathcal{T}_t)\oplus
\mathcal{D}(\mathcal{T}_t))$ will denote correspondingly the
spaces of all smooth solutions, and of those satisfying
$supp\{f\}\cap\Sigma_t$ compact for all $t\in\mathcal{I}$,
respectively. Then $E\mathcal{D}(\mathcal{T})\subset
Sol_0(\mathcal{T})$. There is a symplectic form on
$Sol_0(\mathcal{T})$:
$$
\sigma(u,v)=\int_{\Sigma_t}d\mu_h\left[\langle
i_t^*(u),i_t^*(\nabla_tv)\rangle_g-\langle
i_t^*(\nabla_tu),i_t^*(v)\rangle_g\right]\mbox{, }\forall u,v\in
Sol_0(\mathcal{T})\mbox{, }\forall t\in\mathcal{I},
$$
where $h=-i_t^*(g)$ is the induced Riemannian metric on
$\Sigma_t$. That this is conserved (analogous to a Wronskian in
ODE) can be seen by considering the Green's identity for $u,v\in
Sol_0(\mathcal{T})$ on the regular cylindric region
$U=(t_1;t_2)\times\Sigma\subset M$ for any $t_1\neq t_2$,
$$
0=\int_Ud\mu_g\left[\langle u,Dv\rangle_g-\langle
Du,v\rangle_g\right]=
$$
$$
=\int_{\partial U}d\mu_h\left[\langle
i_t^*(u),i_t^*(\nabla_tv)\rangle_g-\langle
i_t^*(\nabla_tu),i_t^*(v)\rangle_g\right]\mbox{, }\forall u,v\in
Sol_0(\mathcal{T}).
$$
This identity also helps us along with Green's operators to find
the explicit form of the map $\jmath$. Given any $v\in
Sol_0(\mathcal{T})$, $f\in\mathcal{D}(\mathcal{T})$, we apply it
two times; once for the pair $v,u=E^+(f)$ on the region
$U^+=(-\inf\{\mathcal{I}\};t)$ and once for the pair $v,u=E^-(f)$
on the region $U^+=(t;\inf\{\mathcal{I}\})$. Summing up the
resulting two identities and using the support properties of
$E^\pm$ we arrive at
\begin{eqnarray}
\int_Md\mu_g\langle
v,f\rangle_g=\sigma(v,E(f)).\label{CauchyExplicit}
\end{eqnarray}
We see that the functional
$\sigma(v,E(.)):\mathcal{D}(\mathcal{T})\to\mathbb{C}$ actually is
given by a smooth integral kernel, which equals $v$. Thus we can
write symbolically
$$
\jmath(f_0,f_1)[y]=\int_{\Sigma_t}d\mu_h\left[\langle
f_0,\nabla_tE(y)\rangle_g-\langle f_1,E(y)\rangle_g\right]\mbox{,
}\forall f_0,f_1\in\mathcal{D}(\mathcal{T}_t)\mbox{, }y\in
M\mbox{, }t\in\mathcal{I}.
$$
For full details of this last computation the reader is referred
to \cite{DIMOCK_1992}, where the argument is given for 1-forms,
but is readily applicable to our more general case.

\begin{proposition}\label{ESurj}
The operator $E:\mathcal{D}(\mathcal{T})\to Sol_0(\mathcal{T})$ is
surjective.
\end{proposition}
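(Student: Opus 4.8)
The plan is to show that every $u \in Sol_0(\mathcal{T})$ lies in the image of $E$ by explicitly constructing a preimage $f \in \mathcal{D}(\mathcal{T})$ using a cutoff in the time variable. First I would pick two Cauchy surfaces $\Sigma_{t_1}$ and $\Sigma_{t_2}$ with $t_1 < t_2$ and a smooth function $\chi \in C^\infty(\mathcal{I})$ with $\chi \equiv 0$ for $t \le t_1$ and $\chi \equiv 1$ for $t \ge t_2$; regarding $\chi$ as a function on $M$ via the time coordinate, set $f = D(\chi u)$. Since $u$ is smooth and solves $Du = 0$, the operator $D$ differentiates $\chi u$ only where $\chi$ is non-constant, so $\supp f \subset (t_1,t_2) \times \Sigma$ intersected with $\supp u$; because $u \in Sol_0(\mathcal{T})$ has spatially compact support on each $\Sigma_t$, this intersection is compact, hence $f \in \mathcal{D}(\mathcal{T})$.

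Next I would verify that $E f = u$. Apply the Green's operators: since $\chi u$ vanishes in the past of $\Sigma_{t_1}$, it agrees with $E^+ f$ there trivially, and in fact $E^+ D(\chi u) = \chi u$ would hold if $\chi u$ had compact support in time, which it does not, so one must argue via support properties. The cleaner route is: $E^- f$ is the unique solution of $DE^- f = f$ with past-compact support; I claim $E^- f = \chi u$ is wrong since $\chi u$ is not past-compact, but $E^- f = -(1-\chi)u$ works because $D(-(1-\chi)u) = D(\chi u - u) = D(\chi u) = f$ (using $Du=0$), and $-(1-\chi)u$ is supported in $t \le t_2$ with spatially compact slices, hence past-compact. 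Similarly $E^+ f = \chi u$, which is future-compact in the appropriate sense. Therefore $E f = E^+ f - E^- f = \chi u - (-(1-\chi)u) = \chi u + (1-\chi)u = u$, as desired.

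The main obstacle is being careful about the support conditions that distinguish $E^\pm$: these Green's operators are characterized not merely by inverting $D$ but by the causal support property $\supp(E^\pm f) \subset J^\pm(\supp f)$, so the verification that $\chi u$ and $-(1-\chi)u$ are the correct advanced and retarded solutions requires checking that their supports lie in the causal future (resp. past) of the compact set $\supp f \subset (t_1,t_2)\times\Sigma$, not merely in a time half-space. This follows because $u \in Sol_0(\mathcal{T})$, having spatially compact Cauchy data, propagates with support inside the causal shadow of a compact set, so $\supp u \cap J^+(\supp f)$ captures all of $\supp(\chi u)$ to the future and symmetrically for the past; the finite propagation speed for the normally hyperbolic operator $D$ closes the argument. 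Once this is in place, surjectivity is immediate, and one notes as a remark that the kernel of $E$ is exactly $D\mathcal{D}(\mathcal{T})$, so $E$ induces an isomorphism $\mathcal{D}(\mathcal{T})/D\mathcal{D}(\mathcal{T}) \xrightarrow{\sim} Sol_0(\mathcal{T})$.
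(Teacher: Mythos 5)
Your construction is essentially identical to the paper's: both introduce a time cutoff $\chi$, define $f = D(\chi u)$ (the paper uses the complementary convention, $\chi \equiv 1$ in the past and $0$ in the future, and writes $v^\pm$ for the two pieces), and identify the pieces with $E^\pm f$ via the uniqueness of solutions with past/future-compact support, so that $Ef = u$. Your third paragraph, spelling out why the supports of $\chi u$ and $-(1-\chi)u$ actually land in $J^\pm(\supp f)$ rather than merely in time half-spaces, makes explicit a point the paper's proof leaves implicit, but the underlying argument is the same.
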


\begin{proof}
Let $v\in Sol_0$, and let $K_v=\supp
v\cap\left([0,1]\times\Sigma\right)$ be the compact region of its
support between times 0 and 1. Let further $\chi\in\mathcal{E}(M)$
be a smooth function which equals 1 for $t<0$ and 0 for $t>1$.
Denote $v^-=-v\chi$ and $v^+=v(1-\chi)$, then $v=v^+-v^-$. Let
$f_v=Dv^+$, then $\supp f_v\subset K_v$ is compact, hence
$f_v\in\mathcal{D}(\mathcal{T})$. The equation $f_v=Dv^+$ has a
unique solution with past compact support, and it is given by
$v^+=E^+f_v$. Now $Dv^-=-Dv+Dv^+=f_v$, and similarly $v^-=E^-f_v$.
Then $v=E^+f_v-E^-f_v=Ef_v$. The arbitrariness of $\chi$ reflects
the non-injectivity of $E$.
\end{proof}

\subsection{Spectral mode decomposition}

Henceforth we will use nomenclature introduced in the Appendix A
without special notice. Consider the operators
$D:\mathcal{D}(\mathcal{T})\to\mathcal{D}(\mathcal{T})$ and
$D_\Sigma:\mathcal{D}(\mathcal{T}_t)\to\mathcal{D}(\mathcal{T}_t)$.
If $m^\star(x)\in\mathbb{R}$ everywhere on $M$, then by the virtue
of Green's identity $D$ and $D_{\Sigma_t}$ are formally
self-adjoint with respect to the inner products $(,)_M$ and
$(,)_{\Sigma_t}$. We will not need the self-adjointness of $D$ in
the current work. The constructions below will pertain mainly to
$D_{\Sigma_t}$. The conditions on $m^\star(x)$ for $D_{\Sigma_t}$
to have a self-adjoint extension can be found in
\cite{CyconFroeseKirschSimon200801}. We moreover require that the
operator $D_{\Sigma_t}$ be lower semi-bounded. In practice will be
mainly interested in cosmological models, where
$m^\star(x)=m^\star(t)$ is a function of time only, so that no
problems arise. Below we assume self-adjoint extensions for both
$D$ and $D_{\Sigma_t}$, but for $D$ this is only symbolic and
targets simply at coherent notations.

Let $D$ and $D_{\Sigma_t}$ be extended to self-adjoint operators
on $L^2(\mathcal{T})$ and $L^2(\mathcal{T}_t)$ respectively. In
the rigged Hilbert spaces
\cite{Gelfand_Vilenkin1964},\cite{Maurin1972},\cite{Maurin1968}
$\mathcal{D}(\mathcal{T})\subset
L^2(\mathcal{T})\subset\mathcal{D}(\mathcal{T})'$ and
$\mathcal{D}(\mathcal{T}_t)\subset
L^2(\mathcal{T}_t)\subset\mathcal{D}(\mathcal{T}_t)'$ operators
$D$ and $D_{\Sigma_t}$ possess complete systems of eigenfunctions
$\{u_\rho\}$ and $\{\zeta_\lambda\}$ satisfying
$$
Du_\rho=\rho u_\rho\mbox{,
}u_\rho\in\mathcal{D}(\mathcal{T})'\mbox{, }\rho\in\mathbb{R},
$$
$$
D_{\Sigma_t}\zeta_\lambda=\lambda\zeta_\lambda\mbox{,
}\zeta_\lambda\in\mathcal{D}(\mathcal{T}_t)'\mbox{,
}\lambda\in\mathbb{R}.
$$
Denote by $\mathcal{D}(\mathcal{T})_\rho'$ and
$\mathcal{D}(\mathcal{T}_t)_\lambda'$ the linear spaces of
eigenfunctions corresponding to $\rho$ and $\lambda$,
respectively. Furthermore, there exists an isomorphism
\begin{eqnarray}
L^2(\mathcal{T}_t)=\int_\mathbb{R}^\oplus
d\nu(\lambda)H(\lambda),\label{SpectralDecomp}
\end{eqnarray}
where
$$
D_{\Sigma_t}|_{H(\lambda)}=\lambda,
$$
and $d\nu(\lambda)$ is a positive measure. Each $H(\lambda)$ is
continuously embedded in $\mathcal{D}(\mathcal{T}_t)_\lambda'$.

The eigenfunction expansion\index{Eigenfunction expansion} of $D$
will be the map
$$
\mathcal{D}(\mathcal{T})\ni f\to\tilde
f_\rho\in(\mathcal{D}(\mathcal{T})_\rho')'^*\mbox{,
}\rho\in\mathbb{R},
$$
($X'^*$ denotes the space of continuous antilinear functionals on
the space $X$) where $\tilde f_\rho$ is defined by
$$
\tilde f_\rho(u_\rho)=\bar u_\rho(f)\mbox{, }\forall
f\in\mathcal{D}(\mathcal{T})\mbox{,
}u_\rho\in\mathcal{D}(\mathcal{T})_\rho'.
$$
(Here we defer a little from Gelfand's notations who puts $\tilde
f_\rho(u_\rho)=u_\rho(f)$.)The expansion of $D_{\Sigma_t}$ on
$\mathcal{D}(\mathcal{T}_t)$ is constructed similarly. Note that
$D_{\Sigma_t}$ is an elliptic operator, hence
$\mathcal{D}(\mathcal{T}_t)_\lambda'\subset\mathcal{E}(\mathcal{T}_t)$.

If each $\mathcal{D}(\mathcal{T}_t)_\lambda'$ is finite
dimensional (eigenvalue $\lambda$ has a finite multiplicity
$N_\lambda$), then
$$
H(\lambda)=\mathcal{D}(\mathcal{T}_t)_\lambda'\mbox{, }\dim
H(\lambda)=N_\lambda.
$$
Choose $\{\zeta_{\lambda,i}\}_{i=1}^{N_\lambda}$ be a an
orthonormal basis in $\mathcal{D}(\mathcal{T}_t)_\lambda'$
(orthonormality understood in $H(\lambda)$). Then
$(\mathcal{D}(\mathcal{T}_t)_\lambda')'^*\sim\mathbb{C}^{N_\lambda}$
by the bijective linear map
$$
\tilde f(\zeta_\lambda)=\tilde
f\left(\sum_{i=1}^{N_\lambda}c_i\zeta_{\lambda,i}\right)\to\{\tilde
f_i=\tilde f(\zeta_{\lambda,i})\}_{i=1}^{N_\lambda}\mbox{,
}\forall \tilde f\in (\mathcal{D}(\mathcal{T}_t)_\lambda')'^*,
$$
where each $\tilde f_i\in\mathbb{C}$. In particular, if $\tilde
f_\lambda$ is the mode expansion of
$f\in\mathcal{D}(\mathcal{T}_t)$, then the map
\begin{eqnarray}
\mathcal{D}(\mathcal{T}_t)\ni f\to\tilde f_\lambda\to\{\tilde
f_{\lambda,i}\}\in\int_\mathbb{R}^\oplus
d\nu(\lambda)\mathbb{C}^{N_\lambda}\label{PreFourier}
\end{eqnarray}
will serve as a Fourier transform on $\mathcal{D}(\mathcal{T}_t)$.
Define
$$
Spec\{D_{\Sigma_t}\}=\supp\{d\nu\},
$$
and
$$
\tilde\Sigma=\{(\lambda,i)\mbox{: }\lambda\in
Spec\{D_{\Sigma_t}\}\mbox{, }i=1,...,N_\lambda\}.
$$
Define the spectral measure on $\tilde\Sigma$ as
$$
d\mu(\alpha)=d\nu(\lambda)\times d\sharp(i),
$$
where $d\sharp$ is the counting measure. The map
(Eq.\ref{PreFourier}) can be reformulated as
$$
\tilde f(\alpha)=\mathcal{F}[f](\alpha)\mbox{,
}f\in\mathcal{D}(\mathcal{T}_t).
$$
Then the formula (Eq.\ref{SpectralDecomp}) arises a Plancherel
formula
$$
(f,h)_{\Sigma_t}=\int_{\tilde\Sigma}d\mu(\alpha)\bar{\tilde{f}}(\alpha)\tilde
g(\alpha),
$$
and a Peter-Weyl (or Fourier inversion) formula
\begin{eqnarray} f(x)=\int_{\tilde\Sigma}d\mu(\alpha)\tilde
f(\alpha)\zeta_\alpha(x),\label{PeterWeyl}
\end{eqnarray}
which holds in the $L^2$-sense so far. In our cases of interest
this convergence will be in the compact topology.

However, if $\mathcal{D}(\mathcal{T}_t)_\lambda'$ is infinite
dimensional, more delicate tools are needed to obtain a Fourier
transform with desired properties. Such tools naturally include an
investigation of symmetries of the underlying geometrical
structure, and this is the subject of the harmonic analysis. We
will often use the formal structure (Eq.\ref{PeterWeyl}) without
mentioning a concrete realization, assuming that this is possible.
For the cases of our interest we will indeed find a realization by
means of adapted Fourier transform.

In the theory of Fourier transform, and in particular in the
Euclidean case, the Paley-Wiener theorems describe the functional
analytical structure of the image
$\mathcal{F}[\mathcal{D}(\mathcal{T}_t)]$ of the test function
space under the action of the Fourier transform. This description
is very useful when analyzing the situation in the Fourier space.
Unfortunately there is no (at least known to us) general
Paley-Wiener argument valid for any Fourier transform arisen in
this manner, and the proofs of the existing ones are rather
structure-specific. In applications we would like, however, to
obtain results which are valid in a large variety of cases, and
therefore we will introduce a notion of 'conventional' Fourier
transform which consists of a number of assumptions pertaining to
the analytical properties of a given Fourier transform. Some of
our later results will be valid under the assumption that the
eigenvalue expansion of the operator $\Delta_t$ has at least some
of the properties of a conventional Fourier transform. One says
that a good definition is an assumption of a theorem. In this
sense the following is not a good definition as we will not manage
to use all properties in this work. However it seems feasible that
these properties will become useful for several applications in
quantum field theory.

\begin{definition}
A Fourier transform $\mathcal{F}$ given by the eigenfunction
expansion against a complete system
$\{\zeta_\alpha\}_{\alpha\in\tilde\Sigma}$ will be called
conventional if
\begin{romanlist}[iii]
\item The Fourier space (or momentum space) $\tilde\Sigma$ is a
manifold consisting of $n=\dim V$ components,
$\tilde\Sigma=\bigcup_{i=1}^n\tilde\Sigma^i$, and each component
$\tilde\Sigma^i$ is either a discreet set or an (not necessarily
connected) analytical manifold

\item The eigenvalue $\lambda(\alpha)$ is an analytic function on
$\tilde\Sigma$

\item The range $\mathcal{F}[\mathcal{D}(\mathcal{T}_t)]$ is a
subspace of the space of analytic functions $\tilde f(\alpha)$ on
$\tilde\Sigma$ with rapid decay in $\lambda$

\item There is an involution $\alpha\to-\alpha$ on $\tilde\Sigma$
such that $\zeta_{-\alpha}=\bar\zeta_\alpha$.
\end{romanlist}
\end{definition}

Note that it follows $\lambda(-\alpha)=\lambda(\alpha)$. In later
sections we will give harmonic analytical justifications for such
a 'conjecture'. This conjecture is anticipated, in particular, for
all cosmological models. Moreover, being true for FRW spaces, it
can be proven also for Bianchi I-VII spaces (to appear in a future
publication).

Further we will be mainly interested in the space of weak
solutions of the field equation, $\mathcal{D}(\mathcal{T})_0'$,
and will try to find a convenient characterization of it. In
particular we will be looking for a complete system of solutions
$\{u_\alpha\}$ spanning $\mathcal{D}(\mathcal{T})_0'$ and being in
addition well handled (i.e., smooth, explicit etc.). One means of
doing this is to look at a subspace of
$\mathcal{D}(\mathcal{T})_0'$ which consist of solutions
$f(x)=a(t)b(\vec x)$, $a\in C^\infty(\mathcal{I})$,
$b\in\mathcal{E}(\mathcal{T}_t)$. Then under fortunate
circumstances the field equation breaks apart into two lower
dimensional elliptical eigenproblems, which are much easier to
deal with. Which are those circumstances and whether such
solutions span $\mathcal{D}(\mathcal{T})_0'$, and related
questions, are the matter of the problem of variable separation.
In the next sections we will find out in which cases this is
possible and how to perform it.

\subsection{Separation of variables}

As discussed above, we would like to span the space
$\mathcal{D}(\mathcal{T})_0'$ of weak solutions of the field
equation by a family of easily computable smooth solutions
$\{u_\alpha\}$. In this section we will see when and how one can
perform this for the smooth solutions $Sol_0(\mathcal{T})$. The
necessary requisites for this will be predominantly geometric
requirements. In the next section we will show that under
additional functional analytical assumptions the procedure can be
extended to $\mathcal{D}(\mathcal{T})_0'$ in a natural way.

\begin{definition} Let $S$ be a subspace of $\mathcal{E}(\mathcal{T})$ with closure $\bar S\supseteq S$,
$\mathfrak{M}$ a measure space with measure $d\mathfrak{m}$. An
$\mathfrak{M}$-measurable family
$\{u_\alpha\}_{\alpha\in\mathfrak{M}}$ of elements
$u_\alpha\in\bar S$ will be called a {\it complete} or {\it
spanning system} for $S$ if for any $v\in S$ there exists a unique
(modulo null-supported functions) $\mathfrak{M}$-measurable
function $a^v:\mathfrak{M}\to\mathbb{R}$
($a^v:\mathfrak{M}\to\mathbb{C}$) such that
$$
v=\int_\mathfrak{M}d\mathfrak{m}(\alpha)a^v(\alpha)u_\alpha.
$$
\end{definition}
For the details on integration of nuclear Frech\'et space-valued
functions see \cite{Thomas1975} and references therein. We will
always take $\mathfrak{M}$ to be {\it minimal}, i.e., there exists
no subset $A\subset\mathfrak{M}$ with $\mathfrak{m}(A)>0$ such
that $a^v(A)=0$ for all $v\in S$. If the uniqueness requirement is
relaxed, then $\{u_\alpha\}_{\alpha\in\mathfrak{M}}$ will be
called a {\it redundant complete system} for $S$. Note that from
the uniqueness property it follows, that for
$d\mathfrak{m}$-almost all $\alpha\in\mathfrak{M}$, there exists
no $\alpha\neq\beta\in\mathfrak{M}$ with
$u_\alpha+p(\alpha)u_\beta=0$, $p(\alpha)\neq0$ a number. In other
words, almost all $u_\alpha$ are pairwise independent.

$\mathcal{E}(\mathcal{T})$ is a closed topological vector space
with the topology of compact convergence, and $Sol(\mathcal{T})$
and $Sol_0(\mathcal{T})$ are linear subspaces. A spanning system
$\{u_\alpha\}_{\alpha\in\mathfrak{M}}$ of $Sol_0(\mathcal{T})$ of
the form $u_\alpha=T_\alpha X_\alpha$, where $T_\alpha\neq\bar
T_\alpha\in C^\infty(\mathcal{I})$ ($T_\alpha$ and $\bar T_\alpha$
linearly independent) and $X_\alpha\in\mathcal{E}(\mathcal{T}_t)$,
such that $DT_\alpha X_\alpha=D\bar T_\alpha X_\alpha=0$, will be
called a {\it complete (time-)variable separated system of
solutions} (or shorter, {\it separating system}).\index{Separating
system}

We will assume that a Fourier transform $\mathcal{F}$ on
$\mathcal{D}(\mathcal{T}_t)$ is specified by means of the spectral
decomposition of $D_{\Sigma_t}$ as described in the previous
section. The system of eigenfunctions $\{\zeta^t_\alpha\}$ of
$\Delta_t$, with the Fourier space $\tilde\Sigma_t$ and the
spectral measure $d\mu(\alpha)$ on it, provide a spanning system
for $\mathcal{D}(\mathcal{T}_t)$ by means of the Fourier inversion
(or Peter-Weyl) formula. Below we will come across the question of
a spectral theory of formally non-self-adjoint, i.e., asymmetric
differential operators of type $a(x)D_{\Sigma_t}$. As a rule, the
eigenfunction problems of asymmetric (aside from unitary)
operators are ill-posed, and eigenfunctions do not comprise a
complete system, but there are rare exceptions. At this point we
have to admit the non-exhaustiveness of our treatment, as we do
not analyze this possibility. We will loosely rule out the
possibility of such operators to have a well-posed eigenfunction
problem.

A small remark will be useful later in the section.
\begin{remark}\label{SepSysSpanRemark}
If $\{T_\alpha X_\alpha\}_{\alpha\in\mathfrak{M}}$ is a separating
system for $Sol_0(\mathcal{T})$ with compact topology, then for
each $t\in\mathcal{I}$, the family $\{T_\alpha(t)
X_\alpha\}_{\alpha\in\mathfrak{M}}$ is a redundant complete system
for $\mathcal{D}(\mathcal{T}_t)$. In particular, for each $\vec
x\in\Sigma_t$, the family $\{X_\alpha(\vec
x)\}_{\alpha\in\mathfrak{M}}$ contains a (possibly redundant)
basis of $V$.
\end{remark}
The assertions are relatively obvious in the view of the fact,
that the restriction maps
$i^*_t,i^*_t\circ\nabla_t:Sol_0(\mathcal{T})\to\mathcal{D}(\mathcal{T}_t)$
are surjective, and hence a spanning system for
$Sol_0(\mathcal{T})$ must give a redundant complete system for the
Cauchy data
$\mathcal{D}(\mathcal{T}_t)\oplus\mathcal{D}(\mathcal{T}_t)$ on
$\Sigma_t$.

\begin{remark}\label{2LinIndSolUnique}
Let two equations $\ddot T(t)+F(t)\dot T(t)+G(t)T(t)=0$ and $\ddot
T(t)+H(t)\dot T(t)+J(t)T(t)=0$ have two common linearly
independent solutions $T(t)$ and $S(t)$. Then by Liouville formula
the Wronski determinant $\det W[T,S](t)$ evolves by
$$
\det W[T,S](t)=\det W[T,S](0)e^{-\int_0^td\tau F(\tau)}=\det
W[T,S](0)e^{-\int_0^td\tau H(\tau)},
$$
hence $F=H$ and thereby also $G=J$.
\end{remark}

\begin{proposition}\label{PropVarSep} The solution space
$Sol_0(\mathcal{T})$ admits a separating system if and only if
there exists a covering of $\mathcal{T}$ by local trivializations
such that the following local conditions are satisfied everywhere
(metric $g$ is time-separated):
\begin{romanlist}
\item $g_{00}=g_{00}(t)$, the metric component $g_{00}$ depends
only on time

\item the expression $\sum_{i,j=1}^3g^{ij}(x)\frac{\partial
g_{ij}}{\partial t}(x)$ is a function of time only

\item the connection 1-form $\Gamma$ and Christoffel symbols
${\bf\Gamma}^k_{ij}$ satisfy
$$
\sum_{i=1}^3g^{ij}[\Gamma_0,\Gamma_i]=0\mbox{, }\forall j>0,
$$
$$
\sum_{i,j=1}^3g^{ij}\left[\Gamma_0,\frac{\partial\Gamma_j}{\partial
x^i}+\Gamma_i\Gamma_j-\sum_{k=0}^3{\bf\Gamma}^k_{ij}\Gamma_k\right]=0,
$$
$$
\Gamma_0=\Gamma_0(t)\mbox{ is a function of time only}
$$

\item the eigenfunction problem of $D_{\Sigma_t}$ on different
$\Sigma_t$ can be adjusted, so that all $\tilde\Sigma_t$ are
isomorphic and the eigenfunctions $\zeta^t_\alpha=\zeta_\alpha$
are time-independent.
\end{romanlist}
\end{proposition}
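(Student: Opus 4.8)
The plan is to prove this by separating the field operator $D$ into a time part and a spatial part, writing $D$ explicitly in local coordinates and identifying the obstructions to the ansatz $u_\alpha = T_\alpha X_\alpha$ solving $Du_\alpha = 0$ being equivalent to a pair of decoupled ODE/eigenvalue problems. The strategy for the ``if'' direction is to assume conditions (i)--(iv) and actually construct the separating system; for the ``only if'' direction, one starts from a given separating system $\{T_\alpha X_\alpha\}$ and, using Remarks \ref{SepSysSpanRemark} and \ref{2LinIndSolUnique}, extracts the listed geometric constraints by comparing coefficients.

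First I would write $\Box^\nabla = g^{\mu\nu}(\nabla_\mu\nabla_\nu - \mathbf{\Gamma}^\lambda_{\mu\nu}\nabla_\lambda)$ acting on sections in a local trivialization, so that $\nabla_\mu = \partial_\mu + \Gamma_\mu$ with connection $1$-form $\Gamma_\mu$. Splitting the index $\mu$ into $0$ and spatial $i$, and using that $g^{0i}$ can be taken to vanish on a globally hyperbolic slicing (the lapse-shift form), one gets $D = g^{00}(\partial_0 + \Gamma_0)^2 + (\text{first-order in }\partial_0\text{ terms from }\mathbf{\Gamma}^0_{\mu\nu}) + D_{\Sigma_t}$-type spatial operator, plus cross terms mixing $\partial_0$ with spatial derivatives and with $[\Gamma_0,\Gamma_i]$. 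Now plug in $u = T(t)X(\vec x)$. For the equation to separate into $\ddot T + F(t)\dot T + G(t)T = 0$ paired with $D_{\Sigma_t}X = \lambda X$ (time-independent eigenvalue problem), one needs: the coefficient of $\partial_0^2$ to be a function of $t$ only — this forces $g_{00} = g_{00}(t)$, giving (i); the first-order-in-time coefficient $F(t)$ to be $t$-dependent only — tracking the $\mathbf{\Gamma}^0_{ij}$ contribution this is exactly $\sum g^{ij}\partial_t g_{ij}$ up to $g_{00}$-factors, giving (ii); the cross terms $g^{ij}[\Gamma_0,\Gamma_i]\partial_j X \cdot T$ and $g^{ij}[\Gamma_0, \partial_i\Gamma_j + \Gamma_i\Gamma_j - \mathbf{\Gamma}^k_{ij}\Gamma_k] X \cdot T$ must vanish identically (they cannot be absorbed into either a pure-$t$ ODE coefficient or the spatial eigenproblem because they genuinely mix), giving the two bracket conditions in (iii), while $\Gamma_0 = \Gamma_0(t)$ is needed for the $(\partial_0 + \Gamma_0)^2$ piece to act purely in $t$ on $T\cdot X$; and finally the spatial operators $D_{\Sigma_t}$ on different slices must share a common complete eigensystem with $t$-independent eigenfunctions, which is (iv). For the converse, assuming (i)--(iv) I would take $X_\alpha = \zeta_\alpha$ the common eigenfunctions with eigenvalue $\lambda(\alpha)$, observe that the separated equation reduces to a second-order linear ODE in $t$ with coefficients depending on $t$ and $\lambda(\alpha)$ only, pick two linearly independent solutions $T_\alpha, \bar T_\alpha$, and verify via the Cauchy problem and Remark \ref{SepSysSpanRemark} that $\{T_\alpha\zeta_\alpha, \bar T_\alpha\zeta_\alpha\}$ spans $Sol_0(\mathcal{T})$ by matching arbitrary Cauchy data $(f_0,f_1)$ through the Fourier transform $\mathcal{F}$ on $\mathcal{D}(\mathcal{T}_t)$.

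The main obstacle, I expect, is the ``only if'' direction, specifically showing that the cross terms in (iii) are \emph{forced} to vanish rather than merely being one way to achieve separation. The subtlety is that a priori a separating system need not have its spatial factors $X_\alpha$ be eigenfunctions of $D_{\Sigma_t}$ at all, nor need the ODE satisfied by $T_\alpha$ have $\lambda$-dependent-only coefficients; one must first argue — using that $\{X_\alpha(\vec x)\}$ contains a basis of $V$ at each point (Remark \ref{SepSysSpanRemark}) and that for a.e. $\alpha$ the two functions $T_\alpha, \bar T_\alpha$ satisfy a \emph{common} second-order ODE whose coefficients are then pinned down by Remark \ref{2LinIndSolUnique} — that the separation must in fact proceed through the eigenproblem of $D_{\Sigma_t}$ with time-independent eigenfunctions, and only then read off that the non-separating operator-valued coefficients must annihilate enough vectors to vanish. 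Handling the possible $t$-dependence of the trivialization and ensuring the local conditions glue to a consistent covering is a further bookkeeping point, but the conceptual crux is ruling out ``exotic'' separations not of eigenfunction type, which is where the hypothesis (implicitly) excluding well-posed asymmetric eigenproblems for operators like $a(x)D_{\Sigma_t}$ gets used.
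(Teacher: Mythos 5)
Your high-level strategy — expand $D$ locally, plug in $u = T(t)X(\vec x)$, and extract the geometric constraints by comparing coefficients, using Remarks \ref{SepSysSpanRemark} and \ref{2LinIndSolUnique} to pin down the ODE and then rule out ``exotic'' separations — is the same as the paper's, and you correctly identify the crux of the ``only if'' direction. However, there is a genuine gap in the mechanism by which you propose to obtain conditions (i) and (iii). You claim that the commutator expressions $g^{ij}[\Gamma_0,\Gamma_i]\partial_j X\cdot T$ and $g^{ij}[\Gamma_0,\partial_i\Gamma_j+\Gamma_i\Gamma_j-{\bf\Gamma}^k_{ij}\Gamma_k]X\cdot T$ appear as literal cross terms in the expansion of $D(TX)$ and ``must vanish identically.'' This cannot be right: $T$ is a scalar, so in $D(TX)$ the matrices $\Gamma_\mu$ and the Christoffel symbols act only on $X$ and appear linearly (cf.\ Eq.~\ref{dAlambLocal}); no commutators occur in the expansion. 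If you tried to carry out the argument as written, there would be no commutator terms to match with (iii), and you would be stuck.

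The paper's actual route is different at exactly this point. After invoking Remark \ref{2LinIndSolUnique}, one finds that each $X_\alpha$ must be a simultaneous generalized eigenfunction of the two operators $g_{00}A^0$ and $g_{00}[B^0+D_{\Sigma_t}]$ with eigenvalues $F_\alpha(t)$ and $G_\alpha(t)$. Condition (i) then follows not from inspecting the $\partial_0^2$ coefficient (which is automatic) but from the requirement that the eigenproblem for $g_{00}[B^0+D_{\Sigma_t}]$ be \emph{well-posed}: the operator must be at least formally self-adjoint, which forces $g_{00}=g_{00}(t)$ (this is where the tacit exclusion of asymmetric operators $a(x)D_{\Sigma_t}$ with complete eigensystems is used). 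And condition (iii) is then obtained because $A^0$ and $D_{\Sigma_t}+B^0$ share a complete eigensystem spanning $\mathcal{D}(\mathcal{T}_t)$ and therefore must \emph{commute}, $[A^0,D_{\Sigma_t}]u=0$; expanding this commutator is what produces precisely the two bracket identities in (iii), together with $A^0=A^0(t)$, and a parallel argument with $[B^0,D_{\Sigma_t}]=0$ yields $B^0(t)$, $P(t)$ and hence $\Gamma_0(t)$. Your proposal substitutes an incorrect ``cross terms in the expansion'' picture for this commutativity argument, which is the step that would need to be fixed for the proof to go through.
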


\begin{proof} Throughout the section we will work exclusively locally,
i.e., in a local trivialization
$\pi^{-1}(U)\xrightarrow[]{\Psi}U\times V$, $U\subset M$. Thus we
identify the sections in a bundle having a typical fiber
$\mathfrak{F}$ with functions in $C^\infty(U;\mathfrak{F})$. We
will not keep the flag $U$ in this section but will always
understand objects as restricted to $U$.

The d'Alambert operator $\Box^\nabla$ on
$\mathcal{E}(\mathcal{T})$ has the following local expression in
terms of the connection form coefficients $\Gamma_i$ and
Christoffel symbols ${\bf\Gamma}^k_{ij}$,
\begin{eqnarray}
\Box^\nabla=\sum_{i,j=0}^3g^{ij}\left[\frac{\partial^2}{\partial
x^i\partial x^j}+2\Gamma_i\frac{\partial}{\partial
x^j}-\sum_{k=0}^3{\bf\Gamma}^k_{ij}\frac{\partial}{\partial
x^k}+\frac{\partial\Gamma_i}{\partial
x^j}+\Gamma_i\Gamma_j-\sum_{k=0}^3{\bf\Gamma}^k_{ij}\Gamma_k\right],\label{dAlambLocal}
\end{eqnarray}
and the field operator $D$ locally looks like
$$
D=\sum_{i,j=0}^3g^{ij}\frac{\partial^2}{\partial x^i\partial
x^j}+\sum_{i=0}^3A^i\frac{\partial}{\partial x^i}+B+m^\star,
$$
where $A^i,B\in C^\infty(U,End(V))$. To achieve a time separation
we need to choose a coordinate atlas such that everywhere
$g^{0i}=0$ for $i>0$. Then the operator $D$ locally breaks apart
into two differential operators, $D=D_t+D_{\Sigma_t}$, where
$$
D_t=g^{00}\frac{\partial^2}{\partial
t^2}+A^0\frac{\partial}{\partial t}+B^0,
$$
and
$$
D_{\Sigma_t}=\sum_{i,j=1}^3g^{ij}\frac{\partial}{\partial
x^i}\frac{\partial}{\partial
x^j}+\sum_{i=1}^3A^i\frac{\partial}{\partial
x^i}+B^3+m^\star=-\Delta_t+m^\star
$$
is the restricted field operator defined earlier. $B^0,B^3\in
C^\infty(U,End(V))$ are to be seen explicitly from
(Eq.\ref{dAlambLocal}).

$\Rightarrow${\it Necessity:} Let $\{T_\alpha X_\alpha\}$ be the
separating system system. Then
\begin{eqnarray}
DT_\alpha(t)X_\alpha(\vec
x)=(D_t+D_{\Sigma_t})T_\alpha(t)X_\alpha(\vec x)=\ddot
T_\alpha(t)g^{00}(x)X_\alpha(\vec x)+\nonumber\\
+\dot T_\alpha(t)A^0(x)X_\alpha(\vec
x)+T_\alpha(t)\left[B^0(x)+D_{\Sigma_t}\right]X_\alpha(\vec
x)=0.\label{PreModeEq}
\end{eqnarray}
That the metric signature is definite it follows that $g^{00}(x)$
never vanishes. We find a family of second order linear
homogeneous differential equations
$$
\ddot T_\alpha(t)g^{00}(x)X_\alpha^i(\vec x)+\dot
T_\alpha(t)\left(A^0(x)X_\alpha(\vec
x)\right)^i+T_\alpha(t)\left(\left[B^0(x)+D_{\Sigma_t}\right]X_\alpha(\vec
x)\right)^i=0
$$
parameterized by the spatial coordinates $\vec x\in\Sigma$ and
fiber indices $i=1,...,n$. By definition we similarly have $D\bar
T_\alpha(t)X_\alpha(\vec x)=0$. This means that all these
equations share at least two linearly independent solutions
$T_\alpha$ and $\bar T_\alpha$. If for some $\vec x$ and $i$,
$X_\alpha^i(\vec x)=0$, then the existence of two linearly
independent solutions for the resulting first order equation means
that
$$
\left(A^0(x)X_\alpha(\vec
x)\right)^i=\left(\left[B^0(x)+D_{\Sigma_t}\right]X_\alpha(\vec
x)\right)^i=0.
$$
Otherwise, by Remark \ref{2LinIndSolUnique} we find that there
exist functions $F_\alpha,G_\alpha\in C^\infty(\mathcal{I})$ such
that
$$
\left(A^0(x)X_\alpha(\vec
x)\right)^i=g^{00}(x)F_\alpha(t)X_\alpha^i(\vec x)\mbox{,
}\left(\left[B^0(x)+D_{\Sigma_t}\right]X_\alpha(\vec
x)\right)^i=g^{00}(x)G_\alpha(t)X_\alpha^i(\vec x).
$$
In both cases we establish that
\begin{eqnarray}
g_{00}(x)A^0(x)X_\alpha(\vec x)=F_\alpha(t)X_\alpha(\vec
x)\label{FDef}
\end{eqnarray}
and
\begin{eqnarray}
g_{00}(x)\left[B^0(x)+D_{\Sigma_t}\right]X_\alpha(\vec
x)=G_\alpha(t)X_\alpha(\vec x).\label{GDef}
\end{eqnarray}
 Thus for each $t\in\mathcal{I}$,
$X_\alpha$-s must be nothing else but the joint eigenfunctions of
the operators $g_{00}(x)A^0(x)$ and
$g_{00}(x)\left[B^0(x)+D_{\Sigma_t}\right]$ corresponding to
eigenvalues $F_\alpha(t)$ and $G_\alpha(t)$, respectively. The
operator $g_{00}(x)A^0(x)$ is simply a matrix, and at each point
$x\in M$ has at most $n$ independent eigenvectors. By Remark
\ref{SepSysSpanRemark}, $X_\alpha(\vec x)$-s span $V$, and thereby
$\{X_\alpha\}_{\alpha\in\mathfrak{M}}$ contains bases of all
eigenspaces of $g_{00}(x)A^0(x)$. From (Eq.\ref{dAlambLocal}) we
find
\begin{eqnarray}
g_{00}A^0=2\Gamma_0-g_{00}\sum_{i,j=0}^3g^{ij}{\bf\Gamma}^0_{ij},\label{A0Simpl}
\end{eqnarray}
and
$$
g_{00}B^0=\frac{\partial}{\partial
t}\Gamma_0+\Gamma^2_0-\sum_{k=1}^3{\bf\Gamma}^k_{00}\Gamma_k-g_{00}\sum_{i,j=0}^3g^{ij}{\bf\Gamma}^0_{ij}\Gamma_0.
$$
Now turn to the eigenfunction problem (Eq.\ref{GDef}). As
discussed above, for this problem to be well-posed it is necessary
that the differential operator
$g_{00}(x)\left[B^0(x)+D_{\Sigma_t}\right]$ is at least formally
self-adjoint. But this is possible only if $g_{00}(x)=g_{00}(t)$,
thus we have obtained the condition (i). Let us switch to an
atlas, where the time function $t$ is redefined such that
$g_{00}(t)=1$ (this step is not crucial, but only for
convenience). It follows, that
$$
{\bf\Gamma}^k_{00}=0\mbox{, }\forall k>0,
$$
so we obtain
\begin{eqnarray}
A^0=2\Gamma_0-\sum_{i,j=1}^3g^{ij}{\bf\Gamma}^0_{ij},\label{A0Def}\\
B^0=\frac{\partial}{\partial
t}\Gamma_0+\Gamma^2_0-\sum_{i,j=1}^3g^{ij}{\bf\Gamma}^0_{ij}\Gamma_0.\label{B0Def}
\end{eqnarray}
Combining (Eq.\ref{FDef}) and (Eq.\ref{A0Def}) we see that
$\{X_\alpha\}$-s are the eigenvectors of $\Gamma_0$, and these
eigenvectors are independent of $t$. Hence they are also the
eigenvectors of $\frac{\partial}{\partial t}\Gamma_0$, and thus by
(Eq.\ref{B0Def}) $A^0$ and $B^0$ are simultaneously
triangularizable,
$$
B^0X_\alpha(\vec x)=H_\alpha(x)X_\alpha(\vec x),
$$
for some $H_\alpha\in C^\infty(M)$. We note that
$$
{\bf\Gamma}^0_{ij}=-\frac{1}{2}\frac{\partial g_{ij}}{\partial t},
$$
and denote
$$
P(x)=-\sum_{i,j=1}^3g^{ij}(x){\bf\Gamma}^0_{ij}(x)=\frac{1}{2}\sum_{i,j=1}^3g^{ij}(x)\frac{\partial
g_{ij}}{\partial t}(x).
$$
Now (Eq.\ref{FDef}) and (Eq.\ref{GDef}) tell us, that for each
$t\in\mathcal{I}$ the operators $A^0$ and $D_{\Sigma_t}+B^0$ have
a common system of eigenfunctions spanning
$\mathcal{D}(\mathcal{T}_t)$, and therefore must commute,
$$
\left[A^0,D_{\Sigma_t}+B^0\right]u=\left[A^0,D_{\Sigma_t}\right]u=0\mbox{,
}\forall u\in\mathcal{D}(\mathcal{T}_t).
$$
This requires
$$
A^0(x)=2\Gamma_0(x)+P(x)=A^0(t),
$$
and
$$
\sum_{i=1}^3g^{ij}[\Gamma_0,\Gamma_i]=0\mbox{, }\forall j>0,
$$
$$
\sum_{i,j=1}^3g^{ij}\left[\Gamma_0,\frac{\partial\Gamma_j}{\partial
x^i}+\Gamma_i\Gamma_j-\sum_{k=0}^3{\bf\Gamma}^k_{ij}\Gamma_k\right]=0,
$$
exactly as the statement. Similarly, that operators $B^0$ and
$B^0+D_{\Sigma_t}$ have the same eigenfunctions implies, that
$[B^0,D_{\Sigma_t}]=0$, which on its turn requires
$B^0(x)=B^0(t)$, and thereby $P(x)=P(t)$ and
$\Gamma_0(x)=\Gamma_0(t)$. Thus we have proven parts (ii) and
(iii) of the statement. It follows further, that
$H_\alpha(x)=H_\alpha(t)$, and thus the eigenfunction problem
(Eq.\ref{GDef}) becomes
$$
D_{\Sigma_t} X_\alpha(\vec
x)=(G_\alpha(t)-H_\alpha(t))X_\alpha(\vec x).
$$
This is exactly the eigenfunction problem of $D_{\Sigma_t}$,
whence we conclude, that necessarily
$$
\{X_\alpha\}_{\alpha\in\mathfrak{M}}\subset\{\zeta^t_\lambda\}_{\lambda\in\mathbb{R}}.
$$
Therefore
$$
G_\alpha(t)=H_\alpha(t)+\lambda_\alpha(t),
$$
where
$$
\lambda_\alpha(t)=\{\lambda\in\mathbb{R}\mbox{:
}X_\alpha\in\mathcal{D}(\mathcal{T}_t)_\lambda'\}.
$$
Now (Eq.\ref{PreModeEq}) becomes
\begin{eqnarray}
\ddot T_\alpha(t)+F_\alpha(t)\dot
T_\alpha(t)+G_\alpha(t)T_\alpha(t)=0,\label{ModEq}
\end{eqnarray}
which is the mode equation for the mode $T_\alpha$. We have two
spanning systems for $\mathcal{D}(\mathcal{T}_t)$:
$\{X_\alpha\}_{\alpha\in\mathfrak{M}}$ and
$\{\zeta^t_\alpha\}_{\alpha\in\tilde\Sigma_t}$, and hence in each
eigenspace $\mathcal{D}(\mathcal{T}_t)_\lambda'$ we can choose a
basis from $\{X_\alpha\}_{\alpha\in\mathfrak{M}}$. Thus a complete
eigenfunction system can be chosen among
$\{X_\alpha\}_{\alpha\in\mathfrak{M}}$, proving the (iv) statement
of the proposition. We are complete with the necessity.

$\Leftarrow${\it Sufficiency: } Suppose all the points of the
statement are satisfied. Then, as we have seen above, by (iii)
$A^0$ and $B^0$ are functions of $t$ having the same eigenvectors,
and moreover, commute with $\Delta_t$. It follows that the actions
of $A^0$ and $B^0$ preserve $\mathcal{D}(\mathcal{T}_t)_\lambda'$,
and thus by a Gramm-Schmidt operation the representatives
$\zeta_\alpha$ can be chosen such that they are eigenfunctions of
$A^0$ and $B^0$. Thus each $\tilde\Sigma_\lambda$, and thereby the
entire $\tilde\Sigma$, decomposes into $n$ components
corresponding to the eigendirections of $A^0$,
$$
\tilde\Sigma=\bigcup_{i=1}^n\tilde\Sigma^i.
$$
For spatially homogeneous spacetimes discussed in later sections
we will give a more conceptual justification of such a subdivision
in terms of the representation theory.

Let for each $\alpha\in\tilde\Sigma$ choose a mode solution
$T_\alpha$ of (Eq.\ref{ModEq}) arbitrarily (strictly speaking, not
completely arbitrarily, but such that $T_\alpha$ and $\bar
T_\alpha$ are linearly independent) and consider the union of two
systems
$$
\{u,v\}_{\alpha\in\tilde\Sigma}\doteq\{u_\alpha\}_{\alpha\in\tilde\Sigma}\cup\{v_\alpha\}_{\alpha\in\tilde\Sigma}\mbox{,
}u_\alpha=T_\alpha\zeta_\alpha\mbox{, }v_\alpha=\bar
T_\alpha\zeta_\alpha.
$$
Choose any $\phi\in Sol_0(\mathcal{T})$. Then for each
$t\in\mathcal{I}$ the restriction $i_t^*(\phi)[\vec x]=\phi(t,\vec
x)\in\mathcal{D}(\mathcal{T}_t)$ can be Fourier expanded as
\begin{eqnarray}
\phi(t,\vec
x)=\int_{\tilde\Sigma}d\mu(\alpha)\hat\phi(\alpha;t)\zeta_\alpha(\vec
x)\label{TimeDepFourier}
\end{eqnarray}
with the integral converging in $L^2(\tilde\Sigma,\mu)$. Hence we
can differentiate under the integral,
$$
D\phi(t,\vec
x)=\int_{\tilde\Sigma}d\mu(\alpha)D\left[\hat\phi(\alpha;t)\zeta_\alpha(\vec
x)\right]=\int_{\tilde\Sigma}d\mu(\alpha)\left[\ddot{\hat{\phi}}(\alpha;t)+\right.
$$
$$
\left.+F_\alpha(t)\dot{\hat{\phi}}(\alpha;t)+G_\alpha(t)\hat\phi(\alpha;t)\right]\zeta_\alpha(\vec
x)=0,
$$
where for convenience we again reparameterized $t$ to get
$g_{00}=1$. Thus $\hat\phi(\alpha;t)$ is a solution of the mode
equation. All solutions of the ordinary second order equation
(Eq.\ref{ModEq}) are smooth and comprise a two complex dimensional
space,
$$
\hat\phi(\alpha;t)=a^\phi_\alpha T_\alpha(t)+b^\phi_\alpha\bar
T_\alpha(t)\mbox{, }a^\phi_\alpha,b^\phi_\alpha\in\mathbb{C}.
$$
Inserting this into (Eq.\ref{TimeDepFourier}) we finally arrive at
$$
\phi(t,\vec x)=\int_{\tilde\Sigma}d\mu(\alpha)\left[a^\phi_\alpha
T_\alpha(t)\zeta_\alpha(\vec x)+b^\phi_\alpha\bar
T_\alpha(t)\zeta_\alpha(\vec x)\right],
$$
which exactly means, that $\{u,v\}_{\alpha\in\tilde\Sigma}$ is a
separating system for $Sol_0(\mathcal{T})$. (For compatibility
with the definition one can concatenate $u_\alpha$ and $v_\alpha$
to a single function on the disjoint union
$\tilde\Sigma\sqcup\tilde\Sigma$.)
\end{proof}

The assertion of this proposition can be interpreted as follows.
If a mode decomposition in a reasonable fashion exists for
$Sol_0(\mathcal{T})$ then it is basically the mode decomposition
given by the time dependent Fourier transform which we will define
a few paragraphs later.\index{Mode decomposition}

As a supplement to the proposition we make a few remarks. Let
$\hat g_{ij}=g(\partial_i,\partial_j)$ and $\hat
h_{ij}=h(\partial_i,\partial_j)$ be the matrices of the metrics
$g$ and $h$, correspondingly, in a local chart, and $\sigma_k(\hat
h)$ the eigenvalues of the symmetric matrix $\hat h$.
\begin{remark}\label{RemarkCondii}
The condition (ii) of Proposition \ref{PropVarSep} is equivalent
to
$$
\det\hat g(x)=-g_{00}(t)\det\hat h(x)=-g_{00}(t)\sigma_1(\hat
h)\sigma_2(\hat h)\sigma_3(\hat
h)=-g_{00}(t)e^{2\int_0^tdt'P(t')}\det\hat h_0(\vec x),
$$
where $\det\hat h_0(\vec x)\in C^\infty(\Sigma)$ is a positive
smooth function (the notation will become clear later).
\end{remark}
The assertion follows from the combination of condition (ii) with
the Laplace's formula,
$$
\frac{\partial}{\partial t}\det\hat g=\det\hat g\cdot Tr[\hat
g^{-1}\frac{\partial}{\partial t}\hat g].
$$

Because $\nabla$ is a metric connection, the restrictions of the
previous proposition imply restrictions on the fiber metric
$\langle,\rangle_\mathfrak{g}$.  In case of a tensor bundle of
rank $(m,n)$ with Levi-Civita connection, coefficients $\Gamma_i$
are expressed in Christoffel symbols and the fiber metric is
induced by the spacetime metric, thus the restrictions fall onto
the spacetime $(M,g)$.

\begin{corollary}\label{CorMetricSep} Let a local moving frame be chosen,
such that the metric $\langle,\rangle_\mathfrak{g}$ is represented
by the matrix $\hat{\mathfrak{g}}$. Conditions (iii) of
Proposition \ref{PropVarSep} imply the following restrictions on
$\hat{\mathfrak{g}}$:
$$
\hat{\mathfrak{g}}(x)=\hat{\mathfrak{B}}^T(t)\hat{\mathfrak{g}}^0(\vec
x)\hat{\mathfrak{B}}(t),
$$
where $\hat{\mathfrak{g}}^0$ and $\hat{\mathfrak{B}}$ are matrix
valued smooth functions. In particular, for the a tensor bundle of
rank $(m,n)$ to allow for seprataion it is necessary that the
spacetime metric be represented by a matrix
$$
\hat g=1\oplus\left(-\hat h_0(\vec x)\hat B(t)\right),
$$
where $\hat h_0$ and $\hat B$ are matrix valued smooth functions.
\end{corollary}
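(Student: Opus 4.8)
The plan is to read off, from the clause $\Gamma_0=\Gamma_0(t)$ of condition (iii) of Proposition~\ref{PropVarSep} together with metricity of $\nabla$, a linear ordinary differential equation in the time variable for the Gram matrix $\hat{\mathfrak{g}}$, and to solve it by a fundamental matrix; the statement about tensor bundles is then the same argument carried out for the spacetime metric. First I would write out what $\nabla$-metricity means in the chosen moving frame. With sections realized as $V$-valued functions, $\nabla_i u=\partial_i u+\Gamma_i u$, and $\langle u,v\rangle_{\mathfrak{g}}=u^{T}\hat{\mathfrak{g}}\,v$, the defining identity $\partial_i\langle u,v\rangle_{\mathfrak{g}}=\langle\nabla_i u,v\rangle_{\mathfrak{g}}+\langle u,\nabla_i v\rangle_{\mathfrak{g}}$ reduces, after cancelling the $\partial_i$-terms, to
$$
\partial_i\hat{\mathfrak{g}}=\Gamma_i^{T}\hat{\mathfrak{g}}+\hat{\mathfrak{g}}\,\Gamma_i
$$
(with the adjoint in place of the transpose in a Hermitian convention; this is routine and would be a single line). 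Specializing to $i=0$ and invoking $\Gamma_0=\Gamma_0(t)$ turns this, for each fixed $\vec x\in\Sigma$, into the linear matrix ODE of congruence type
$$
\partial_t\hat{\mathfrak{g}}(t,\vec x)=\Gamma_0^{T}(t)\,\hat{\mathfrak{g}}(t,\vec x)+\hat{\mathfrak{g}}(t,\vec x)\,\Gamma_0(t),
$$
whose coefficient is smooth and $\vec x$-independent.

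Next I would solve it. Let $\hat{\mathfrak{B}}(t)$ be the fundamental solution of $\dot{\hat{\mathfrak{B}}}=\hat{\mathfrak{B}}\,\Gamma_0(t)$ with $\hat{\mathfrak{B}}(0)=\id$ — a path-ordered exponential of $\int_0^{t}\Gamma_0$, smooth in $t$ and everywhere invertible — and set $\hat{\mathfrak{g}}^{0}(\vec x):=\hat{\mathfrak{g}}(0,\vec x)$, which is smooth in $\vec x$ since $\hat{\mathfrak{g}}$ is. A one-line differentiation shows that $\hat{\mathfrak{B}}^{T}(t)\hat{\mathfrak{g}}^{0}(\vec x)\hat{\mathfrak{B}}(t)$ solves the displayed ODE with the same datum at $t=0$; uniqueness for linear ODEs then gives $\hat{\mathfrak{g}}(t,\vec x)=\hat{\mathfrak{B}}^{T}(t)\hat{\mathfrak{g}}^{0}(\vec x)\hat{\mathfrak{B}}(t)$, the first assertion (symmetry and nondegeneracy of $\hat{\mathfrak{g}}^{0}$ being inherited from $\hat{\mathfrak{g}}$ at $t=0$).

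For the tensor-bundle refinement I would pass to the rank-$(m,n)$ tensor bundle with the Levi--Civita connection. Condition (i), the time-separation requirement $g_{0i}=0$, and the reparameterization $g_{00}\equiv1$ used in the proof of Proposition~\ref{PropVarSep} put $\hat g$ in the block form $\hat g=1\oplus(-\hat h)$ with $\hat h=-i_t^{*}(g)$; in this gauge ${\bf\Gamma}^{0}_{0j}={\bf\Gamma}^{k}_{00}=0$ and ${\bf\Gamma}^{k}_{0j}=\tfrac12(\hat h^{-1}\dot{\hat h})^{k}{}_{j}$ for spatial $k,j$. Here $\Gamma_0$ is the slot-wise combination of the base Christoffel matrix ${\bf\Gamma}_0=({\bf\Gamma}^{k}_{0j})$ (plain action on contravariant slots, transposed on covariant ones), so the clause $\Gamma_0=\Gamma_0(t)$ of (iii) forces ${\bf\Gamma}_0$, hence $\hat h^{-1}\dot{\hat h}$, to be a function of $t$ only, say $\hat h^{-1}\dot{\hat h}=M(t)$. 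Applying the step above — now the one-sided multiplicative ODE $\dot{\hat h}=\hat h\,M(t)$ — yields $\hat h(t,\vec x)=\hat h_0(\vec x)\hat B(t)$ with $\hat h_0:=\hat h(0,\cdot)$ smooth and positive and $\hat B$ the fundamental solution of $\dot{\hat B}=\hat B\,M(t)$, $\hat B(0)=\id$; symmetry of $\hat h$ for every $t$ forces $\hat h_0\hat B=\hat B^{T}\hat h_0$ automatically, and one gets $\hat g=1\oplus(-\hat h_0(\vec x)\hat B(t))$. As a consistency check, $\tfrac{d}{dt}\ln\det\hat B=\operatorname{tr}M(t)=2P(t)$ reproduces the exponent $2\int_0^{t}P$ of Remark~\ref{RemarkCondii} and shows that (ii) is, for tensor bundles, a consequence of (iii).

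The one point genuinely needing care — and where I would expect trouble — is the reconstruction ``$\Gamma_0=\Gamma_0(t)\Rightarrow{\bf\Gamma}_0$ is a function of $t$''. For the tangent or cotangent bundle it is an identity, and for $m\neq n$ the identity-component of ${\bf\Gamma}_0$ inside $\Gamma_0$ survives a suitable contraction (with coefficient proportional to $m-n$, hence nonzero), so ${\bf\Gamma}_0$ is recovered in full. When $m=n\ge1$ that component cancels — e.g.\ on $TM\otimes T^{*}M$ the connection form is the adjoint action of ${\bf\Gamma}_0$, which is insensitive to ${\bf\Gamma}_0\mapsto{\bf\Gamma}_0+c\,\id$ — so only the trace-free part of ${\bf\Gamma}_0$ is forced to be a function of $t$, leaving a residual positive scalar factor $\psi(t,\vec x)$ in $\hat h=\psi\,\hat h_0(\vec x)\hat B(t)$; there the statement should be read up to such a conformal factor (or restricted to $m\neq n$). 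Everything else is the existence, uniqueness and smooth-dependence theory for linear ODEs, applied twice.
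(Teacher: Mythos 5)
Your proof follows the paper's route exactly — read off the matrix ODE $\partial_t\hat{\mathfrak{g}}=\Gamma_0^T\hat{\mathfrak{g}}+\hat{\mathfrak{g}}\Gamma_0$ from metricity and $\Gamma_0=\Gamma_0(t)$, solve by a fundamental matrix, then specialize to the Levi--Civita case where $\Gamma_0$ is built slot-wise from ${\bf\Gamma}_0$ — and the computations of the gauge-fixed Christoffel symbols and the reduction $\dot{\hat h}=\hat h\,M(t)$ match the paper. Two small points of comparison: you write the propagator as a path-ordered/fundamental solution whereas the paper writes $e^{\int_0^t\Gamma_0}$, and yours is the version that is correct without a hidden commutativity assumption on $\{\Gamma_0(t)\}$, so that is actually a refinement; you also observe that $\tfrac{d}{dt}\ln\det\hat B=\operatorname{tr}M=2P(t)$ recovers Remark~\ref{RemarkCondii} and exhibits condition (ii) as a consequence of (iii) for tensor bundles, a point the paper leaves implicit.

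The caveat you flag for $m=n$ is the one place where you talk yourself into a gap that is not actually there. You are right that on a rank-$(m,n)$ tensor bundle the passage ${\bf\Gamma}_0\mapsto\Gamma_0$ kills $c\cdot\id$ when $m=n$, so $\Gamma_0=\Gamma_0(t)$ constrains a priori only the trace-free part $\tilde{\bf\Gamma}_0={\bf\Gamma}_0-\tfrac14\operatorname{tr}({\bf\Gamma}_0)\,\id_4$. But you already computed that in the gauge $g_{00}=1$, $g_{0j}=0$ one has ${\bf\Gamma}^0_{00}=0$; the $(0,0)$ entry of $\tilde{\bf\Gamma}_0$ is therefore $-\tfrac14\operatorname{tr}({\bf\Gamma}_0)=-\tfrac14 P$, and if $\tilde{\bf\Gamma}_0$ depends only on $t$ then so does $P$, whence ${\bf\Gamma}_0=\tilde{\bf\Gamma}_0(t)+\tfrac14 P(t)\id_4$ is a function of $t$ after all. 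So the ``residual conformal factor $\psi(t,\vec x)$'' does not arise for $m=n\ge1$; the only truly unconstrained case is $m=n=0$, which the paper dispenses with separately. If you tighten that last paragraph by invoking ${\bf\Gamma}^0_{00}=0$ to fix the trace, your proof is complete and slightly sharper than the paper's.
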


\begin{proof} Locally the conservation of the metric
$\nabla\langle,\rangle_\mathfrak{g}=0$ can be written as
$$
\frac{\partial}{\partial
x^i}\hat{\mathfrak{g}}-\Gamma^T_i\hat{\mathfrak{g}}-\hat{\mathfrak{g}}\Gamma_i=0,
$$
where $\Gamma_i$ are the matrices of the connection coefficients
in the chosen frame. In particular, for $i=0$ we have
$$
\frac{\partial}{\partial
t}\hat{\mathfrak{g}}(x)-\Gamma^T_0(t)\hat{\mathfrak{g}}(x)-\hat{\mathfrak{g}}(x)\Gamma_0(t)=0,
$$
where $\Gamma_0=\Gamma_0(t)$ was used. The solutions of this
equation are of the form
$$
\hat{\mathfrak{g}}(x)=\hat{\mathfrak{B}}^T(t)\hat{\mathfrak{g}}^0(\vec
x)\hat{\mathfrak{B}}(t),
$$
where
\begin{eqnarray}
\hat{\mathfrak{B}}(t)=e^{\int_0^tdt'\Gamma_0(t')},\label{BDef}
\end{eqnarray}
and $\hat{\mathfrak{g}}^0(\vec x)$ is a smooth symmetric matrix
field on $\Sigma$.

Now if we identify the tensor space $(T_pM)^m_n$ with an $4^{n+m}$
dimensional vector space $V$ using a suitable bases, then each
matrix $\Gamma_i$ will be a $4^{n+m-1}\times 4^{n+m-1}$ matrix of
blocks, with blocks being the Christoffel symbols ${\bf\Gamma}_i$
for contravariant indices and $-{\bf\Gamma}^T_i$ for covariant
indices. $\Gamma_0=\Gamma_0(t)$ means
${\bf\Gamma}_0={\bf\Gamma}_0(t)$. With our time-separated metric
we have
$$
\hat g=1\oplus-\hat h.
$$
One can find
$$
{\bf\Gamma}_0=0\oplus\left(\frac{1}{2}\hat
h^{-1}\frac{\partial\hat h}{\partial
t}\right)={\bf\Gamma}_0(t)=0\oplus\hat A(t)
$$
for some smooth $3\times3$ matrix $\hat A(t)$. The solution is
$$
\hat h(x)=\hat h_0(\vec x)e^{2\int_0^tdt'\hat A(t')}=\hat h_0(\vec
x)\hat B(t),
$$
for smooth symmetric commuting matrix fields $\hat h_0(\vec x)$
and $\hat B(t)$.
\end{proof}

Now the notation $\det\hat h_0$ of Remark \ref{RemarkCondii}
becomes clear, and we see that
$$
\det\hat B(t)=e^{2\int_0^tdt'P(t')}
$$
for a tensor bundle. Note that for the scalar field conditions
(iii) are trivially satisfied and do not restrict the spacetime.

\begin{remark}\label{dmu_h}
For the volume form measure $d\mu_h$ on $\Sigma_t$ we have locally
$$
d\mu_h(\vec x)=\sqrt{\det\hat{h}(t,\vec x)}dx^1dx^2dx^3.
$$
By Remark \ref{RemarkCondii} we have
$$
\det\hat h(t,\vec x)=e^{2\int_0^tdt'P(t')}\det\hat h_0(\vec x),
$$
hence
$$
d\mu_h(\vec x)=e^{\int_0^tdt'P(t')}d\mu_{h_0}(\vec x),
$$
where
$$
d\mu_{h_0}(\vec x)=\sqrt{\det\hat{h_0}(\vec x)}dx^1dx^2dx^3.
$$
\end{remark}

Henceforth by stating that a mode decomposition of
$Sol_0(\mathcal{T})$ exists we will mean that the assumptions of
the Proposition \ref{PropVarSep} are satisfied and the
corresponding covering is chosen. We are ready to formulate
precisely the time dependent Fourier transform. Note that although
$\zeta_\alpha$ are $t$-independent, the spatial metric $h$ and the
fiber metric $\langle,\rangle_\mathfrak{g}$ depend on $t$, and
$\zeta_\alpha$ are not orthonormal with respect to the measure
$d\mu_h$ for all $t$ simultaneously. At this point we appoint once
and forever to normalize $\zeta_\alpha$ such that they are
orthonormal at $t=0$. Or equivalently, they are orthonormal with
respect to the measure $d\mu_{h_0}$ of Remark \ref{dmu_h} and the
fiber metric $\mathfrak{g}^0$ of Corollary \ref{CorMetricSep}.

\begin{definition}
For $f\in\mathcal{D}(\mathcal{T})$ we define the time dependent
Fourier transform $\tilde
f(t,\alpha)=\mathcal{F}[f(t,.)](\alpha)\in
C_0^\infty\left(\mathcal{I},\tilde{\mathcal{D}}(\tilde\Sigma)\right)$
by
$$
\mathcal{F}[f(t,.)](\alpha)=\int_{\Sigma_t}d\mu_{h_0}(\vec
x)\langle\bar\zeta_\alpha(\vec x),f(t,\vec
x)\rangle_{\mathfrak{g}^0}.
$$
\end{definition}

Here we note another important corollary, which will be useful
later. It will give the time dependent Plancherel formula.

\begin{corollary}\label{TDPlancherel}
Suppose the assumptions of Proposition \ref{PropVarSep} are
satisfied, and the corresponding covering is chosen. Then for all
$f\in\mathcal{D}(\mathcal{T})$
$$
(\zeta_\alpha,f(t,.))_{\Sigma_t}=I_\alpha(t)\mathcal{F}[f(t,.)](\alpha)
$$
and the time dependent Plancherel formula for the time-dependent
Fourier transform is given by
$$
(f(t,.),h(t,.))_{\Sigma_t}=\int_{\tilde\Sigma}d\mu(\alpha)I_\alpha(t)\overline{\mathcal{F}[f(t,.)]}(\alpha)\mathcal{F}[h(t,.)](\alpha),
$$
where
$$
I_\alpha(t)=e^{\int_0^tdt'F_\alpha(t')}.
$$
\end{corollary}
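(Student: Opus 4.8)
The plan is to reduce the whole statement to the ordinary ($t$-independent) Fourier transform at $t=0$ by conjugating with one fiberwise unitary (working in the covering of Proposition~\ref{PropVarSep}, with the time coordinate normalized so that $g_{00}=1$ as in its proof). Set $W(t)=e^{\frac12\int_0^tdt'P(t')}\,\hat{\mathfrak{B}}(t)$, with $\hat{\mathfrak{B}}(t)=e^{\int_0^tdt'\Gamma_0(t')}$ as in~(\ref{BDef}); since $\Gamma_0=\Gamma_0(t)$ this is multiplication by a fiber endomorphism independent of $\vec x$, so $W(t)$ maps $\mathcal{D}(\mathcal{T}_t)$ bijectively onto $\mathcal{D}(\mathcal{T}_0)$ and extends to $L^2$. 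First I would check that $W(t)$ is unitary from $\big(L^2(\mathcal{T}_t),(\cdot,\cdot)_{\Sigma_t}\big)$ onto $\big(L^2(\mathcal{T}_0),(\cdot,\cdot)_{\Sigma_0}\big)$: writing $(W(t)u,W(t)v)_{\Sigma_0}=e^{\int_0^tdt'P(t')}\int_{\Sigma_0}d\mu_{h_0}\,\big\langle\overline{\hat{\mathfrak{B}}(t)u},\hat{\mathfrak{B}}(t)v\big\rangle_{\mathfrak{g}^0}$, the identity $\hat{\mathfrak{B}}^T(t)\hat{\mathfrak{g}}^0\hat{\mathfrak{B}}(t)=\hat{\mathfrak{g}}(t)$ of Corollary~\ref{CorMetricSep} (with $\Gamma_0$, hence $\hat{\mathfrak{B}}(t)$, real since $\nabla$ is metric) turns the fiber pairing into $\langle\bar u,v\rangle_{\mathfrak{g}(t)}$, while $e^{\int_0^tdt'P(t')}d\mu_{h_0}=d\mu_h$ by Remark~\ref{dmu_h}; hence $(W(t)u,W(t)v)_{\Sigma_0}=(u,v)_{\Sigma_t}$. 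Only the factorizations of the measure and of the fiber metric enter here.

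The decisive point is the action of $W(t)$ on the chosen eigenfunctions. In the sufficiency part of the proof of Proposition~\ref{PropVarSep} the $\zeta_\alpha$ were selected as eigenvectors of $A^0$, and $A^0=2\Gamma_0+P\,\id$ by~(\ref{A0Def}) with $P$ a scalar; hence each $\zeta_\alpha$ is an eigenvector of $\Gamma_0(t)$, say $\Gamma_0(t)\zeta_\alpha=\gamma_\alpha(t)\zeta_\alpha$, and comparison with~(\ref{FDef}) gives $F_\alpha=2\gamma_\alpha+P$, so that $\hat{\mathfrak{B}}(t)\zeta_\alpha=e^{\int_0^tdt'\gamma_\alpha(t')}\zeta_\alpha$ and $W(t)\zeta_\alpha=e^{\frac12\int_0^tdt'F_\alpha(t')}\zeta_\alpha=I_\alpha(t)^{1/2}\zeta_\alpha$. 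Next I would establish the absorption identity $(\zeta_\alpha,W(t)g)_{\Sigma_0}=I_\alpha(t)^{1/2}\,\mathcal{F}[g](\alpha)$ for every $g\in\mathcal{D}(\mathcal{T}_t)$: expanding the left side and using $\bar\zeta_\alpha=\zeta_{-\alpha}$ (item (iv) in the definition of a conventional Fourier transform), which is again a $\Gamma_0$-eigenvector, the factor $\hat{\mathfrak{B}}(t)$ sitting inside $W(t)$ is absorbed as the scalar $e^{\frac12\int_0^tdt'\gamma_\alpha(t')}$, provided $\Gamma_0$ is self-adjoint with respect to $\hat{\mathfrak{g}}^0$ (equivalently, $\hat{\mathfrak{B}}(t)$ is $\hat{\mathfrak{g}}^0$-symmetric and commutes with $\hat{\mathfrak{g}}^0$) --- exactly the structure that Corollary~\ref{CorMetricSep} supplies for tensor bundles, and which also forces $F_\alpha$, hence $I_\alpha(t)$, to be real and positive, as the very consistency of the claimed Plancherel formula demands anyway. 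Granting this, unitarity of $W(t)$ with $u=\zeta_\alpha$, $v=f(t,\cdot)$ gives $(\zeta_\alpha,f(t,\cdot))_{\Sigma_t}=(W(t)\zeta_\alpha,W(t)f(t,\cdot))_{\Sigma_0}=\overline{I_\alpha(t)^{1/2}}\,(\zeta_\alpha,W(t)f(t,\cdot))_{\Sigma_0}=I_\alpha(t)\,\mathcal{F}[f(t,\cdot)](\alpha)$, which is the first assertion.

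For the Plancherel formula I would apply unitarity of $W(t)$ with $u=f(t,\cdot)$, $v=h(t,\cdot)$ and then the ordinary ($t$-independent) Plancherel identity for $\mathcal{F}$ at $t=0$ --- legitimate because the $\zeta_\alpha$ have been normalized to be orthonormal for $(\cdot,\cdot)_{\Sigma_0}$, i.e. for $d\mu_{h_0}$ and $\mathfrak{g}^0$ --- obtaining $(f(t,\cdot),h(t,\cdot))_{\Sigma_t}=\int_{\tilde\Sigma}d\mu(\alpha)\,\overline{(\zeta_\alpha,W(t)f(t,\cdot))_{\Sigma_0}}\,(\zeta_\alpha,W(t)h(t,\cdot))_{\Sigma_0}$; inserting the absorption identity and using $I_\alpha(t)>0$ gives exactly $\int_{\tilde\Sigma}d\mu(\alpha)\,I_\alpha(t)\,\overline{\mathcal{F}[f(t,\cdot)](\alpha)}\,\mathcal{F}[h(t,\cdot)](\alpha)$. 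The only genuinely non-formal step is the absorption identity of the second paragraph: unlike the unitarity of $W(t)$, which needs only the factorization of $\hat{\mathfrak{g}}(t)$, it uses self-adjointness of $\Gamma_0$ relative to $\hat{\mathfrak{g}}^0$ and the reality of $F_\alpha$; the remainder is routine bookkeeping of conjugations and transposes, together with a density argument to descend from $\mathcal{D}(\mathcal{T}_t)$ to $L^2(\mathcal{T}_t)$.
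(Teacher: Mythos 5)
Your proof uses exactly the same ingredients as the paper's — Remark~\ref{dmu_h} for the volume form, Corollary~\ref{CorMetricSep} for the fiber metric, the $\Gamma_0$-eigenvector property of $\zeta_\alpha$ giving $\hat{\mathfrak{B}}(t)\zeta_\alpha=e^{\frac12\int_0^t(F_\alpha-P)}\zeta_\alpha$, and the conventional Plancherel formula at $t=0$ — so it is a repackaging rather than a different route. The packaging as a single unitary $W(t):\big(L^2(\mathcal{T}_t),(\cdot,\cdot)_{\Sigma_t}\big)\to\big(L^2(\mathcal{T}_0),(\cdot,\cdot)_{\Sigma_0}\big)$ is a genuinely cleaner way to state what the paper does in two separate combined formulas, and the Plancherel statement then literally \emph{is} unitarity of $W(t)$ composed with the ordinary Plancherel identity, which is worth spelling out.

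The one place where you take a detour that I would flag: for the absorption identity you propose to move $\hat{\mathfrak{B}}(t)$ from the right to the left slot of $(\cdot,\cdot)_{\Sigma_0}$ by invoking that $\Gamma_0$ (hence $\hat{\mathfrak{B}}(t)$) is self-adjoint with respect to $\hat{\mathfrak{g}}^0$, and you claim Corollary~\ref{CorMetricSep} supplies this. It does not: that corollary only gives the factorization $\hat{\mathfrak{g}}(t)=\hat{\mathfrak{B}}^T(t)\,\hat{\mathfrak{g}}^0\,\hat{\mathfrak{B}}(t)$; metricity of $\nabla$ reads $\partial_t\hat{\mathfrak{g}}=\Gamma_0^T\hat{\mathfrak{g}}+\hat{\mathfrak{g}}\Gamma_0$, which is not $\hat{\mathfrak{g}}^0$-self-adjointness of $\Gamma_0$. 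Fortunately the extra hypothesis is not needed. To compute $\mathcal{F}[W(t)g](\alpha)$ you should, as the paper does, expand $g=\int_{\tilde\Sigma}d\mu(\beta)\,\mathcal{F}[g](\beta)\,\zeta_\beta$, apply the scalar $W(t)\zeta_\beta=I_\beta(t)^{1/2}\zeta_\beta$ termwise, and read off the coefficients against the orthonormalized $\{\zeta_\alpha\}$; this gives $\mathcal{F}[W(t)g](\alpha)=I_\alpha(t)^{1/2}\mathcal{F}[g](\alpha)$ without any adjoint move. With that replacement the rest of your argument goes through verbatim.
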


\begin{proof}
By Remark \ref{dmu_h}
$$
(f(t,.),h(t,.))_{\Sigma_t}=\int_{\Sigma_t}d\mu_h(\vec
x)(f(t,.),h(t,.))_\mathfrak{g}=
$$
$$
=e^{\int_0^tdt'P(t')}\int_{\Sigma_t}d\mu_{h_0}(\vec
x)(f(t,.),h(t,.))_\mathfrak{g}.
$$
At the same time by Corollary \ref{CorMetricSep} we have
\begin{eqnarray}
(f(t,.),h(t,.))_\mathfrak{g}=(\hat{\mathfrak{B}}(t)f(t,.),\hat{\mathfrak{B}}(t)h(t,.))_{\mathfrak{g}^0}.\label{FiberProdt}
\end{eqnarray}
Because we have normalized $\zeta_\alpha$ with respect to
$d\mu_{h_0}$ and $\mathfrak{g}^0$, the conventional Plancherel
formula holds for them,
$$
\int_{\Sigma_t}d\mu_{h_0}(\vec
x)(f(t,.),h(t,.))_{\mathfrak{g}^0}=\int_{\tilde\Sigma}d\mu(\alpha)\overline{\mathcal{F}[f(t,.)](\alpha)}\mathcal{F}[h(t,.)](\alpha).
$$
Combining these three formulas we find
$$
(f(t,.),h(t,.))_{\Sigma_t}=e^{\int_0^tdt'P(t')}\int_{\tilde\Sigma}d\mu(\alpha)\overline{\mathcal{F}[\hat{\mathfrak{B}}(t)f(t,.)](\alpha)}\mathcal{F}[\hat{\mathfrak{B}}(t)f(t,.)](\alpha).
$$
Meanwhile
$$
(\zeta_\alpha,f(t,.))_{\Sigma_t}=e^{\int_0^tdt'P(t')}\int_{\Sigma_t}d\mu_{h_0}(\vec
x)(\hat{\mathfrak{B}}(t)\zeta_\alpha,\hat{\mathfrak{B}}(t)h(t,.))_{\mathfrak{g}^0}.
$$
By definition
$$
\hat{\mathfrak{B}}(t)\zeta_\alpha=e^{\int_0^tdt'\Gamma_0(t')}\zeta_\alpha=e^{\frac{1}{2}\int_0^tdt'[A^0(t')-P(t')]}\zeta_\alpha=e^{\frac{1}{2}\int_0^tdt'[F_\alpha(t')-P(t')]}\zeta_\alpha,
$$
whence
$$
(\zeta_\alpha,f(t,.))_{\Sigma_t}=e^{\frac{1}{2}\int_0^tdt'[F_\alpha(t')+P(t')]}\mathcal{F}[\hat{\mathfrak{B}}(t)f(t,.)](\alpha).
$$
Finally
$$
\hat{\mathfrak{B}}(t)f(t,\vec
x)=\int_{\tilde\Sigma}d\mu(\alpha)\mathcal{F}[f(t,.)](\alpha)\hat{\mathfrak{B}}(t)\zeta_\alpha(\vec
x)=\int_{\tilde\Sigma}d\mu(\alpha)\mathcal{F}[f(t,.)](\alpha)e^{\frac{1}{2}\int_0^tdt'[F_\alpha(t')-P(t')]}\zeta_\alpha,
$$
thus
$$
\mathcal{F}[\hat{\mathfrak{B}}(t)f(t,.)](\alpha)=e^{\frac{1}{2}\int_0^tdt'[F_\alpha(t')-P(t')]}\mathcal{F}[f(t,.)](\alpha).
$$
The assertions now easily follow.
\end{proof}

At last we compute the spectra of operators $A^0$ and $B^0$ for
the tensor bundle to find the functions $F_\alpha$ and $H_\alpha$.
In view of Corollary \ref{CorMetricSep} the function $P(t)$
becomes
$$
P(t)=\frac{1}{2}Tr\left[\hat B^{-1}(t)\frac{\partial\hat
B}{\partial t}(t)\right].
$$
Then
$$
Spec\{A^0\}=2Spec\{\Gamma_0\}+P(t),
$$
$$
Spec\{B^0\}=\left\{\dot \sigma(t)+\sigma^2(t)+\sigma(t)P(t)\mbox{:
}\sigma\in Spec\{\Gamma_0\}\right\}.
$$
As a useful example we calculate these spectra for the scalar and
1-form fields on uniformly expanding (e.g., FRW) manifolds,
$$
ds^2=dt^2-a^2(t)d\sigma^2(\vec x).
$$
Here the matrix $\hat B(t)=a^2(t)1$, and hence
$$
\hat A(t)=\frac{\partial}{\partial t}\ln a(t)1=H(t)1,
$$
and
$$
P(t)=3H(t)\mbox{, }H(t)=\frac{\dot a(t)}{a(t)}.
$$
For scalar case $n=m=0$ and we have
$$
Spec\{\Gamma_0\}=\{0\},
$$
thus
$$
Spec\{A^0\}=\{3H(t)\}\mbox{, }Spec\{B^0\}=\{0\},
$$
as well known. For the 1-form case, $m=0$, $n=1$, we have
$$
Spec\{\Gamma_0\}=\{0, -H(t)\},
$$
and thereby
$$
Spec\{A^0\}=\{3H(t),H(t)\}\mbox{, }Spec\{B^0\}=\{0,-\dot
H(t)-2H^2(t)\},
$$
where the first members are similar to the scalar case and
represent the scalar modes, but second ones represent the
transversal and longitudinal modes.

As we have seen, for the separation it is necessary that the
evolution of the metric be represented by linear transformations.
If the connection also satisfies such a condition in a suitable
sense, than the operator $D_{\Sigma_t}$ is essentially the same at
every $t$ up to some scale factors. (Maybe the condition (iii) of
the main proposition already implies such a restriction on the
connection, but we are not sure yet.) This will be the case for
all our bundles of interest, and it will provide analytical
advantages. To summarize what we expect precisely we give the
following definitions.

\begin{definition}
We will say that the operator $D_{\Sigma_t}$ has a {\bf strictly
uniform} spectrum over time if there exists a lower semi-bounded
function $\omega(\alpha)$ on $\tilde\Sigma$, a positive smooth
function $C(t)>0$ and a smooth function $\tilde m^\star(t)$ such
that $\lambda_\alpha(t)=\omega(\alpha)C(t)+\tilde m^\star(t)$, or
equivalently, the expression
$$
\frac{d}{dt}\ln|\lambda_\alpha(t)-\tilde m^\star(t)|
$$
does not depend on $\alpha$.
\end{definition}\index{Strictly uniform spectrum}

This is a rather strong condition. It basically requires that the
eigenspaces of $D_{\Sigma_t}$ coincide for different $t$ up to an
overall shift, and that eigenvalues be linearly proportional. Such
a property would be very comfortable, but it does not hold for
some models of our interest. In particular, it does not hold for
the Bianchi I model with distortions. Hence we will derive some of
our results under a milder restriction which holds at least in all
cosmological situations where the spectral theory is explicit
enough so that it can be checked. (The explicit spectral theory of
the scalar field on Bianchi I-VII spacetimes will appear in a
subsequent publication.)

\begin{definition}
We will say that the operator $D_{\Sigma_t}$ has a {\bf loosely
uniform} spectrum over time if
$$
\left|\frac{d}{dt}\ln|\lambda_\alpha(t)-\tilde
m^\star(t)|\right|\le C_\mathcal{R}\mbox{, }\forall
t\in\mathcal{R}, \alpha\in\tilde\Sigma,
$$
for any compact interval $\mathcal{R}\subset\mathcal{I}$, and for
some $0<C_\mathcal{R}\in\mathbb{R}$ and a smooth function $\tilde
m^\star(t)$.
\end{definition}\index{Loosely uniform spectrum}

If the Fourier transform is conventional, then it will be natural
to require that $\omega$ be an analytic function on
$\tilde\Sigma$.

\subsection{Some properties of the mode solutions}

In this section we investigate the equation (Eq.\ref{ModEq}) and
obtain some useful properties of the mode solutions $T_\alpha$.
The mode equation is
$$
\ddot T_\alpha(t)+F_\alpha(t)\dot
T_\alpha(t)+G_\alpha(t)T_\alpha(t)=0,
$$
where
$$
G_\alpha(t)=H_\alpha(t)+\lambda_\alpha(t),
$$
and $\lambda_\alpha(t)$ are the eigenvalues of the operator
$D_{\Sigma_t}=-\Delta_t+m^\star(x)$. Note that $G_\alpha$ may
become null or negative for some rates of expansion. This
corresponds to the so-called positive back-reaction in a linear
system and results in exponential solutions. This is an
interesting phenomenon appearing in non scalar fields (for scalar
fields $H_\alpha=0$), and its significance is not yet completely
clear to us. To understand it one could, for instance, track its
influence on the energy-momentum tensor etc. It is not obvious
that this is really a physical infrared instability, because it
may occur for the co-vector field but not for the vector
counterpart, for instance. It is also worth mentioning, that for
the co-vector (1-form) field the introduction of a conformal
coupling precisely cancels this instability. It seems plausible
that for each field there is a choice of the coupling constant
which compensates this bad infrared behavior. We say infrared,
because $\lambda_\alpha(t)$ attains arbitrarily large positive
values at any $t$, thus on an unbounded subbundle of
$R\times\tilde\Sigma$, $G_\alpha$ is positive.

We will make this more explicit under the assumption, that
$D_{\Sigma_t}$ has a strictly uniform spectrum. Then the function
$C(t)$ is uniformly bounded from below, and the functions
$H_\alpha$ and $\tilde m^\star$ are uniformly bounded from above
on any compact interval $\mathcal{R}$. On the other hand
$\omega\to+\infty$, hence is suffices to choose $\omega$ large
enough to make $G_\alpha=H_\alpha+C\omega+\tilde m^\star>0$.

Fix a component $\tilde\Sigma^i$ and write $H=H_\alpha$,
$F=F_\alpha$ and $I=I_\alpha$ for all $\alpha\in\tilde\Sigma^i$.
Define a new variable
$$
s(t)=\int_0^td\tau e^{-\int_0^\tau d\tau'F(\tau')}=\int_0^td\tau
I^{-1}(\tau),
$$
which is in a smooth monotone bijective correspondence with $t$.
The inverse function will be denoted by $t(s)$. Regarding all the
acting functions of $t$ as functions of $s$ we obtain
\begin{eqnarray}
\ddot T_\alpha(s)+\Lambda_\alpha(s)T_\alpha(s)=0,\label{ModEqs}
\end{eqnarray}
where
$$
\Lambda_\alpha(s)=\left[G_\alpha(t)e^{2\int_0^td\tau
F(\tau)}\right]_{t=t(s)}=G_\alpha(s)I^2(s).
$$
This is a time dependent harmonic oscillator equation, to which
the results in the appendix apply.

\begin{remark}
Note that the Wronski determinant of two solutions $Q,R$
$$
\det W[Q,R](s)=Q(s)\dot R(s)-\dot Q(s)R(s)=\mbox{const}
$$
in variable $t$ becomes
$$
\det W[Q,R](t)=\frac{dt}{ds}\left(Q(t)\dot R(t)-\dot
Q(t)R(t)\right)=I(t)\left(Q(t)\dot R(t)-\dot
Q(t)R(t)\right)=\mbox{const}.
$$
\end{remark}

Applying Corollary \ref{TEstAbstract} to (Eq.\ref{ModEqs}) for
different $\alpha$ we find estimates which in principle depend on
$\alpha$ in a complicated way. But under the assumption of loose
uniformity on $D_{\Sigma_t}$ we will be able to invoke more
comfortable expressions.

\begin{proposition}\label{TEstLooseUni}
Suppose $D_{\Sigma_t}$ has a loosely uniform spectrum over time.
Then for a family of arbitrary solutions $T_\alpha$ of
(Eq.\ref{ModEqs}) the following estimate holds
$$
|T_\alpha(s)|\le
R_\mathcal{R}|T_\alpha(0)|+\frac{S_\mathcal{R}}{\max\{1,\sqrt{U_\mathcal{R}+T_\mathcal{R}\lambda_\alpha(0)}\}}|\dot
T_\alpha(0)|\mbox{, }\forall s\in\mathcal{R},
$$
with $0<R_\mathcal{R},S_\mathcal{R},T_\mathcal{R}\in\mathbb{R}$
and $U_\mathcal{R}\in\mathbb{R}$, for any compact interval
$\mathcal{R}$.
\end{proposition}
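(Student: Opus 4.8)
The plan is to apply the estimate of Corollary \ref{TEstAbstract} to the mode equation (Eq.\ref{ModEqs}), $\ddot T_\alpha(s)+\Lambda_\alpha(s)T_\alpha(s)=0$, and then to use the loose uniformity hypothesis to make the constants appearing there independent of $\alpha$. Corollary \ref{TEstAbstract} bounds $|T_\alpha(s)|$ on a compact $s$-interval by a multiple of $|T_\alpha(0)|$ plus a multiple of $|\dot T_\alpha(0)|$, the second multiple carrying the expected $1/\sqrt{\cdot}$ gain governed by a positive lower bound for $\Lambda_\alpha$ along the interval, with all constants depending on $\alpha$ only through upper and lower bounds (value and variation) of $\Lambda_\alpha$ there. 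Thus the task reduces to estimating $\Lambda_\alpha$ and its logarithmic derivative on $\mathcal{R}$ uniformly in $\alpha$, and to exhibiting a lower bound for $\Lambda_\alpha$ that grows linearly in $\lambda_\alpha(0)$: the denominator $\max\{1,\sqrt{U_\mathcal{R}+T_\mathcal{R}\lambda_\alpha(0)}\}$ is precisely such a linear lower bound, truncated at $1$ so that the estimate stays meaningful on the modes for which $\Lambda_\alpha$ is not positive.

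First I would fix one of the $n$ components $\tilde\Sigma^i$ of $\tilde\Sigma$ and argue inside it, taking the final constants to be the maxima over $i$. On a fixed component $F$, $I$ and $H$ are $\alpha$-independent, so the only $\alpha$-dependence in $\Lambda_\alpha(s)=G_\alpha(s)I^2(s)=(H(s)+\lambda_\alpha(s))I^2(s)$ sits in the eigenvalue $\lambda_\alpha$. Let $\mathcal{R}_t$ be the $t$-image of $\mathcal{R}$ under $t(s)$. Integrating the loose uniformity bound, which asserts that $\frac{d}{dt}\ln|\lambda_\alpha(t)-\tilde m^\star(t)|$ has absolute value $\le C_\mathcal{R}$ on $\mathcal{R}_t$, shows that for $t\in\mathcal{R}_t$ the quantity $\lambda_\alpha(t)-\tilde m^\star(t)$ equals $\lambda_\alpha(0)-\tilde m^\star(0)$ times a factor lying between $e^{-C_\mathcal{R}|\mathcal{R}_t|}$ and $e^{C_\mathcal{R}|\mathcal{R}_t|}$. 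Since $D_{\Sigma_t}$ is lower semi-bounded, $\lambda_\alpha(0)-\tilde m^\star(0)$ is bounded below uniformly in $\alpha$; together with the boundedness of $H$, $\tilde m^\star$ and the two-sided positive bounds for $I^2$ on $\mathcal{R}$, this yields affine two-sided bounds $a_\mathcal{R}\lambda_\alpha(0)-b_\mathcal{R}\le\Lambda_\alpha(s)\le a'_\mathcal{R}\lambda_\alpha(0)+b'_\mathcal{R}$ on $\mathcal{R}$ with $a_\mathcal{R},a'_\mathcal{R}>0$, and a uniform bound for $|\dot\Lambda_\alpha/\Lambda_\alpha|$ wherever $\Lambda_\alpha\neq0$. (Note that $s=0$ corresponds to $t=0$ with $I(0)=1$, so $\Lambda_\alpha(0)=H(0)+\lambda_\alpha(0)$ is itself affine in $\lambda_\alpha(0)$; the additive constant $\tilde m^\star(0)$ implicit here is harmlessly absorbed into $U_\mathcal{R}$.)

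Then I would split over a threshold $\lambda_\star$, chosen independently of $\alpha$ so that $a_\mathcal{R}\lambda_\star-b_\mathcal{R}>0$. For $\lambda_\alpha(0)\ge\lambda_\star$ one has $\Lambda_\alpha(s)\ge a_\mathcal{R}\lambda_\alpha(0)-b_\mathcal{R}>0$ throughout $\mathcal{R}$, along with the affine upper bound, so Corollary \ref{TEstAbstract} gives the asserted inequality with the $|\dot T_\alpha(0)|$-coefficient of order $(a_\mathcal{R}\lambda_\alpha(0)-b_\mathcal{R})^{-1/2}$, i.e. the claimed form with $T_\mathcal{R}=a_\mathcal{R}$ and $U_\mathcal{R}=-b_\mathcal{R}\in\mathbb{R}$. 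For $\lambda_\alpha(0)<\lambda_\star$ the family $\{\Lambda_\alpha\}$ runs over a set uniformly bounded in both value and variation on $\mathcal{R}$ — this is the regime in which, for the smallest $\alpha$, $G_\alpha$ may be nonpositive and the solutions grow exponentially, but exponential growth over a compact interval is still bounded — so Corollary \ref{TEstAbstract} produces a bound with $O(1)$ constants, and since here $\max\{1,\sqrt{U_\mathcal{R}+T_\mathcal{R}\lambda_\alpha(0)}\}=1$ it suffices to enlarge $S_\mathcal{R}$. Letting $R_\mathcal{R}$ and $S_\mathcal{R}$ be the larger of the constants from the two regimes and then maximizing over the components completes the argument.

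The step I expect to be the real work is the second paragraph: checking that the lower-order contributions — $H$, $\tilde m^\star$, the $\alpha$-dependent exponential factor coming from integrating the loose uniformity bound, and the sign ambiguity of $\Lambda_\alpha$ near the bottom of the spectrum — can all be absorbed into the slope $T_\mathcal{R}$ and the additive constant $U_\mathcal{R}$ without destroying the $1/\sqrt{\cdot}$ gain, and in particular that the threshold $\lambda_\star$ can be chosen uniformly in $\alpha$ and in the component, so that on the large regime $\Lambda_\alpha$ is genuinely bounded below by a fixed positive multiple of $\lambda_\alpha(0)$ uniformly along $\mathcal{R}$.
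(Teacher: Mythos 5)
Your proposal is correct and follows essentially the same route as the paper's proof: apply Corollary \ref{TEstAbstract} to (Eq.\ref{ModEqs}), integrate the loose-uniformity bound to get two-sided multiplicative control of $\lambda_\alpha(s)-\tilde m^\star(s)$ in terms of $\lambda_\alpha(0)-\tilde m^\star(0)$, exploit lower semi-boundedness to fix a threshold (the paper's $\lambda_{min}$, your $\lambda_\star$) below which all the quantities in Corollary \ref{TEstAbstract} are uniformly bounded, and above which $\Lambda_\alpha$ is bounded below by an affine-in-$\lambda_\alpha(0)$ positive quantity so that $e_\mathcal{R}^{-1/2}$ delivers the $1/\sqrt{\cdot}$ gain. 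The paper carries out exactly the bookkeeping you flag as ``the real work'' — the sign-preservation continuity argument, the $\chi$-truncation at $\lambda_{min}$, the explicit bound on $\partial_s\ln(\kappa^2+\Lambda_\alpha)$ split into the cases $\lambda_\alpha-\tilde m^\star\le 1$ and $>1$ — but the strategy is the one you describe.
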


\begin{proof}
Fix a compact interval $\mathcal{R}$ and for each
$\alpha\in\tilde\Sigma^i$ apply Corollary \ref{TEstAbstract} with
$\Lambda_\alpha(s)=I^2(s)H(s)+I^2(s)\lambda_\alpha(s)$. Because
$\Lambda_\alpha$ is real, we get $A_\mathcal{R}(\alpha)=0$. As
$\lambda_\alpha(s)$ is lower semi-bounded we have
$$
p_\mathcal{R}\doteq\inf_{\tilde\Sigma}\inf_\mathcal{R}\lambda_\alpha>-\infty.
$$
Denote $m_\mathcal{R}=\inf_\mathcal{R}\{I^2H\}$ and
$n_\mathcal{R}=\inf_\mathcal{R}\{I^2\}>0$. Then
$$
c_\mathcal{R}(\alpha)\ge
m_\mathcal{R}+n_\mathcal{R}\inf_\mathcal{R}\lambda_\alpha\ge
m_\mathcal{R}+n_\mathcal{R}p_\mathcal{R}.
$$
It follows that
$\kappa(\alpha)\le\sqrt{1+|m_\mathcal{R}+n_\mathcal{R}p_\mathcal{R}|}$
and
$e_\mathcal{R}(\alpha)\ge1+\max\{0,m_\mathcal{R}+n_\mathcal{R}\inf_\mathcal{R}\lambda_\alpha\}$.
Denote $M_\mathcal{R}=\sup_\mathcal{R}\{|I^2H|\}\ge0$ and
$N_\mathcal{R}=\sup_\mathcal{R}\{I^2\}>0$. We find next
$$
D_\mathcal{R}(\alpha)\le1+|m_\mathcal{R}+n_\mathcal{R}p_\mathcal{R}|+M_\mathcal{R}+N_\mathcal{R}\left|\sup_\mathcal{R}\lambda_\alpha\right|.
$$
Now we observe that by loose uniformity
$$
\left|\ln\frac{|\lambda_\alpha(s)-\tilde
m^\star(s)|}{|\lambda_\alpha(s')-\tilde
m^\star(s')|}\right|=\left|\int_s^{s'}d\sigma\partial_s\ln|\lambda_\alpha(\sigma)-\tilde
m^\star(\sigma)|\right|\le|\mathcal{R}|\sqrt{N_\mathcal{R}}C_\mathcal{R},
$$
hence
\begin{eqnarray}
\sup_\mathcal{R}|\lambda_\alpha-\tilde
m^\star|\le|\lambda_\alpha(0)-\tilde
m^\star(0)|e^{|\mathcal{R}|\sqrt{N_\mathcal{R}}C_\mathcal{R}}\nonumber\\
\inf_\mathcal{R}|\lambda_\alpha-\tilde
m^\star|\ge|\lambda_\alpha(0)-\tilde
m^\star(0)|e^{-|\mathcal{R}|\sqrt{N_\mathcal{R}}C_\mathcal{R}}.\label{lambda_msupinf}
\end{eqnarray}
Note that whenever $\lambda_\alpha(0)-\tilde m^\star(0)>0$ then it
follows by continuity that $\lambda_\alpha(s)-\tilde m^\star(s)>0$
for all $s\in\mathcal{R}$. Denote
$$
\lambda_{min}=\tilde m^\star(0)-\min\{0,\inf_\mathcal{R}\tilde
m^\star\cdot
e^{|\mathcal{R}|\sqrt{N_\mathcal{R}}C_\mathcal{R}}\}-\min\{0,\frac{m_\mathcal{R}}{n_\mathcal{R}}e^{|\mathcal{R}|\sqrt{N_\mathcal{R}}C_\mathcal{R}}\}.
$$
Then from $\lambda_\alpha(0)>\lambda_{min}$ it follows
$\lambda_\alpha(0)-\tilde m^\star(0)>0$,
$m_\mathcal{R}+n_\mathcal{R}\inf_\mathcal{R}\lambda_\alpha>0$ and
$$
\inf_\mathcal{R}\lambda_\alpha\ge\inf_\mathcal{R}\tilde
m^\star+(\lambda_\alpha(0)-\tilde
m^\star(0))e^{-|\mathcal{R}|\sqrt{N_\mathcal{R}}C_\mathcal{R}}>0.
$$
Now we have that
$$
e_\mathcal{R}(\alpha)\ge1+\chi[\lambda_\alpha(0)>\lambda_{min}]\left(m_\mathcal{R}+n_\mathcal{R}(\inf_\mathcal{R}\tilde
m^\star+(\lambda_\alpha(0)-\tilde
m^\star(0))e^{-|\mathcal{R}|\sqrt{N_\mathcal{R}}C_\mathcal{R}})\right),
$$
where the characteristic function $\chi$ plays here the role of
the condition checking. From (Eq.\ref{lambda_msupinf}) we find
$$
\sup_\mathcal{R}|\lambda_\alpha|\le\sup_\mathcal{R}\tilde
m^\star+|\lambda_\alpha(0)-\tilde
m^\star(0)|e^{|\mathcal{R}|\sqrt{N_\mathcal{R}}C_\mathcal{R}},
$$
whence
$$
D_\mathcal{R}(\alpha)\le1+|m_\mathcal{R}+n_\mathcal{R}p_\mathcal{R}|+M_\mathcal{R}+N_\mathcal{R}(\sup_\mathcal{R}\tilde
m^\star+|\lambda_\alpha(0)-\tilde
m^\star(0)|e^{|\mathcal{R}|\sqrt{N_\mathcal{R}}C_\mathcal{R}}).
$$
Thus we establish that for $\lambda_\alpha(0)\le\lambda_{min}$
$$
\frac{D_\mathcal{R}(\alpha)}{e_\mathcal{R}(\alpha)}\le1+|m_\mathcal{R}+n_\mathcal{R}p_\mathcal{R}|+M_\mathcal{R}+N_\mathcal{R}(\sup_\mathcal{R}\tilde
m^\star+(|\lambda_{min}|+|\tilde
m^\star(0)|)e^{|\mathcal{R}|\sqrt{N_\mathcal{R}}C_\mathcal{R}}),
$$
and for $\lambda_\alpha(0)>\lambda_{min}$
$$
\frac{D_\mathcal{R}(\alpha)}{e_\mathcal{R}(\alpha)}\le
e^{2|\mathcal{R}|\sqrt{N_\mathcal{R}}C_\mathcal{R}}N_\mathcal{R}\left(\frac{1}{n_\mathcal{R}}+\frac{1+|m_\mathcal{R}+n_\mathcal{R}p_\mathcal{R}|+M_\mathcal{R}+N_\mathcal{R}|\sup_\mathcal{R}\tilde
m^\star|}{N_\mathcal{R}e^{|\mathcal{R}|\sqrt{N_\mathcal{R}}C_\mathcal{R}}}+\right.
$$
$$
\left.+\frac{1+|m_\mathcal{R}|+n_\mathcal{R}|\inf_\mathcal{R}\tilde
m^\star|}{n_\mathcal{R}e^{-|\mathcal{R}|\sqrt{N_\mathcal{R}}C_\mathcal{R}}}\right).
$$
Finally
$$
\frac{d}{ds}\ln(\kappa^2+\Lambda_\alpha)=\frac{\frac{d}{ds}(I^2(H+\tilde
m^\star))+\frac{d}{ds}(I^2)(\lambda_\alpha-\tilde
m^\star)+I^2(\lambda_\alpha-\tilde
m^\star)\frac{d}{ds}\ln|\lambda_\alpha-\tilde
m^\star|}{\kappa^2+I^2H+I^2\lambda_\alpha}.
$$
Denote $P_\mathcal{R}=\sup_\mathcal{R}|\partial_s(I^2(H-\tilde
m^\star))|\ge0$ and
$Q_\mathcal{R}=\sup_\mathcal{R}|\partial_s(I^2)|\ge 0$. Again
using the loose uniformity, for $\lambda_\alpha-\tilde
m^\star\le1$
$$
\left|\frac{d}{ds}\ln(\kappa^2+\Lambda_\alpha)\right|\le
P_\mathcal{R}+Q_\mathcal{R}+(N_\mathcal{R})^\frac{3}{2}C_\mathcal{R},
$$
and else
$$
\left|\frac{d}{ds}\ln(\kappa^2+\Lambda_\alpha)\right|\le\frac{P_\mathcal{R}+Q_\mathcal{R}+(N_\mathcal{R})^\frac{3}{2}C_\mathcal{R}}{n_\mathcal{R}}.
$$
Summarizing this all we find that by Corollary \ref{TEstAbstract}
there exist numbers
$0<R_\mathcal{R},S_\mathcal{R},T_\mathcal{R}\in\mathbb{R}$ and
$U_\mathcal{R}\in\mathbb{R}$ such that for a family of arbitrary
solutions $T_\alpha$ we have
$$
|T_\alpha(s)|\le
R_\mathcal{R}|T_\alpha(0)|+\frac{S_\mathcal{R}}{\max\{1,\sqrt{U_\mathcal{R}+T_\mathcal{R}\lambda_\alpha(0)}\}}|\dot
T_\alpha(0)|,
$$
what was to be proven.
\end{proof}

The result can be strengthened under additional assumptions. These
are perhaps too restrictive, but they appear to be sufficient for
some important applications. Let $\mathbb{H}_a=\{z\in\mathbb{C}:
|\Im z|<a\}$.

\begin{proposition}\label{TEstHol}
Suppose the bundle $\mathcal{T}$ is analytic, so that all
functions figuring in (Eq.\ref{ModEqs}) are real analytic
functions of $s$. Suppose further that $D_{\Sigma_t}$ has a
strictly uniform spectrum. Choose the initial data to be
$T_\alpha(0)=p(\omega(\alpha))$ and $\dot
T_\alpha(0)=q(\omega(\alpha))$, where $p(\omega)$, $q(\omega)$ are
holomorphic functions on $\mathbb{H}_a$ for some $a>0$. Then for
each $s$, $T_\alpha(s)=r_s(\omega(\alpha))$, where $r_s(\omega)$
is holomorphic in $\omega$ on $\mathbb{H}_a$ and real analytic in
$s$, and for any compact interval $\mathcal{R}$ it holds
$$
|r_s(\omega)|\le
R_\mathcal{R}|p(\omega)|+\frac{S_\mathcal{R}}{\max\{1,\sqrt{U_\mathcal{R}+T_\mathcal{R}\Re\omega}\}}|q(\omega)|\mbox{,
}\forall s\in\mathcal{R},
$$
with $0<R_\mathcal{R},S_\mathcal{R},T_\mathcal{R}\in\mathbb{R}$
and $U_\mathcal{R}\in\mathbb{R}$.
\end{proposition}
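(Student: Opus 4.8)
The plan is to promote the spectral label $\omega$ to a complex variable on the strip $\mathbb{H}_a$, to read off the holomorphy of $r_s$ from the analytic dependence of linear ordinary differential equations on a parameter, and then to re-run the estimate behind Proposition \ref{TEstLooseUni}, now tracking how each constant furnished by Corollary \ref{TEstAbstract} depends on $\omega$.

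First I would invoke strict uniformity. Fix a component $\tilde\Sigma^i$ and write $H=H_\alpha$, $I=I_\alpha$ as before. Since $\lambda_\alpha(t)=\omega(\alpha)C(t)+\tilde m^\star(t)$, the coefficient in (Eq.\ref{ModEqs}) is affine in $\omega(\alpha)$; writing everything as a function of $s$,
$$
\Lambda_\alpha(s)=a(s)+\omega(\alpha)\,b(s),\qquad a(s)=I^2(s)\bigl(H(s)+\tilde m^\star(s)\bigr),\quad b(s)=I^2(s)\,C(s),
$$
with $a,b$ real analytic in $s$ (the bundle is analytic) and $b(s)>0$. Consider then, for a free complex parameter $\omega\in\mathbb{H}_a$, the initial value problem $\ddot r(s,\omega)+\bigl(a(s)+\omega b(s)\bigr)r(s,\omega)=0$ with $r(0,\omega)=p(\omega)$ and $\partial_s r(0,\omega)=q(\omega)$. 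This is a linear ODE whose coefficients and data are holomorphic in $\omega$ and real analytic in $s$; its Picard iterates are holomorphic in $\omega$ and converge locally uniformly on compact subsets of $\mathcal{I}\times\mathbb{H}_a$, so the solution $r(s,\omega)$, which exists for all $s\in\mathcal{I}$ by linearity, is real analytic in $s$ and holomorphic in $\omega$ on $\mathbb{H}_a$. Put $r_s(\omega)=r(s,\omega)$; uniqueness of the IVP at $\omega=\omega(\alpha)$ gives $T_\alpha(s)=r_s(\omega(\alpha))$.

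Next I would bound $r_s(\omega)$ by applying Corollary \ref{TEstAbstract} to the complex equation and tracking the $\omega$-dependence exactly along the lines of the proof of Proposition \ref{TEstLooseUni}. The imaginary part $\Im\Lambda(s)=(\Im\omega)\,b(s)$ is bounded on any compact $\mathcal{R}$ by $a\sup_\mathcal{R}b$ since $|\Im\omega|<a$; this is the quantity $A_\mathcal{R}$, now uniform in $\omega$. The real part $\Re\Lambda(s)=a(s)+(\Re\omega)\,b(s)$ takes over the role of $I^2H+I^2\lambda_\alpha$: as $b\ge b_\mathcal{R}>0$ and $a$ is bounded on $\mathcal{R}$, and since $\Re\omega=\omega(\alpha)$ is bounded below (the spectrum is lower semi-bounded), $\Re\Lambda$ is bounded below and in fact bounded below by an affine function of $\Re\omega$ with positive slope; this reproduces the uniform bound on $\kappa$, the lower bound on $e_\mathcal{R}$, and hence the upper bound on $D_\mathcal{R}/e_\mathcal{R}$, now with $\Re\omega$ in place of $\lambda_\alpha(0)$ (the affine change $\Re\omega\mapsto\Re\omega\,C(0)+\tilde m^\star(0)$ being absorbed into $U_\mathcal{R},T_\mathcal{R}$). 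Finally the term $\frac{d}{ds}\ln(\kappa^2+\Lambda)$ is controlled as before: strict uniformity forces $\frac{d}{ds}\ln|\lambda_\alpha(s)-\tilde m^\star(s)|=\frac{d}{ds}\ln C(s)$, independent of $\alpha$ and hence of $\omega$, so the resulting bound is uniform. Feeding these $\omega$-uniform estimates into Corollary \ref{TEstAbstract} yields exactly the claimed inequality with $|p(\omega)|$, $|q(\omega)|$ in place of $|T_\alpha(0)|$, $|\dot T_\alpha(0)|$ and $\Re\omega$ under the root.

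The conceptual content being that of Proposition \ref{TEstLooseUni}, the main obstacle is bookkeeping: one must verify that the now nonzero imaginary part of $\Lambda$ does not spoil the gain $1/\max\{1,\sqrt{U_\mathcal{R}+T_\mathcal{R}\Re\omega}\}$ multiplying $|q(\omega)|$, and that every constant coming out of Corollary \ref{TEstAbstract} depends on $\omega$ only through $\Re\omega$ and only in the stated monotone fashion, uniformly over the range of $\omega(\alpha)$ (equivalently, over $\{\omega\in\mathbb{H}_a:\Re\omega\ge\inf_{\tilde\Sigma}\omega\}$). The holomorphy claim is routine once the locally uniform convergence of the Picard scheme on $\mathcal{I}\times\mathbb{H}_a$ is recorded.
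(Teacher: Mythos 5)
Your proposal is correct and follows essentially the same two-step plan as the paper: first establish joint holomorphy of $r_s(\omega)$ on $\delta(\mathbb{R})\times\mathbb{H}_a$, then restrict to real $s$ and re-run the argument of Proposition \ref{TEstLooseUni} while tracking that $A_\mathcal{R}$ is now nonzero but bounded by $a\sup_\mathcal{R}\{I^2C\}$ and that $\lambda_\alpha(0)$ is replaced by an affine function of $\Re\omega$. The only difference is cosmetic: where the paper cites Herold's Satz 4.1, 4.2 and 5.3 for holomorphic parameter dependence and analytic continuation, you argue directly via locally uniform convergence of Picard iterates, which is an equally valid and in fact more self-contained route to the same fact for a linear equation with coefficients affine in $\omega$.
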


\begin{proof}
By strict uniformity we have
$\lambda_\alpha(t)=\omega(\alpha)C(t)+\tilde m^\star(t)$, and if
the initial data depend only on $\omega$, then the solutions will
also be such. Therefore for convenience we write
\begin{eqnarray}
\ddot T_\omega(s)+I^2(s)(H(s)+\omega C(s)+\tilde
m^\star(s))T_\omega(s)=0\label{ModEqHol}
\end{eqnarray}
with $T_\omega(0)=p(\omega)$ and $\dot T_\omega(0)=q(\omega)$.
From the theory of power series it is clear that any real analytic
function on $s(\mathcal{I})$ can be extended to a holomorphic
function in some open neighborhood $\delta(s(\mathcal{I}))$ of
$s(\mathcal{I})$. Consider (Eq.\ref{ModEqHol}) as a complex
differential equation, then for any $\omega\in\mathbb{H}_a$, by
Satz 4.1 and Satz 4.2 of \cite{Herold1975} there exist
neighborhoods $\delta(0)$ of 0 and $\delta(\omega)$ of $\omega$
such that $T_\omega(s)$ is holomorphic in
$\delta(0)\times\delta(\omega)$. At the same time by Satz 5.3 of
\cite{Herold1975}, for any $\omega\in\mathbb{H}_a$ the solution
$T_\omega$ can be analytically continued to the whole of
$\delta(s(\mathcal{I}))$. Thus $T_\omega(s)$ is holomorphic in
$\delta(\mathbb{R})\times\mathbb{H}_a$. Now restrict back to the
real axis and fix the interval $\mathcal{R}$. The reasoning of the
previous proposition can be repeated literally except that now
$A_\mathcal{R}(\alpha)$ is not zero but equals
$A_\mathcal{R}(\omega)=|\Im\omega|\sup_\mathcal{R}\{I^2C\}<a\sup_\mathcal{R}\{I^2C\}$.
This results in a similar formula as in Proposition
\ref{TEstLooseUni} with perhaps different coefficients, and that
proves our assertion.
\end{proof}

We have an immediate corollary.

\begin{corollary}\label{TPolBound}
Under the assumptions of Proposition \ref{TEstHol}, if
$p,q\in\mathcal{A}(\mathbb{H}_a)$ then for each $s\in\mathcal{R}$,
$r_s\in\mathcal{A}(\mathbb{H}_a)$.
\end{corollary}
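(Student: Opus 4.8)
The plan is to obtain this essentially for free from Proposition \ref{TEstHol}: that proposition already delivers holomorphy of $r_s$ on $\mathbb{H}_a$ together with a pointwise bound, and all that remains is to observe that the bound transports the rapid decrease of the data $p,q$ to $r_s$.

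First I would record the output of Proposition \ref{TEstHol}. For a fixed compact interval $\mathcal{R}\subset\mathcal{I}$ and every $s\in\mathcal{R}$ the solution $r_s$ is holomorphic on the whole strip $\mathbb{H}_a$, real analytic in $s$, and satisfies
$$
|r_s(\omega)|\le R_\mathcal{R}|p(\omega)|+\frac{S_\mathcal{R}}{\max\{1,\sqrt{U_\mathcal{R}+T_\mathcal{R}\Re\omega}\}}|q(\omega)|,\qquad\forall\,\omega\in\mathbb{H}_a.
$$
I would stress that the constants $R_\mathcal{R},S_\mathcal{R},T_\mathcal{R},U_\mathcal{R}$ may be taken independent of $\omega$ on the \emph{entire} strip: in the proof of Proposition \ref{TEstHol} the only $\omega$-dependent quantity fed into Corollary \ref{TEstAbstract} is $A_\mathcal{R}(\omega)=|\Im\omega|\sup_\mathcal{R}\{I^2C\}$, which is bounded above by $a\,\sup_\mathcal{R}\{I^2C\}$ for every $\omega\in\mathbb{H}_a$; replacing $A_\mathcal{R}(\omega)$ by this uniform majorant fixes the coefficients once and for all. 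Since $0<\max\{1,\sqrt{U_\mathcal{R}+T_\mathcal{R}\Re\omega}\}^{-1}\le1$ (reading the radical as $0$ whenever its argument is negative), the estimate collapses to the clean domination $|r_s(\omega)|\le R_\mathcal{R}|p(\omega)|+S_\mathcal{R}|q(\omega)|$, valid throughout $\mathbb{H}_a$ and uniformly in $s\in\mathcal{R}$.

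Second, I would invoke $p,q\in\mathcal{A}(\mathbb{H}_a)$, i.e. $\mathcal{A}(\mathbb{H}_a)$ being the space of functions holomorphic on $\mathbb{H}_a$ and of rapid decrease there (uniformly on closed substrips). Then both $|p|$ and $|q|$ decay faster than any power of $|\omega|$ on every $\overline{\mathbb{H}_b}$, $b<a$; hence so does the majorant $R_\mathcal{R}|p|+S_\mathcal{R}|q|$, and therefore $|r_s|$ does too. Because at fixed $s$, equivalently fixed $t$, the spectral parameter is the affine image $\lambda=\omega C(t)+\tilde m^\star(t)$ of $\omega$ by strict uniformity, rapid decay in $\omega$ coincides with rapid decay in $\lambda$, so the decrease is of the required kind. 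Combining this with the holomorphy of $r_s$ on $\mathbb{H}_a$ already furnished by Proposition \ref{TEstHol} gives $r_s\in\mathcal{A}(\mathbb{H}_a)$ for each $s\in\mathcal{R}$, and since $\mathcal{R}$ was an arbitrary compact subinterval, for all admissible $s$.

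I do not anticipate a real obstacle; the corollary is a repackaging of Proposition \ref{TEstHol}. The only points deserving care are the uniformity of the coefficients as $\Im\omega\to\pm a$ and the behaviour of the denominator as $\Re\omega\to-\infty$ — both dispatched, respectively, by the crude bound $A_\mathcal{R}(\omega)<a\sup_\mathcal{R}\{I^2C\}$ and by simply discarding the denominator, the latter at the harmless cost of forfeiting the extra $|\Re\omega|^{-1/2}$ gain, which is in any case immaterial for membership in $\mathcal{A}(\mathbb{H}_a)$.
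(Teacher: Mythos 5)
Your mechanics are correct and match what the paper treats as an ``immediate'' consequence of Proposition \ref{TEstHol}: that proposition gives holomorphy of $r_s$ on $\mathbb{H}_a$ plus the pointwise bound, and after dropping the denominator (which is $\ge 1$) one has $|r_s(\omega)|\le R_\mathcal{R}|p(\omega)|+S_\mathcal{R}|q(\omega)|$ uniformly on the strip, from which membership of $r_s$ in $\mathcal{A}(\mathbb{H}_a)$ is inherited from that of $p,q$. Your observation that the constants can be taken uniform in $\omega$ because $A_\mathcal{R}(\omega)=|\Im\omega|\sup_\mathcal{R}\{I^2C\}<a\sup_\mathcal{R}\{I^2C\}$ on the whole strip is exactly the right justification.

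However, you have misstated the definition of the class. Per the functional-calculus appendix, $\mathcal{A}(\mathbb{H}_a)=\{F\in\Hol(\mathbb{H}_a)\colon\ \exists N\in\mathbb{N}\text{ s.t. }F=O(|\Re z|^N)\}$, i.e., holomorphic functions of \emph{at most polynomial growth} in $|\Re z|$ --- not of rapid decrease, and not defined via $\lambda$. The distinction is conceptually important downstream: the modes $T_\alpha(t)=r_t(\omega(\alpha))$ do not decay rapidly in $\omega$ (typically they are merely bounded), and the whole point of Proposition \ref{TMultPropAnal} is that multiplying a rapidly decaying Fourier image $\tilde f(\alpha)$ by a polynomially bounded $r_t$ preserves rapid decay. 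Fortunately the slip does not damage your argument here: the majorization $|r_s|\le R_\mathcal{R}|p|+S_\mathcal{R}|q|$ transports a polynomial bound in $|\Re\omega|$ exactly as well as it would transport rapid decay, so the conclusion $r_s\in\mathcal{A}(\mathbb{H}_a)$ still follows. You should simply delete the sentence translating ``rapid decay in $\omega$'' into ``rapid decay in $\lambda$'' and replace the decay language with polynomial-growth language throughout.
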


\subsection{Mode decomposition of weak solutions}

The aim of this section will be to extend the mode decomposition
of $Sol_0(\mathcal{T})$ obtained previously to entire
$\mathcal{D}(\mathcal{T})_0'$. Here we assume all the conditions
of Proposition \ref{PropVarSep} are satisfied, and we have chosen
the system $\{u,v\}_\alpha$ with $u_\alpha=T_\alpha\zeta_\alpha$
and $v_\alpha=\bar T_\alpha\zeta_\alpha$, which span
$Sol_0(\mathcal{T})$. For convenience we will also assume at least
the part (iv) of the definition of the conventional Fourier
transform to hold.

Unfortunately we do not have a precise analytical description of
the Fourier transformed test function space
$\tilde{\mathcal{D}}(\tilde\Sigma)$ even under the assumptions of
the conventional Fourier transform, as it was, for instance, in
the Euclidean space by Paley-Wiener theorem. In particular we need
to know for which modes $T_\alpha$ it holds
\begin{eqnarray}
T_\alpha(t)\tilde
f(\alpha)\in\tilde{\mathcal{D}}(\tilde\Sigma)\mbox{, }\forall
\tilde f(\alpha)\in\tilde{\mathcal{D}}(\tilde\Sigma)\mbox{,
}t\in\mathcal{I}.\label{TMultCond}
\end{eqnarray}
At least we are able to find a sufficient condition under
additional assumptions.

\begin{proposition}\label{TMultPropAnal}
Suppose the bundle $\mathcal{T}$ is analytic and $D_{\Sigma_t}$
has a strictly uniform spectrum. For each
$\alpha\in\tilde\Sigma^i$ set $T_\alpha(0)=p^i(\omega(\alpha))$
and $\dot T_\alpha(0)=q^i(\omega(\alpha))$, where
$p^i,q^i\in\mathcal{A}[\mathbb{H}_0]$. Then (Eq.\ref{TMultCond})
holds.
\end{proposition}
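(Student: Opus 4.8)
The plan is to deduce from Corollary~\ref{TPolBound} that, for each fixed $t$, the map $\alpha\mapsto T_\alpha(t)$ is analytic on $\tilde\Sigma$ and has at most polynomial growth in the eigenvalue $\lambda$, and then to observe that multiplication by such a function preserves the two features characterising a conventional Fourier transformed test function: analyticity on $\tilde\Sigma$ and rapid decay in $\lambda$. So the membership (Eq.\ref{TMultCond}) follows once these two facts about $T_\alpha(t)$ are assembled correctly, and essentially the whole content is to assemble them.

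In detail, I would fix $t$ and run through the finitely many components $\tilde\Sigma^i$, $i=1,\dots,\dim V$, on each of which $H_\alpha$, $F_\alpha$ and hence $I_\alpha(t)$ are constant. By strict uniformity $\lambda_\alpha(t)=\omega(\alpha)C(t)+\tilde m^\star(t)$ is, at fixed $t$, an affine function of $\omega(\alpha)$, and $\omega$ is analytic on $\tilde\Sigma$. Choose $a>0$ so small that every $p^i$ and $q^i$ is holomorphic and polynomially bounded on $\mathbb{H}_a$; then Proposition~\ref{TEstHol} applies on each component and Corollary~\ref{TPolBound} gives $T_\alpha(t)=r^i_t(\omega(\alpha))$ with $r^i_t\in\mathcal{A}(\mathbb{H}_a)$, where one common strip width and one common bound $|r^i_t(\omega)|\le K_t(1+|\omega|)^{N_t}$ serve for all $i$ since there are only finitely many. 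Composing the holomorphic $r^i_t$ with the analytic function $\omega$ makes $\alpha\mapsto T_\alpha(t)$ analytic on $\tilde\Sigma$, and the affine relation between $\omega(\alpha)$ and $\lambda_\alpha(t)$ converts the $\omega$-bound into $|T_\alpha(t)|\le K'_t(1+|\lambda_\alpha(t)|)^{N_t}$. Hence, for $\tilde f\in\tilde{\mathcal{D}}(\tilde\Sigma)$ analytic on $\tilde\Sigma$ with rapid decay in $\lambda$, the product $T_\alpha(t)\tilde f(\alpha)$ is again analytic (a product of analytic functions) and still decays rapidly in $\lambda$, because a rapidly decaying factor absorbs the polynomial growth.

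I expect the main obstacle to be conceptual rather than computational: fixing what $\tilde{\mathcal{D}}(\tilde\Sigma)$ is to mean. If it denotes the conventional-transform description — analytic functions on $\tilde\Sigma$ with rapid decay in $\lambda$ — the argument above is complete. If it is read as the genuine image $\mathcal{F}[\mathcal{D}(\mathcal{T}_t)]$, then one must confront the fact that multiplying a Paley--Wiener-type function by $T_\alpha(t)=r^i_t(\omega(\alpha))$, whose holomorphy domain pinches in the spatial-momentum variable as $|\omega|\to\infty$, can enlarge the support or even destroy compact support; rescuing the statement requires $T_\alpha(t)$ to be entire of finite exponential type in the appropriate Paley--Wiener variable, which under strict uniformity one can try to obtain by analysing the mode equation (Eq.\ref{ModEqs}) directly in that variable — the bundle being analytic, the same complex-ODE reasoning as in the proof of Proposition~\ref{TEstHol} applies — rather than only in the $\omega$-strip supplied by Corollary~\ref{TPolBound}. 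The Cauchy-problem route, namely realising $T_\alpha(t)\tilde f(\alpha)$ as the time-$t$ time-dependent Fourier transform of the solution of $D\phi=0$ with Cauchy data at $t=0$ equal to the inverse transforms of $p^i(\omega(\alpha))\tilde f(\alpha)$ and $q^i(\omega(\alpha))\tilde f(\alpha)$ (absorbing the constants $I_\alpha(t)$ via Corollary~\ref{TDPlancherel} and invoking the finite propagation speed built into the definition of $Sol_0(\mathcal{T})$ to get $\phi(t,\cdot)\in\mathcal{D}(\mathcal{T}_t)$), only relocates the difficulty to a multiplier statement on the $t=0$ slice, where it is genuinely a quasi-locality question for functions of the elliptic operator $D_{\Sigma_0}$ and is exactly where the analyticity assumed of $p^i,q^i$ through $\mathcal{A}[\mathbb{H}_0]$ would have to be exploited.
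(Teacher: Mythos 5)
Your first two paragraphs establish only that $T_\alpha(t)\tilde f(\alpha)$ is analytic on $\tilde\Sigma$ with rapid decay in $\lambda$. That is strictly weaker than membership in $\tilde{\mathcal{D}}(\tilde\Sigma)$, which the paper defines as the genuine image $\mathcal{F}[\mathcal{D}(\mathcal{T}_t)]$; the paragraph immediately preceding (Eq.\ref{TMultCond}) says explicitly that no Paley--Wiener description of this image is available in this generality, and the ``conventional Fourier transform'' definition asserts only that the image is a \emph{subspace} of the analytic rapidly-decaying functions, not the whole thing. So the argument as executed proves membership in a larger space, not in $\tilde{\mathcal{D}}(\tilde\Sigma)$, and the resolution of the dichotomy you raise is the unfavourable one.

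Your third paragraph correctly diagnoses the missing ingredient --- a quasi-locality statement to the effect that $F(D_{\Sigma_t})$ maps $\mathcal{D}(K)$ into $\mathcal{D}(K)$ for $F$ holomorphic and polynomially bounded on a strip about the real axis, with $\lambda$ related to $\omega$ affinely through strict uniformity --- but you stop at the diagnosis. The ``entire of finite exponential type'' suggestion and the Cauchy-problem detour are left as possibilities (and you yourself observe the latter merely relocates the difficulty to the $t=0$ slice). The paper closes exactly this gap with Proposition \ref{FuncCalcProp} in Appendix C, a holomorphic functional calculus result for strip-type operators: writing $\mathcal{D}(K)=\bigcap_p\mathcal{H}_p$ for the countably-Hilbert structure, extending $D_{\Sigma_t}$ to a self-adjoint $A_p\in\Strip(0)$ on each $\mathcal{H}_p$, regularizing $F\in\mathcal{A}[\mathbb{H}_0]$ against a power of the resolvent, and observing that the resulting operator maps $\mathcal{H}_{p+2(N+2)}$ into $\mathcal{H}_p$, hence $\bigcap_p\mathcal{H}_p$ into itself. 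Once that lemma is in hand, the remaining steps of the paper's proof --- identifying $T_\alpha(t)=F_t^i(\lambda_\alpha(t))$ via Corollary \ref{TPolBound} and strict uniformity, then a finite partition of unity to reduce to a single trivialization and a single component $\tilde\Sigma^i$ --- are routine and consonant with your outline. The concrete gap in your proposal is therefore that the key quasi-locality lemma for $F(D_{\Sigma_t})$ is named but not proved.
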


\begin{proof}
Choose the interval $\mathcal{R}$ such that it contains both 0 and
$t$. First we note that by Corollary \ref{TPolBound} for
$\alpha\in\tilde\Sigma^i$ we have
$T_\alpha(t)=r^i_t(\omega(\alpha))$ with
$r^i_t\in\mathcal{A}[\mathbb{H}_0]$. Denote
$F_t^i(\lambda)=r^i_t(\frac{\lambda}{C(t)})\in\mathcal{A}[\mathbb{H}_0]$.
Obviously for any $f\in\mathcal{D}(\mathcal{T}_t)$,
$$
F_t^i(\lambda_\alpha(t))\tilde
f(\alpha)=\widetilde{\left[F_t^i(D_{\Sigma_t})f\right]}(\alpha),
$$
where $F_t^i(D_{\Sigma_t})$ is defined by functional calculus.
Then by Proposition \ref{FuncCalcProp}
$$
F_t^i(\lambda_\alpha(t))\tilde
f(\alpha)\in\tilde{\mathcal{D}}(\tilde\Sigma).
$$

Let $\{U_n\}$ be a covering by local trivializations of
$\mathcal{T}_t$, and let $\{\imath_n\}$ be a subordinate partition
of unity. The support of $f$ is covered by $N_f$ (finite)
trivializing neighborhoods, and we write $f=\sum_n\imath_nf=\sum_n
f_n$. It follows $\tilde f=\sum_n\tilde f_n$ and
$T_\alpha(t)\tilde f(\alpha)=\sum_n T_\alpha(t)\tilde
f_n(\alpha)$. Consider $f_n$ as a section in the trivial bundle
$\pi^{-1}(U_n)$. As we have seen already (and as we will see even
more evidently for homogeneous spacetimes in the next chapter)
each component $\tilde\Sigma^i$ supports the Fourier transform of
one fiber component in some local frame. Thus we can write
$f_n=\sum_i f^i_n$, where $f^i_n\in\mathcal{D}(U_n)$ and $\tilde
f^i_n$ is supported in $\tilde\Sigma^i$. We get
$$
T_\alpha(t)\tilde f(\alpha)=\sum_n\sum_i
F_t^i(\lambda_\alpha(t))\tilde
f^i_n(\alpha)=\sum_n\sum_i\widetilde{\left[F_t^i(D_{\Sigma_t})f^i_n\right]}(\alpha)\in\tilde{\mathcal{D}}(\tilde\Sigma),
$$
which completes the proof.
\end{proof}

\begin{remark}\label{MultI_alpha}
An argument involving local trivializations as in the proof of
Proposition \ref{TMultPropAnal} will show that the multiplication
of $\mathcal{F}[f(t,.)]$ by $I_\alpha(t)$ amounts to
multiplication of each fiber component by a number, hence
$I_\alpha(t)\mathcal{F}[f(t,.)]\in\tilde{\mathcal{D}}(\tilde\Sigma)$
for all $t\in\mathcal{I}$.
\end{remark}

Two useful facts about the time dependent Fourier transform can be
given by the following

\begin{proposition}\label{SmoothFourCompInt}
Let $\tilde f(t,\alpha)\in
C_0^\infty\left(\mathcal{I},\tilde{\mathcal{D}}(\tilde\Sigma)\right)$.
then
\begin{romanlist}
\item $f(t,\vec x)=\mathcal{F}^{-1}[\tilde
f(t,\alpha)]\in\mathcal{D}(\mathcal{T})$

\item $\int_\mathcal{I}dt\tilde
f(t,\alpha)\in\tilde{\mathcal{D}}(\tilde\Sigma)$.
\end{romanlist}
\end{proposition}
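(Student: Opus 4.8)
The plan is to establish both assertions by exploiting the definition of the time dependent Fourier transform together with the known behaviour of the spatial Fourier transform $\mathcal{F}$ on $\mathcal{D}(\mathcal{T}_t)$ and its Fourier image $\tilde{\mathcal{D}}(\tilde\Sigma)$. For part (i), I would start from the hypothesis that $\tilde f(t,\alpha)$ is a smooth compactly supported $\mathcal{I}$-family valued in $\tilde{\mathcal{D}}(\tilde\Sigma)$, so that for each fixed $t$ one has $\tilde f(t,\cdot)\in\tilde{\mathcal{D}}(\tilde\Sigma)$; applying the inverse spatial Fourier transform (the Peter-Weyl formula, Eq.\ref{PeterWeyl}, together with the normalization chosen after Remark \ref{dmu_h}) one obtains $f(t,\cdot)=\mathcal{F}^{-1}[\tilde f(t,\cdot)]\in\mathcal{D}(\mathcal{T}_t)$ for each $t$. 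What remains is to verify joint smoothness in $(t,\vec x)$ and compactness of the support in $M$. Smoothness follows because differentiation in $t$ commutes with $\mathcal{F}^{-1}$ (the integrand $\tilde f(t,\alpha)\zeta_\alpha(\vec x)$ depends smoothly on $t$ with all $t$-derivatives again in $\tilde{\mathcal{D}}(\tilde\Sigma)$, uniformly on compacta, since $\tilde f\in C_0^\infty(\mathcal{I},\tilde{\mathcal{D}}(\tilde\Sigma))$), and smoothness in $\vec x$ is guaranteed by the rapid decay in $\lambda$ built into $\tilde{\mathcal{D}}(\tilde\Sigma)\subset\mathcal{F}[\mathcal{D}(\mathcal{T}_t)]$ which lets one differentiate under the spectral integral. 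Compact support in the $t$-direction is immediate from $\tilde f\in C_0^\infty(\mathcal{I},\cdot)$; compact support in the spatial direction needs care, and I would handle it by noting that the $t$-support is a fixed compact $\mathcal{R}\subset\mathcal{I}$, that $t\mapsto\mathcal{F}^{-1}[\tilde f(t,\cdot)]$ is continuous into $\mathcal{D}(\mathcal{T}_t)$, and that the spatial supports over $t\in\mathcal{R}$ stay within a fixed compact subset of $\Sigma$ by continuity plus compactness of $\mathcal{R}$ (invoking the inductive limit topology of $\mathcal{D}$).

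For part (ii), the key observation is that integration over $t$ commutes with the spatial Fourier transform: writing $g(\vec x)=\int_\mathcal{I}dt\,f(t,\vec x)$ with $f=\mathcal{F}^{-1}[\tilde f]$ as in part (i), we have $g\in\mathcal{D}(\mathcal{T}_0)$ (it is a compactly supported smooth section on the $t=0$ slice, or on any fixed slice, since the integral of a $C_0^\infty(\mathcal{I},\mathcal{D})$-family over $\mathcal{I}$ lands in $\mathcal{D}$), and its spatial Fourier transform is $\widetilde g(\alpha)=\int_\mathcal{I}dt\,\tilde f(t,\alpha)$ by Fubini applied to the spectral integral against $\zeta_\alpha$. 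Since $\widetilde g=\mathcal{F}[g]$ with $g\in\mathcal{D}(\mathcal{T}_0)$, it lies in $\tilde{\mathcal{D}}(\tilde\Sigma)$ by definition of the latter as the Fourier image of the test section space, which is exactly the claim. One subtlety here is that $\tilde f(t,\alpha)$ at different $t$ need not be the Fourier transform with respect to the \emph{same} bundle identification unless one fixes a trivialization-compatible convention; but under the standing assumptions of Proposition \ref{PropVarSep} the eigenfunctions $\zeta_\alpha$ are $t$-independent and $\tilde{\mathcal{D}}(\tilde\Sigma)$ is a fixed space, so the integral over $t$ makes literal sense inside $\tilde{\mathcal{D}}(\tilde\Sigma)$.

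The main obstacle I anticipate is the spatial-support control in part (i): showing that the union over $t$ in a compact time interval of the spatial supports of $f(t,\cdot)$ is relatively compact in $\Sigma$. This is really a statement about the continuity of $t\mapsto\mathcal{F}^{-1}[\tilde f(t,\cdot)]$ as a map into $\mathcal{D}(\mathcal{T}_t)$ equipped with its LF-topology, since bounded sets in $\mathcal{D}$ have supports contained in a common compact set. I would argue this by first establishing that $\tilde f\in C_0^\infty(\mathcal{I},\tilde{\mathcal{D}}(\tilde\Sigma))$ means $t\mapsto\tilde f(t,\cdot)$ is continuous (indeed smooth) into the bornological space $\tilde{\mathcal{D}}(\tilde\Sigma)$, that $\mathcal{F}^{-1}$ is a topological isomorphism $\tilde{\mathcal{D}}(\tilde\Sigma)\to\mathcal{D}(\mathcal{T}_0)$, and hence the image family $\{f(t,\cdot):t\in\mathcal{R}\}$ is a compact, therefore bounded, subset of $\mathcal{D}(\mathcal{T}_0)$, which forces a common compact spatial support. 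Everything else reduces to differentiation under the integral sign and Fubini, justified by the rapid decay properties encoded in the definition of $\tilde{\mathcal{D}}(\tilde\Sigma)$; I would state these steps but not belabour the estimates.
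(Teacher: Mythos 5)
Your argument follows the same route as the paper's: for each $t$ the inverse transform lands in $\mathcal{D}(\mathcal{T}_t)$, compact $t$-support is inherited from $\tilde f$, smoothness in $t$ comes from differentiating under the spectral integral, and part (ii) is Fubini applied to the spectral pairing against $\zeta_\alpha$. The one genuine difference is that you explicitly address a point the paper leaves implicit: for $f$ to lie in $\mathcal{D}(\mathcal{T})$ one needs not just that each time-slice $f(t,\cdot)$ has compact spatial support and that the $t$-support is compact, but that the spatial supports over the compact $t$-interval stay within a \emph{common} compact set of $\Sigma$. Your bornological argument (continuity of $t\mapsto\tilde f(t,\cdot)$ into $\tilde{\mathcal{D}}(\tilde\Sigma)$, $\mathcal{F}^{-1}$ a topological isomorphism onto $\mathcal{D}(\mathcal{T}_0)$, compact image hence bounded in the LF-topology, bounded sets supported in a common compact) supplies this cleanly, and is the right way to justify what the paper treats as obvious; it also underwrites the ``integral runs over a compact support'' step in part (ii), which otherwise would need the same uniformity. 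Your remark that $t$-independence of $\zeta_\alpha$ (condition (iv) of Proposition~\ref{PropVarSep}) is what makes $\int_\mathcal{I}dt\,\tilde f(t,\alpha)$ well defined as an element of the fixed space $\tilde{\mathcal{D}}(\tilde\Sigma)$ is correct and worth having on record.
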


\begin{proof}
Let
$$
f(t,\vec x)=\mathcal{F}^{-1}[\tilde
f(t,\alpha)]=\int_{\tilde\Sigma}d\mu(\alpha)\tilde
f(t,\alpha)\zeta_\alpha(\vec x).
$$
For each $t\in\mathcal{I}$ we have $\tilde
f(t,\alpha)\in\tilde{\mathcal{D}}(\tilde\Sigma)$ and therefore
$f(t,\vec x)\in\mathcal{D}(\mathcal{T}_t)$. If the compact
interval $A\subset\mathcal{I}$ is such that $\forall t\notin A$,
$\tilde f(t,\alpha)=0$, then obviously $\forall t\notin A$,
$f(t,\vec x)=0$. Because the integration converges in
$L^2(\tilde\Sigma,\mu)$, differentiation can be interchanged with
the integral, thus $f(t,\vec x)$ is smooth in $t$. The part (i) is
proven.

Now write
$$
\tilde f(t,\alpha)=\int_{\Sigma_t}d\mu_{h_0}(\vec
x)\langle\bar\zeta_\alpha(\vec x),f(t,\vec
x)\rangle_{\mathfrak{g}^0},
$$
and
$$
\int_\mathcal{I}dt\tilde
f(t,\alpha)=\int_\mathcal{I}dt\int_{\Sigma_t}d\mu_{h_0}(\vec
x)\langle\bar\zeta_\alpha(\vec x),f(t,\vec
x)\rangle_{\mathfrak{g}^0}=
$$
$$
=\int_{\Sigma_t}d\mu_{h_0}(\vec x)\langle\bar\zeta_\alpha(\vec
x),\int_\mathcal{I}dtf(t,\vec
x)\rangle_{\mathfrak{g}^0}=\mathcal{F}[\int_\mathcal{I}dtf(t,\vec
x)],
$$
where Fubini's theorem was used with the justification that both
integrals run over compact supports \cite{Dieudonne1976}. For the
part (ii) it remains to show that $\int_\mathcal{I}dtf(t,\vec
x)\in\mathcal{D}(\mathcal{T}_t)$. But this is again clear because
the integral runs over a compact support.
\end{proof}

Next we want to show that the Cauchy problem can be well-posed in
the distributional sense. We will do it by generalizing
(Eq.\ref{CauchyExplicit}) to distributional solutions.

\begin{proposition}\label{CauchyProbDistrib}
For any $u_0,u_1\in\mathcal{D}(\mathcal{T}_t)'$ there exists a
unique $\jmath(u_0,u_1)=u\in\mathcal{D}(\mathcal{T})_0'$ such that
$$
u(f)=u_0(i_t^*(\nabla_tE[f]))-u_1(i_t^*(E[f]))\mbox{, }\forall
f\in\mathcal{D}(\mathcal{T}).
$$
\end{proposition}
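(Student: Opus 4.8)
The plan is to read the displayed formula as the \emph{definition} of $u=\jmath(u_0,u_1)$ and then to verify the two substantive assertions hidden in the statement: that this prescription genuinely defines an element of $\mathcal{D}(\mathcal{T})'$, and that the element so defined is a weak solution, i.e.\ lies in $\mathcal{D}(\mathcal{T})_0'$. Granting these, uniqueness costs nothing: the formula fixes $u(f)$ for every $f\in\mathcal{D}(\mathcal{T})$, and a distribution is determined by its values on all test sections. The formula is not arbitrary either — it is precisely what one obtains by feeding a smooth solution $v=\jmath(f_0,f_1)$ into the identity (Eq.\ref{CauchyExplicit}), expanding $\sigma(v,E[f])$ via $i_t^*(v)=f_0$ and $i_t^*(\nabla_t v)=f_1$, and then reading $v,f_0,f_1$ as the distributions they induce through $(,)_M$ and $(,)_{\Sigma_t}$; so the content is simply that this recipe survives when $f_0,f_1$ are replaced by arbitrary $u_0,u_1\in\mathcal{D}(\mathcal{T}_t)'$.

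For well-definedness the point is that the two auxiliary maps $f\mapsto i_t^*(E[f])$ and $f\mapsto i_t^*(\nabla_t E[f])$ are continuous linear maps $\mathcal{D}(\mathcal{T})\to\mathcal{D}(\mathcal{T}_t)$. The Green's operator $E$ is continuous from $\mathcal{D}(\mathcal{T})$ into $\mathcal{E}(\mathcal{T})$ with $\supp E[f]\subset J^+(\supp f)\cup J^-(\supp f)$; by global hyperbolicity this causal set meets $\Sigma_t$ in a compact set that stays inside a fixed compact $K'\subset\Sigma_t$ whenever $\supp f$ stays inside a fixed compact $K\subset M$. Hence $E[f]\in Sol_0(\mathcal{T})$, $i_t^*(E[f])\in\mathcal{D}(\mathcal{T}_t)$, and on each $\mathcal{D}_K(\mathcal{T})$ the map $f\mapsto i_t^*(E[f])$ is continuous into $\mathcal{D}_{K'}(\mathcal{T}_t)$ (there the $\mathcal{D}$-topology agrees with the one induced from $\mathcal{E}(\mathcal{T}_t)$). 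Passing to the inductive limit topology yields a continuous map $\mathcal{D}(\mathcal{T})\to\mathcal{D}(\mathcal{T}_t)$; composing with the continuous differential operator $\nabla_t$ and with $i_t^*$ takes care of the second map. Therefore $u=u_0\circ(i_t^*\nabla_t E)-u_1\circ(i_t^*E)$ is a composition of continuous linear maps, so $u\in\mathcal{D}(\mathcal{T})'$. This support-and-continuity bookkeeping — in particular ensuring that the spatial compactness of elements of $Sol_0(\mathcal{T})$ is uniform on bounded subsets of $\mathcal{D}(\mathcal{T})$, so that one gets continuity between the inductive-limit topologies and not merely separate continuity on each $\mathcal{D}_K$ — is the only place where real care is needed; everything else is formal.

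Finally, $u$ is a weak solution. Since $D$ is formally self-adjoint under the standing assumptions of this subsection, membership of $u$ in $\mathcal{D}(\mathcal{T})_0'$ amounts to $u(Df)=0$ for all $f\in\mathcal{D}(\mathcal{T})$. But $E^+D=E^-D=id$ on $\mathcal{D}(\mathcal{T})$ gives $ED=0$, hence $E[Df]=0$ and therefore $i_t^*(E[Df])=i_t^*(\nabla_t E[Df])=0$; plugging $Df$ into the formula yields $u(Df)=u_0(0)-u_1(0)=0$. Together with the triviality of uniqueness this proves the proposition. (If one wants compatibility with the classical Cauchy map, one checks that when $u_0,u_1$ are represented by $f_0,f_1\in\mathcal{D}(\mathcal{T}_t)$ via $(,)_{\Sigma_t}$, reversing the computation of the first paragraph shows that $u$ is represented via $(,)_M$ by the smooth solution $\jmath(f_0,f_1)$.)
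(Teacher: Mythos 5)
Your proof is correct for the proposition exactly as it is worded: it treats the displayed formula as the definition of $u$, carefully verifies that $u$ is a genuine element of $\mathcal{D}(\mathcal{T})'$ via the support and continuity bookkeeping for $f\mapsto i_t^*(E[f])$ and $f\mapsto i_t^*(\nabla_tE[f])$, checks $u(Df)=0$ from $ED=0$, and observes that uniqueness is tautological since the formula prescribes $u$ on all of $\mathcal{D}(\mathcal{T})$. The continuity argument, which the paper does not spell out at all, is the only nontrivial step and you handle it correctly.

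The paper's own proof, however, is aimed at a stronger conclusion than the literal statement: it shows that $\jmath:\mathcal{D}(\mathcal{T}_t)'\oplus\mathcal{D}(\mathcal{T}_t)'\to\mathcal{D}(\mathcal{T})_0'$ is a \emph{bijection}. It invokes Proposition~\ref{ESurj} (surjectivity of $E:\mathcal{D}(\mathcal{T})\to Sol_0(\mathcal{T})$) to build $E_\updownarrow^{-1}$, recovers $(u_0,u_1)$ from a given weak solution $u$ by $u_0(v_1)=u(E_\updownarrow^{-1}[\jmath(0,v_1)])$, $u_1(v_0)=-u(E_\updownarrow^{-1}[\jmath(v_0,0)])$ (surjectivity of $\jmath$), and then shows these initial data are determined by $u$ (injectivity). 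Your proof never touches Proposition~\ref{ESurj} and establishes neither of these directions. This matters downstream: in Proposition~\ref{WeakSolDecomp} the paper writes ``$\psi=\jmath(\psi_0,\psi_1)$ by Proposition~\ref{CauchyProbDistrib}'' for an arbitrary $\psi\in\mathcal{D}(\mathcal{T})_0'$, which is precisely the surjectivity that your version does not deliver. So: your argument is a clean, more elementary, and in one respect more careful proof of what the proposition literally asserts, but it proves strictly less than the paper relies on, and if you intend to use the result as the paper does you would still need to supply the surjectivity half, i.e.\ that every distributional solution arises as $\jmath(u_0,u_1)$ for a (unique) pair of distributional Cauchy data.
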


\begin{proof}
By Proposition \ref{ESurj} we know that $E$ is surjective, so we
denote the bijective part of $E$ to be
$E_\updownarrow:\mathcal{D}(\mathcal{T})/\ker E\to
Sol_0(\mathcal{T})$. For surjectivity of $\jmath$ it suffices to
set
$$
u_0(v_1)=u(E_\updownarrow^{-1}[\jmath(0,v_1)])\mbox{,
}u_1(v_0)=-u(E_\updownarrow^{-1}[\jmath(v_0,0)])\mbox{, }\forall
v_0,v_1\in\mathcal{D}(\mathcal{T}_t).
$$
Indeed,
$$
u(f)=u(E_\updownarrow^{-1}[E_\updownarrow[f]])=u(E_\updownarrow^{-1}[\jmath(i_t^*(E_\updownarrow[f]),i_t^*(\nabla_tE_\updownarrow[f]))])=
$$
$$
=u(E_\updownarrow^{-1}[\jmath(i_t*(E_\updownarrow[f]),0)])+u(E_\updownarrow^{-1}[\jmath(0,i_t^*(\nabla_tE_\updownarrow[f]))])=u_0(i_t^*(\nabla_tE[f]))-u_1(i_t^*(E[f])).
$$
For injectivity of $\jmath$ let
$u_0,u_1\in\mathcal{D}(\mathcal{T}_t)'$ be given. Define $u$ as in
the statement. Then obviously $u(Df)=0$ because $EDf=0$ for any
$f\in\mathcal{D}(\mathcal{T})$, hence
$u\in\mathcal{D}(\mathcal{T}_t)_0'$. Now suppose the same formula
holds also for different $u_0',u_1'\in\mathcal{D}(\mathcal{T}_t)'$
with the same $u$. Then we have
$$
0=(u_0-u_0')(i_t^*(\nabla_tE[f]))-(u_1-u_1')(i_t^*(E[f]))\mbox{,
}\forall f\in\mathcal{D}(\mathcal{T}).
$$
Evaluating on $f=E_\updownarrow^{-1}(\jmath(v_0,0))$ and
$g=E_\updownarrow^{-1}(\jmath(0,v_1))$ for arbitrary
$v_0,v_1\in\mathcal{D}(\mathcal{T}_t)$ we find $u_0=u_0'$ and
$u_1=u_1'$.
\end{proof}

Now we come to the main assertion. Let the modes $T_\alpha$ be
chosen such that (Eq.\ref{TMultCond}) holds.

\begin{proposition}\label{WeakSolDecomp} Under the assumptions
made, there exist closed topological subspaces
$\tilde{\mathcal{D}}^u(\tilde\Sigma),\tilde{\mathcal{D}}^v(\tilde\Sigma)\subset\tilde{\mathcal{D}}(\tilde\Sigma)$,
such that for any $\psi\in\mathcal{D}(\mathcal{T})_0'$ there are
unique distributions
$a^{\psi}\in\tilde{\mathcal{D}}^u(\tilde\Sigma)'$,
$b^{\psi}\in\tilde{\mathcal{D}}^v(\tilde\Sigma)'$ with
$$
\psi(f)=a^{\psi}(u_\alpha(f))+b^{\psi}(v_\alpha(f))\mbox{,
}\forall f\in\mathcal{D}(\mathcal{T}).
$$
\end{proposition}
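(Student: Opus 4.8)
The plan is to reduce the assertion, by means of the distributional Cauchy problem of Proposition \ref{CauchyProbDistrib}, to a fibrewise (pointwise in $\alpha\in\tilde\Sigma$) linear-algebra statement which is exactly the distributional counterpart of the $2\times2$ inversion that produced the separating system in the proof of Proposition \ref{PropVarSep}. So the whole argument is: transport $\psi$ to the Fourier side of its Cauchy data, write the pairing against the mode solutions in those terms, and invert a $2\times2$ matrix whose determinant is the (non-vanishing) Wronskian of the modes.

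First I would transport everything to the Fourier side of the Cauchy data. To $\psi\in\mathcal{D}(\mathcal{T})_0'$ attach its unique Cauchy data $(u_0,u_1)\in\mathcal{D}(\mathcal{T}_0)'\oplus\mathcal{D}(\mathcal{T}_0)'$ from Proposition \ref{CauchyProbDistrib}; since $\mathcal{F}\colon\mathcal{D}(\mathcal{T}_0)\to\tilde{\mathcal{D}}(\tilde\Sigma)$ is an isomorphism, composition with $\mathcal{F}^{-1}$ gives $\hat u_0,\hat u_1\in\tilde{\mathcal{D}}(\tilde\Sigma)'$, and $\psi\mapsto(\hat u_0,\hat u_1)$ is a linear isomorphism $\mathcal{D}(\mathcal{T})_0'\to\tilde{\mathcal{D}}(\tilde\Sigma)'\oplus\tilde{\mathcal{D}}(\tilde\Sigma)'$. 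Next I would compute the two maps $\mathcal{U}\colon f\mapsto(\alpha\mapsto u_\alpha(f))$ and $\mathcal{V}\colon f\mapsto(\alpha\mapsto v_\alpha(f))$, where $u_\alpha(f)=\int_M d\mu_g\,\langle u_\alpha,f\rangle_g$. Using (Eq.\ref{CauchyExplicit}), the definition of $\sigma$, the identities $i_0^*u_\alpha=T_\alpha(0)\zeta_\alpha$ and $i_0^*\nabla_0u_\alpha=\bigl(\dot T_\alpha(0)+\sigma_\alpha(0)T_\alpha(0)\bigr)\zeta_\alpha$ (with $\sigma_\alpha(t)$ the $\Gamma_0(t)$-eigenvalue of $\zeta_\alpha$, locally constant on the components $\tilde\Sigma^i$), and finally Corollary \ref{TDPlancherel} with $I_\alpha(0)=1$, one finds that $\mathcal{U}f$ and $\mathcal{V}f$ are, pointwise in $\alpha$, fixed $\mathbb{C}$-linear combinations of $\mathcal{F}[i_0^*Ef]$ and $\mathcal{F}[i_0^*\nabla_0 Ef]$, with coefficients smooth functions of $\alpha$ built from $T_\alpha(0),\dot T_\alpha(0)$ (resp.\ from $\bar T_\alpha(0),\dot{\bar T}_\alpha(0)$). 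By (Eq.\ref{TMultCond}) — together with the fact, which follows from it, that $\dot T_\alpha(0)$ is likewise a multiplier, or directly from Proposition \ref{TMultPropAnal} in the analytic case — and by Remark \ref{MultI_alpha}, all these multiplications preserve $\tilde{\mathcal{D}}(\tilde\Sigma)$, so $\mathcal{U},\mathcal{V}\colon\mathcal{D}(\mathcal{T})\to\tilde{\mathcal{D}}(\tilde\Sigma)$ are continuous; I would then define $\tilde{\mathcal{D}}^u(\tilde\Sigma)=\overline{\mathcal{U}\mathcal{D}(\mathcal{T})}$ and $\tilde{\mathcal{D}}^v(\tilde\Sigma)=\overline{\mathcal{V}\mathcal{D}(\mathcal{T})}$, the closures taken in $\tilde{\mathcal{D}}(\tilde\Sigma)$.

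Inserting these formulas into the desired identity $\psi(f)=a^\psi(\mathcal{U}f)+b^\psi(\mathcal{V}f)$ and transposing the multiplications, and using that $(i_0^*Ef,i_0^*\nabla_0 Ef)$ runs over all of $\mathcal{D}(\mathcal{T}_0)\oplus\mathcal{D}(\mathcal{T}_0)$ (surjectivity of $E$ by Proposition \ref{ESurj} plus well-posedness of the smooth Cauchy problem), hence $\bigl(\mathcal{F}[i_0^*Ef],\mathcal{F}[i_0^*\nabla_0Ef]\bigr)$ runs freely over $\tilde{\mathcal{D}}(\tilde\Sigma)\oplus\tilde{\mathcal{D}}(\tilde\Sigma)$, the identity becomes equivalent to a $2\times2$ linear system
$$
\hat u_0=c^u_1\,a^\psi+c^v_1\,b^\psi,\qquad \hat u_1=c^u_0\,a^\psi+c^v_0\,b^\psi
$$
in $\tilde{\mathcal{D}}(\tilde\Sigma)'$, whose coefficient determinant is the conjugate Wronskian $\overline{W_\alpha}$ of the two independent mode solutions $T_\alpha,\bar T_\alpha$. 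By Remark \ref{2LinIndSolUnique} (Liouville's formula) $W_\alpha$ never vanishes, and after the standard rescaling of the modes — admissible because a multiplier rescaling preserves (Eq.\ref{TMultCond}) — one may take $W_\alpha$ constant, so that $1/W_\alpha$ is again a multiplier. Inverting the system then defines $a^\psi,b^\psi\in\tilde{\mathcal{D}}(\tilde\Sigma)'$, whose restrictions to $\tilde{\mathcal{D}}^u(\tilde\Sigma),\tilde{\mathcal{D}}^v(\tilde\Sigma)$ are the sought distributions: reading the equivalence backwards shows $f\mapsto a^\psi(\mathcal{U}f)+b^\psi(\mathcal{V}f)$ returns $\psi$, and this functional lies in $\mathcal{D}(\mathcal{T})_0'$ because $\mathcal{U}(Df)=\int_M\langle u_\alpha,Df\rangle_g=\int_M\langle Du_\alpha,f\rangle_g=0$ and likewise for $\mathcal{V}$. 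Uniqueness follows since each arrow $\psi\mapsto(u_0,u_1)\mapsto(\hat u_0,\hat u_1)\mapsto(a^\psi,b^\psi)$ is bijective; invertibility of the coefficient matrix as a matrix of multipliers lets one realise, for prescribed $\xi\in\tilde{\mathcal{D}}(\tilde\Sigma)$, a test function $f$ with $\mathcal{U}f=\xi$ and $\mathcal{V}f=0$ (and symmetrically), which forces any two admissible pairs to coincide on $\tilde{\mathcal{D}}^u(\tilde\Sigma)$ and $\tilde{\mathcal{D}}^v(\tilde\Sigma)$ respectively.

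The routine bookkeeping I would suppress is the precise tracking of complex conjugations and of the volume factors $d\mu_h$ versus $d\mu_{h_0}$ and the fibre metrics $\mathfrak{g}$ versus $\mathfrak{g}^0$; these only introduce the harmless smooth weights already isolated in Corollary \ref{TDPlancherel} and Remark \ref{MultI_alpha}. The genuinely delicate point, and the one I expect to be the main obstacle, is the functional-analytic control: that every transposed or multiplication map is continuous between the appropriate nuclear spaces, that the coefficient containing $\dot T_\alpha(0)$ really is a multiplier under the stated hypotheses, and above all that $\tilde{\mathcal{D}}^u(\tilde\Sigma),\tilde{\mathcal{D}}^v(\tilde\Sigma)$ are chosen so that $(a^\psi,b^\psi)\leftrightarrow\psi$ is an exact correspondence with no spurious kernel or cokernel. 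The loci where $T_\alpha(0)$ or $\bar T_\alpha(0)$ might vanish are where this needs care, to be handled by a density argument or simply by the convenient choice of modes ($T_\alpha(0)$ nowhere zero, $W_\alpha$ constant) that the analytic hypotheses of Proposition \ref{TMultPropAnal} make available.
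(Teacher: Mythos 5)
Your proof takes a genuinely different route from the paper's. The paper never tries to invert the Wronskian system: it instead decomposes smooth solutions in $Sol_0(\mathcal{T})$ directly via the time-dependent Fourier expansion (which gives $\phi=\phi^u+\phi^v$ with $\phi^\bullet(f)=a^\phi(u_\alpha(f))$ resp.\ $b^\phi(v_\alpha(f))$), then passes to $Sol(\mathcal{T})$ by a locally finite partition of unity on a Cauchy surface, and finally to $\mathcal{D}(\mathcal{T})_0'$ by mollifying the distributional Cauchy data with Proposition \ref{CauchyProbDistrib} and taking a weak limit. Your approach — transport everything to the Fourier image of the Cauchy data of $Ef$ via (Eq.\ref{CauchyExplicit}) and invert a pointwise $2\times 2$ system whose determinant is the (normalised) Wronskian — is cleaner and more explicit, but it trades the paper's density-and-limit machinery for stronger multiplier hypotheses.

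The gap, which you identify but underestimate, is precisely there. The stated hypothesis (Eq.\ref{TMultCond}) only makes each $T_\alpha(t)$ a pointwise-in-$t$ multiplier on $\tilde{\mathcal{D}}(\tilde\Sigma)$; it does not assert that $\dot T_\alpha(0)$ is a multiplier, nor that the Wronskian can be rescaled away inside the multiplier class. A family of multipliers $t\mapsto T_\alpha(t)$ need not be differentiable in the relevant topology, so ``the fact, which follows from it, that $\dot T_\alpha(0)$ is likewise a multiplier'' is simply false without further input; what does give it is the additional analytic/strict-uniformity hypotheses of Proposition \ref{TEstHol} and Proposition \ref{TMultPropAnal}, where $T_\alpha(0)=p^i(\omega(\alpha))$ and $\dot T_\alpha(0)=q^i(\omega(\alpha))$ are both assumed holomorphic of moderate growth. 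So your argument proves the proposition only under those strengthened hypotheses, whereas the paper's proves it under (Eq.\ref{TMultCond}) alone. Moreover, observe what your stronger hypotheses buy: if the $2\times 2$ matrix (and hence its inverse, once $W_\alpha$ is a nonvanishing multiplier) consists of multipliers, then your own ``realise $(\mathcal{U}f,\mathcal{V}f)=(\xi,0)$'' argument shows $\mathcal{U}$ and $\mathcal{V}$ are surjective, so $\tilde{\mathcal{D}}^u(\tilde\Sigma)=\tilde{\mathcal{D}}^v(\tilde\Sigma)=\tilde{\mathcal{D}}(\tilde\Sigma)$ and the careful introduction of proper closed subspaces becomes vacuous. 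The paper's proof is deliberately structured to handle exactly the case it flags — ``the maps $f\to u_\alpha(f)$ and $f\to v_\alpha(f)$ need not be surjective'' — which your hypotheses silently rule out. In short: your argument is correct and illuminating under the analytic assumptions of Proposition \ref{TEstHol}/\ref{TMultPropAnal}, but it is not a proof of the proposition under its stated hypotheses, and the subspaces it constructs are trivial precisely when it works.
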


\begin{proof}
Considered as distributions, the functions $u_\alpha$ act as
$$
u_\alpha(f)=\langle
u_\alpha,f\rangle_M=\int_\mathcal{I}dtT_\alpha(t)\langle\zeta_\alpha,f(t,.)\rangle_{\Sigma_t}=\int_\mathcal{I}dtT_\alpha(t)(\zeta_{-\alpha},\check\Gamma
f(t,.))_{\Sigma_t}=
$$
$$
=\int_\mathcal{I}dtT_\alpha(t)I_\alpha(t)\mathcal{F}[\check\Gamma
f(t,.)](-\alpha)\mbox{, }\forall f\in\mathcal{D}(\mathcal{T}).
$$
(Remember that $g_{00}=1$.) The action of $v_\alpha$ is similar.
By assumption (Eq.\ref{TMultCond}) and Remark \ref{MultI_alpha} we
find
$$
T_\alpha(t)I_\alpha(t)\mathcal{F}[\check\Gamma
f(t,.)](-\alpha)\in\tilde{\mathcal{D}}(\tilde\Sigma).
$$
Then by Proposition \ref{SmoothFourCompInt} we get
$u_\alpha(f)\in\tilde{\mathcal{D}}(\tilde\Sigma)$ (similarly for
$v_\alpha$).

In general, the maps $f\to u_\alpha(f)$ and $f\to v_\alpha(f)$
need not be surjective. Therefore we define
$$
\tilde{\mathcal{D}}^u(\tilde\Sigma)=u_\alpha(\mathcal{D}(\mathcal{T})).
$$
By continuity of the map $f\to u_\alpha(f)$ (which is easy to
establish), $\tilde{\mathcal{D}}^u(\tilde\Sigma)$ is a closed
subspace of $\tilde{\mathcal{D}}(\tilde\Sigma)$. Similarly we
define $\tilde{\mathcal{D}}^v(\tilde\Sigma)$.

Recall the mode expansion for arbitrary $\phi\in
Sol_0(\mathcal{T})$,
$$
\phi(t,\vec x)=\int_{\tilde\Sigma}d\mu(\alpha)\left[a^\phi_\alpha
u_\alpha(x)+b^\phi_\alpha v_\alpha(x)\right].
$$
Thus $Sol_0(\mathcal{T})$ can be written as a direct sum of linear
subspaces, $Sol_0(\mathcal{T})=Sol_0^u(\mathcal{T})\oplus
Sol_0^v(\mathcal{T})$, with
$$
Sol_0^u(\mathcal{T})=\{\phi\in Sol_0(\mathcal{T})\mbox{:
}b^\phi=0\}\mbox{, }Sol_0^v(\mathcal{T})=\{\phi\in
Sol_0(\mathcal{T})\mbox{: }a^\phi=0\},
$$
and we will write $\phi=\phi^u+\phi^v$. Regarding as a
distribution in $\mathcal{D}(\mathcal{T})'$, $\phi^u$ and $\phi^v$
act as
$$
\phi^u(f)=\int_{\tilde\Sigma}d\nu(\alpha)a^\phi_\alpha
u_\alpha(f)\mbox{,
}\phi^v(f)=\int_{\tilde\Sigma}d\nu(\alpha)b^\phi_\alpha
v_\alpha(f).
$$
The functions $a^\phi_\alpha,b^\phi_\alpha$ can be regarded as
distributions $a^\phi\in\tilde{\mathcal{D}}^u(\tilde\Sigma)'$,
$b^\phi\in\tilde{\mathcal{D}}^v(\tilde\Sigma)'$, and we can write
\begin{eqnarray}
\phi(f)=\phi^u(f)+\phi^v(f)=a^\phi(u_\alpha(f))+b^\phi(v_\alpha(f)).\label{Sol0ModDecomp}
\end{eqnarray}
Now let $\varphi\in Sol(\mathcal{T})$ be a solution, which does
not necessarily have $supp\{\varphi\}\cap\Sigma_t$ compact. Its
Cauchy data are
$$
(i^*_t(\varphi),i^*_t(\nabla_t\varphi))=(\varphi_0,\varphi_1)\in\mathcal{E}(\mathcal{T}_t)\oplus\mathcal{E}(\mathcal{T}_t).
$$
Choosing a countable (compactly finite) partition of unity on
$\Sigma$ we can write
$$
(\varphi_0,\varphi_1)=\sum_{i=1}^\infty(\phi^i_0,\phi^i_1)\mbox{,
}(\phi^i_0,\phi^i_1)\in\mathcal{D}(\mathcal{T}_t)\oplus\mathcal{D}(\mathcal{T}_t),
$$
where the sum involves finite items on any compact region
$U\in\Sigma$. Now for each $i$ we have
$$
\phi^i=i^{-1}_t(\phi^i_0,\phi^i_1)\in Sol_0(\mathcal{T}),
$$
thus
$$
\phi^i=\phi^{i,u}+\phi^{i,v}=i^{-1}_t(\phi^{i,u}_0,\phi^{i,u}_1)+i^{-1}_t(\phi^{i,v}_0,\phi^{i,v}_1)\mbox{,
}\phi^{i,u}\in Sol^u_0(\mathcal{T})\mbox{, }\phi^{i,v}\in
Sol^v_0(\mathcal{T}).
$$
Set
$$
\varphi^u=\sum_{i=1}^\infty
i^{-1}_t(\phi^{i,u}_0,\phi^{i,u}_1)=\sum_{i=1}^\infty\phi^{i,u},
$$
and
$$
\varphi^v=\sum_{i=1}^\infty
i^{-1}_t(\phi^{i,v}_0,\phi^{i,v}_1)=\sum_{i=1}^\infty\phi^{i,v},
$$
where the sums converge in compact topology. (This can be seen as
follows. The intersection of the causal cone of any compact region
with a Cauchy surface is a compact surface, and therefore only
finite summands survive.) But we have
$$
\phi^{i,u}(f)=a^{\phi^i}(u_\alpha(f))\mbox{,
}\phi^{i,v}(f)=b^{\phi^i}(v_\alpha(f))
$$
for some distributions $a^{\phi^i}$ and $b^{\phi^i}$. Thus we
obtain
$$
\varphi^u(f)=\sum_{i=1}^\infty\phi^{i,u}(f)=\sum_{i=1}^\infty
a^{\phi^i}(u_\alpha(f))
$$
and
$$
\varphi^v(f)=\sum_{i=1}^\infty\phi^{i,v}(f)=\sum_{i=1}^\infty
b^{\phi^i}(v_\alpha(f)).
$$
This convergence defines distributions
$$
a^\varphi=\sum_{i=1}^\infty
a^{\phi^i}\in\tilde{\mathcal{D}}^u(\tilde\Sigma)'\mbox{,
}b^\varphi=\sum_{i=1}^\infty
b^{\phi^i}\in\tilde{\mathcal{D}}^v(\tilde\Sigma)',
$$
such that
$$
\varphi^u(f)=a^\varphi(u_\alpha(f))\mbox{,
}\varphi^v(f)=b^\varphi(v_\alpha(f))\mbox{,
}\varphi=\varphi^u+\varphi^v\mbox{, }\forall
f\in\mathcal{D}(\mathcal{T}),
$$
and thus $Sol(\mathcal{T})=Sol^u(\mathcal{T})\oplus
Sol^v(\mathcal{T})$, where
$$
Sol^u(\mathcal{T})=\{\varphi\in Sol(\mathcal{T})\mbox{:
}b^\varphi=0\}\mbox{, }Sol^v(\mathcal{T})=\{\varphi\in
Sol(\mathcal{T})\mbox{: }a^\varphi=0\}.
$$

Now let $\psi\in\mathcal{D}(\mathcal{T})_0'$ be a weak solution,
and $\{\chi_m\}$ a usual mollifier on $\Sigma_t$. Define the
mollifications $\heartsuit_m\psi\in Sol(\mathcal{T})$ by
$$
\heartsuit_m\psi=\jmath(\chi_m\psi_0,\chi_m\psi_1),
$$
where $\psi=\jmath(\psi_0,\psi_1)$ by Proposition
\ref{CauchyProbDistrib}. Then it is easy to see that
$\heartsuit_m\psi\to\psi$ in $\mathcal{D}(\mathcal{T})'$. That
$\heartsuit_m\psi\in Sol(\mathcal{T})$ it follows
$$
\heartsuit_m\psi=(\heartsuit_m\psi)^u+(\heartsuit_m\psi)^v\mbox{,
}(\heartsuit_m\psi)^\bullet\in Sol^\bullet(\mathcal{T}).
$$
The disjointness $Sol^u(\mathcal{T})\cap Sol^v(\mathcal{T})=0$
implies that $(\heartsuit_m\psi)^u\to\psi^u$ and
$(\heartsuit_m\psi)^v\to\psi^v$ with some distributions
$\psi^u\in\overline{Sol^u(\mathcal{T})}$,
$\psi^v\in\overline{Sol^v(\mathcal{T})}$, such that
$\psi=\psi^u+\psi^v$. We denote
$$
\psi^u(f)=\lim_{m\to\infty}(\heartsuit_m\psi)^u(f)=\lim_{m\to\infty}a^{\psi_m}(u_\alpha(f))\doteq
a^{\psi}(u_\alpha(f)),
$$
$$
\psi^v(f)=\lim_{m\to\infty}(\heartsuit_m\psi)^v(f)=\lim_{m\to\infty}b^{\psi_m}(v_\alpha(f))\doteq
b^{\psi}(v_\alpha(f)),
$$
for some distributions $a^{\psi}$ and $b^{\psi}$. Finally we
arrive at
$$
\psi(f)=a^{\psi}(u_\alpha(f))+b^{\psi}(v_\alpha(f))\mbox{,
}\forall f\in\mathcal{D}(\mathcal{T}).
$$
The map
$$
\mathcal{D}(\mathcal{T})_0'\ni\psi\to(a^{\psi},b^{\psi})\in\tilde{\mathcal{D}}^u(\tilde\Sigma)'\oplus\tilde{\mathcal{D}}^v(\tilde\Sigma)'
$$
is a bijection by construction.
\end{proof}

\subsection{The propagator}

In this section we will find the explicit form of the propagator
$E$ in terms of the mode decomposition. Of course, Green's
functions can be calculated using the techniques of inverse
operators. But our approach will be more concordant to the spirit
of this work and will at the same time demonstrate the usefulness
of the mode decomposition in general.

To use the mode decomposition for weak solutions we assume that at
least the condition (iv) of the conventional Fourier transform
holds, and that the assumptions of Proposition \ref{WeakSolDecomp}
are satisfied. Choose mode solutions to be such that
$T_\alpha(0)=T_{-\alpha}(0)$ and $\dot T_\alpha(0)=\dot
T_{-\alpha}(0)$. Then because $\alpha\to-\alpha$ preserves both
$\lambda_\alpha(t)$ and the component $\tilde\Sigma^i$, we have
the same mode equations for $T_\alpha$ and $T_{-\alpha}$, hence
everywhere $T_\alpha(t)=T_{-\alpha}(t)$.

The function
$$
\det W[T_\alpha,\bar T_\alpha](t)=I_\alpha(t)\left[\dot
T_\alpha(t)\bar
T_\alpha(t)-T_\alpha(t)\dot{\bar{T}}_\alpha(t)\right]\in
C^\infty(\mathcal{I},i\cdot \mathbb{R})
$$
is the Wronskian of two independent solutions $T_\alpha$ and $\bar
T_\alpha$ and is therefore an imaginary constant. For convenience
we appoint once and forever to consider only the modes normalized
by
\begin{eqnarray}
\dot T_\alpha(t)\bar
T_\alpha(t)-T_\alpha(t)\dot{\bar{T}}_\alpha(t)=i\cdot
I^{-1}_\alpha(t).\label{ModeTNorm}
\end{eqnarray}
It can be seen that this condition is consistent with our previous
assumptions for the modes $T_\alpha$.\index{Mode normalization}

We remark that the Krein space involution $\check\Gamma$ commutes
with the connection components $\Gamma_i$. Indeed, by definition
$\check\Gamma=P^+-P^-$, where $P^\pm$ are the projections onto the
subspaces of positive/negative definiteness of the metric
$\langle,\rangle_g$. Let $\{e_i\}$ be a pseudo-orthonormal moving
frame, i.e., $\langle e_i,e_i\rangle_g=\pm1$. The value of each
$\langle e_i,e_i\rangle_g$ is preserved under $\nabla$, and
therefore $e_i$ remains in the same eigenspace of $\check\Gamma$,
although in our main frame $e_i$ experiences gradient,
$$
\nabla e_i=\sum_{j=1}^4\sum_{k=1}^n\Gamma^{k}_{ji}dx^j\otimes e_k.
$$
Hence $\check\Gamma$ commutes with all $\Gamma_i$. We have that
$$
\langle u,v\rangle_{\Sigma_t}=(\check\Gamma\bar
u,v)_{\Sigma_t}=\int_{\tilde\Sigma}d\mu(\alpha)s(\alpha)I_\alpha(t)\tilde
u(-\alpha)\tilde v(\alpha)\mbox{, }\forall
u\in\mathcal{E}(\mathcal{T}_t)\mbox{,
}v\in\mathcal{D}(\mathcal{T}_t),
$$
where $s(\alpha)$ is the Fourier image of the Krein involution
$\check\Gamma$, which due to the remark above satisfies
$s(\tilde\Sigma^i)=\{+1,-1\}$, i.e., is constant on each component
$\tilde\Sigma^i$. We have used the fact that
$\bar{\tilde{\bar{u}}}(\alpha)=\tilde u(-\alpha)$ which follows
from the condition (iv) of the conventional Fourier transform.

Now the propagator is the unique operator
$E:\mathcal{D}(\mathcal{T})\to Sol_0(\mathcal{T})$ which satisfies
\begin{eqnarray}
v(f)=\langle v(t;.),\dot E[f](t;.)\rangle_{\Sigma_t}-\langle\dot
v(t;.),E[f](t;.)\rangle_{\Sigma_t},\nonumber\\
\forall v\in Sol_0(\mathcal{T})\mbox{,
}f\in\mathcal{D}(\mathcal{T})\mbox{,
}t\in\mathcal{I}.\label{PropMainEq}
\end{eqnarray}
As $v\in Sol_0(\mathcal{T})$ we can write
\begin{eqnarray}
v(x)=\int_{\tilde\Sigma}d\mu(\alpha)a^v(\alpha)u_\alpha(x)+\int_{\tilde\Sigma}d\mu(\alpha)b^v(\alpha)v_\alpha(x),\label{vModDec}
\end{eqnarray}
and
$$
\widetilde{v(t;.)}(\alpha)=a^v(\alpha)T_\alpha(t)+b^v(\alpha)\bar
T_\alpha(t),
$$
$$
\widetilde{\dot{v}(t;.)}(\alpha)=a^v(\alpha)\dot
T_\alpha(t)+b^v(\alpha)\dot{\bar{T}}_\alpha(t).
$$
Similarly for $E[f]\in Sol_0(\mathcal{T})$,
$$
E[f](x)=\int_{\tilde\Sigma}d\mu(\alpha)a^E[f](\alpha)u_\alpha(x)+\int_{\tilde\Sigma}d\mu(\alpha)b^E[f](\alpha)v_\alpha(x),
$$
$$
\widetilde{E[f](t;.)}(\alpha)=a^E[f](\alpha)T_\alpha(t)+b^E[f](\alpha)\bar
T_\alpha(t),
$$
and
$$
\widetilde{\dot{E}[f](t;.)}(\alpha)=a^E[f](\alpha)\dot
T_\alpha(t)+b^E[f](\alpha)\dot{\bar{T}}_\alpha(t),
$$
with some distribution fields $a^E[f](\alpha)$ and
$b^E[f](\alpha)$. Using all this we compute
$$
\langle v(t;.),\dot E[f](t;.)\rangle_{\Sigma_t}-\langle\dot
v(t;.),E[f](t;.)\rangle_{\Sigma_t}=
$$
$$
=-\int_{\tilde\Sigma}d\mu(\alpha)s(\alpha)I_\alpha(t)\left[\dot
T_\alpha(t)\bar
T_\alpha(t)-T_\alpha(t)\dot{\bar{T}}_\alpha(t)\right]\left[a^v(-\alpha)b^E[f](\alpha)-b^v(-\alpha)a^E[f](\alpha)\right]=
$$
by normalization (Eq.\ref{ModeTNorm})
\begin{eqnarray}
=-i\int_{\tilde\Sigma}d\mu(\alpha)s(\alpha)I_\alpha(t)\left[a^v(-\alpha)b^E[f](\alpha)-b^v(-\alpha)a^E[f](\alpha)\right].\label{ECompIntRes1}
\end{eqnarray}
On the other hand, we know that $ED=0$, thus $a^E[f](\alpha)$ and
$b^E[f](\alpha)$ are weak solutions of the field equation and can
be mode decomposed as
\begin{eqnarray}
a^E[f](\alpha)=a^1_\alpha(u_\beta(f))+a^2_\alpha(v_\beta(f)),\nonumber\\
b^E[f](\alpha)=b^1_\alpha(u_\beta(f))+b^2_\alpha(v_\beta(f)).\label{ECompIntRes2}
\end{eqnarray}
By (Eq.\ref{vModDec}) we have
\begin{eqnarray}
v(f)=\int_{\tilde\Sigma}d\mu(\alpha)a^v(\alpha)u_\alpha(f)+\int_{\tilde\Sigma}d\mu(\alpha)b^v(\alpha)v_\alpha(f).\label{ECompIntRes3}
\end{eqnarray}
Inserting (Eq.\ref{ECompIntRes1}), (Eq.\ref{ECompIntRes2}) and
(Eq.\ref{ECompIntRes3}) into (Eq.\ref{PropMainEq}) we obtain
$$
a^1_\alpha=b^2_\alpha=0\mbox{, }a^2_\alpha=-b^1_\alpha=i\cdot
s(\alpha)\delta(\beta-\alpha).
$$
And our final formula is
$$
E[f](x)=i\int_{\tilde\Sigma}d\mu(\alpha)s(\alpha)\left[v_{-\alpha}(f)u_\alpha(x)-u_{-\alpha}(f)v_\alpha(x)\right],
$$
which is in full accord with the result obtained by
\cite{Lueders_Roberts_1990} for scalar fields on FRW spacetimes.

\section{Aspects of Harmonic Analysis in Homogeneous Spacetimes}

\subsection{Spatially homogeneous cosmological models}

The main goal of the current work is to refurbish the mathematical
framework of quantum field theory on classical cosmological
spacetimes, in general, and to advance towards a satisfactory
rigorous description of cosmological particle creation in states
of low energy for hyperbolic fields, in particular. The latter
would be an extension of results  obtained in
\cite{Degner_Verch_2010} for the Klein-Gordon field on specific
FRW models to more general situations. Thus although some results
were and will be obtained under abstract general assumptions, our
attention is concentrated at the geometrical setup of most common
cosmological models. Supported by observations of the universe at
large scale, cosmology considers mainly spatially homogeneous, or
in addition also isotropic, spacetimes. A condensed account of
cosmological arguments and their geometrical implications can be
found, for instance, in \cite{pittphilsci1507}. The essence of
these geometrical restrictions is mathematically expressed by
imposing the existence of a sufficiently rich system of symmetries
(more precisely, a group of spatial isometries) on the spacetime.
Extensive treatments of all possible isometry groups and  related
questions can be found in \cite{pittphilsci1507}, \cite{Petrov59},
\cite{StephaniKramerMacCallumHoenselaersHerlt200305}. An
introduction to the generalities of harmonic analysis on vector
bundles is given in \cite{Camporesi:1990wm}. In this section we
will try to deductively introduce our geometrical setup with the
help of the information in the above mentioned references.

{\bf Foliation by equal time Cauchy hypersurfaces.} Recall that we
are working with a four dimensional globally hyperbolic Lorentzian
manifold $(M,g)$ on which a global smooth time function and an
atlas can be chosen following \cite{Bernal_Sanchez_2005} such that
$M$ is foliated by three dimensional spacelike equal-time smooth
Cauchy hypersurfaces and
$$
ds^2=g_{00}dt^2-d\sigma^2,
$$
where $d\sigma^2$ is the line element on any of those Cauchy
surfaces being Riemannian submanifolds.

{\bf The structure group.} Any vector bundle $\mathcal{T}$ can be
considered as associated to its frame bundle
$\mathcal{P}_\mathcal{T}$ with structure group $GL(n)$. If we want
the fiberwise transformations to respect the fiber metric, then we
have to restrict the principal bundle to the orthogonal frame
bundle. All fibers $V_p$ with their respective non-degenerate
pseudo-Riemannian structures $\mathfrak{g}_p$ are isomorphic, and
their generalized orthogonal groups $O(\mathfrak{g}_p)$ (i.e.,
groups of invertible endomorphisms of $V_p$ preserving
$\mathfrak{g}_p$) are isomorphic to the generalized Lorentz group
$O(\pm_\mathfrak{g})$, where $\pm_\mathfrak{g}$ in this context
will be understood as the signature of $\mathfrak{g}$. But the
same vector bundle $\mathcal{T}$ can be associated also to another
principal bundle (which we again denote by
$\mathcal{P}_\mathcal{T}$) with structure group $H$ (say, for
field theoretical reasons). Then we have a representation $r$ of
$H$ on $V$. If $r$ also respects the metric, then $r(H)\in
O(\pm_\mathfrak{g})$, so $H$ is homomorphic to
$O(\pm_\mathfrak{g})$. For instance, $H=SO^+(\pm_\mathfrak{g})$
(tensor bundle) or $H=Spin^+(\pm_\mathfrak{g})$ (spinor bundle).

{\bf Isometries.} Let us start with reminding some definitions. An
{\it isometry} of the spacetime $(M,g)$ is a diffeomorphism
$\psi:M\to M$ such that $\psi^*g=g$ holds on $M$, where $\psi^*$
is the pullback of $\psi$. If $\psi':M\to M$ is another isometry,
then obviously such is also their superposition $\psi\circ\psi'$.
With the superposition as product, isometries thus constitute an
abstract group, which we will denote $\mbox{\bf Iso}(M)$. If
$\mathcal{T}\to M$ is a the vector bundle over $M$ as defined
previously, then an {\it isometry} of the vector bundle
$\mathcal{T}$ is a morphism $\Psi:\mathcal{T}\to\mathcal{T}$
covering an isometry of the base, $\pi\circ\Psi\circ\pi^{-1}\in
\mbox{\bf Iso}(M)$, such that $\Psi^*\mathfrak{g}=\mathfrak{g}$
and $\Psi^*\nabla=\nabla$ (or more precisely $\Psi^*D=D$ when a
normal hyperbolic field operator $D$ is specified), where $\Psi^*$
denotes pullback maps, $\mathfrak{g}$ is the pseudo-Riemannian
fiber metric, and $\nabla$ is the metric connection. Again via
superposition, the isometries of the bundle $\mathcal{T}$ comprise
an abstract group $\mbox{\bf Iso}(\mathcal{T})$.

The map $\mbox{\bf
Iso}(\mathcal{T})\ni\Psi\to\pi\circ\Psi\circ\pi^{-1}\in\mbox{\bf
Iso}(M)$ gives a homomorphism of $\mbox{\bf Iso}(\mathcal{T})$
into $\mbox{\bf Iso}(M)$. The image of this homomorphosm is a
subgroup of $\mbox{\bf Iso}(M)$ and will be denoted by $\mbox{\bf
Iso}^\mathcal{T}(M)\subset\mbox{\bf Iso}(M)$, and its kernel is a
normal subgroup of $\mbox{\bf Iso}(\mathcal{T})$. This kernel
$\mbox{\bf Iso}(\mathcal{T})/\mbox{\bf Iso}^\mathcal{T}(M)$
consists of isometries of the bundle $\mathcal{T}$ covering the
identity map of $M$. These are precisely the smooth sections in
the principle bundle
$\mathcal{P}_\mathcal{T}\xrightarrow{loc}M\times H$ of
$\mathcal{T}$, i.e., $\mbox{\bf Iso}(\mathcal{T})/\mbox{\bf
Iso}^\mathcal{T}(M)=C^\infty(\mathcal{P}_\mathcal{T})$. The group
multiplication is given by the pointwise multiplication of
sections.

{\bf Homogeneous bundle structure.} If the sections in the bundle
$\mathcal{T}$ are going to represent physical fields, than one
should have a concrete picture of how they transform under the
diffeomorphisms of the spacetime $M$. In case of the tensor bundle
this picture is automatically encoded in the pullback map. An
abstract vector bundle does not have such a structure by itself.
Thus a physical field theory has to specify a homomorphism
$\rho:\mbox{Diff}(M)\to C^\infty(\mathcal{P}_\mathcal{T})$. For
the tangent bundle $\rho(\psi)=d\psi$, $\psi\in\mbox{Diff}(M)$.
When considering arbitrary diffeomorphisms, then the structure
group should be $GL(n)$ rather than a smaller $H$. But if we
restrict $\rho$ to $\rho:\mbox{\bf Iso}^\mathcal{T}(M)\to
C^\infty(\mathcal{P}_\mathcal{T})$, then $H$ can be chosen. For
brevity denote $G=\mbox{\bf Iso}^\mathcal{T}(M)$. We have the
injection
$$
G\ni g\to g\times\rho(g)\in\mbox{\bf Iso}(\mathcal{T}),
$$
which gives sense to the left action of $G$ on $\mathcal{T}$ by
isometries.

The abstract group of isometries of a pseudo-Riemannian manifold
of dimension $m$ is given the compact open topology, in which it
becomes a Lie group of dimension at most $n(n+1)/2$
\cite{Helgason197901}. It can be further shown, that the compact
open topology in this case is equivalent to the pointwise
convergence topology of isometries. Thus we automatically obtain a
Lie group structure on $\mbox{\bf Iso}(M)$. Then
$G\subset\mbox{\bf Iso}(M)$ is a topological subgroup defined by
$$
G=\{\psi\in\mbox{\bf Iso}(M)\mbox{:
}\left(\psi\times\rho(\psi)\right)^*\mathfrak{g}=\mathfrak{g}\mbox{,
}\left(\psi\times\rho(\psi)\right)^*D=D\}.
$$
If $\rho$ is a continuous homomorphism, then all the operations in
the equations
$$
\left(\psi\times\rho(\psi)\right)^*\mathfrak{g}=\mathfrak{g}\mbox{,
}\left(\psi\times\rho(\psi)\right)^*D=D
$$
are continuous, and therefore the subspace $G$ of $\mbox{\bf
Iso}(M)$ defined by this equation is a closed topological
subspace. But then by Cartan's theorem $G$ is actually a Lie
subgroup, as it is a closed topological subgroup of the Lie group
$\mbox{\bf Iso}(M)$. Thus we have the structure of a
$G$-homogeneous vector bundle $\mathcal{T}$.

{\bf Spatially homogeneous bundle.} The bundle $\mathcal{T}$ will
be called {\it spatially homogeneous} if the orbits of $\mbox{\bf
Iso}^\mathcal{T}(M)$ are three dimensional smooth spacelike
hypersurfaces which foliate $M$. (Maybe it is worth mentioning
here that everywhere in this work we consider only connected
spacetimes $M$.) By Theorem 8.16 in
\cite{StephaniKramerMacCallumHoenselaersHerlt200305} there exists
a parametrization of these orbits by the affine parameter of the
family of normal geodesics, such that the metric takes the form
$$
ds^2=dt^2-d\sigma^2.
$$
On the other hand, our original foliation by equal time Cauchy
surfaces due to Theorem 1.1 in \cite{Bernal_Sanchez_2005} also
yielded such a metric form. We assume that the time function can
be chosen such that equal time Cauchy surfaces are the orbits of
$\mbox{\bf Iso}^\mathcal{T}(M)$ (probably this can be shown to be
true in general). We note that due to the transitive action of $G$
on $\Sigma_t$ for every $t$, it holds $G\subset\mbox{\bf
Iso}^{\mathcal{T}_t}(\Sigma_t)$. We did not write $G=\mbox{\bf
Iso}^{\mathcal{T}_t}(\Sigma_t)$ because it is possible that for
some $t\neq t'\in\mathcal{I}$, $\mbox{\bf
Iso}^{\mathcal{T}_t}(\Sigma_t)\neq\mbox{\bf
Iso}^{\mathcal{T}_{t'}}(\Sigma_{t'})$, i.e., for some time
instances the time slice may be more symmetric than usual. We will
concentrate on $G$, which is the maximal guaranteed amount of
symmetry which is present at any time. Thus we see that
$\mathcal{T}_t$ also has the structure of a $G$-homogeneous vector
bundle.

Consider the principle bundle $\mathcal{P}_{\mathcal{T}_t}$ of
$\mathcal{T}_t$, which is a subbundle of
$\mathcal{P}_\mathcal{T}$. The smooth left action of $G$ on
$\mathcal{T}_t$ gives a smooth left action of $G$ on
$\mathcal{P}_{\mathcal{T}_t}$ as well. This action allows one to
construct a global smooth section in
$\mathcal{P}_{\mathcal{T}_t}$, whence it follows that the bundle
$\mathcal{T}_t$ is trivial. Because
$M\sim\Sigma_t\times\mathcal{I}$, the whole bundle $\mathcal{T}$
is also trivial. Thus spatially homogeneous vector bundles over
$M$ are necessarily trivial.

The requirement that the field operator $D$ is $G$-invariant
implies that the function $m^\star(x)$ is in fact a function of
time only.

{\bf Homogeneous space structure.} Now let $\mbox{\bf
StabIso}^\mathcal{T}(p)\subset G$ be the stabilizer of $G$ at some
fixed point $p\in M$. Then $\mbox{\bf StabIso}^\mathcal{T}(p)$ is
a closed Lie subgroup by  Cartan's theorem. That for all $p\in M$,
the groups $\mbox{\bf StabIso}^\mathcal{T}(p)$ are isomorphic,
then we denote them all by $\mbox{\bf StabIso}^\mathcal{T}(M)$. In
this case the orbits $\Sigma_t$ of $G$ are diffeomorphic to the
homogeneous space $G/\mbox{\bf
StabIso}^\mathcal{T}(M)\doteq\Sigma$. Denote $O=\mbox{\bf
StabIso}^\mathcal{T}(M)^+$, the identity component. Then
$\Gamma=\mbox{\bf StabIso}^\mathcal{T}(M)/O$ is a discrete normal
subgroup of $G$. If the homogeneous space $\Sigma$ is itself a Lie
subgroup of $G$, then it acts on each $\Sigma_t$ simply
transitively.

{\bf The four dimensional reality.} As already mentioned, the
isometry group $\mbox{\bf Iso}(M)$ of the $n=4$ dimensional
spacetime $M$ is a Lie group of dimension at most $n(n+1)/2=10$.
Thus in principle one can construct all real Lie algebras
$\mathcal{G}$ of dimension up to 10, their corresponding connected
simply connected Lie groups $G$, then all discrete normal
subgroups $\Gamma$ of such $G$ etc., thereby exhausting all
possible isometry groups of $M$. This heavy task have been done by
Petrov et al \cite{Petrov59} and others
\cite{StephaniKramerMacCallumHoenselaersHerlt200305}, and all the
possibilities are listed in tables. It turned out that only the
Minkowski space has isometry group of maximal dimension 10, which
is the Poincar\'e group. Among all possibilities we are interested
in those whose orbits are $\Sigma_t$. Thus the dimension of $G$ is
at least three. There are three possibilities of six dimensional
such isometry groups, which correspond to FRW spacetimes. A number
of possibilities are available with four dimensional groups, which
correspond to the LRS spacetimes. And finally there are nine
classes of three dimensional real Lie groups $Bi(N)$ (called
Bianchi groups), which together with their factors $Bi(N)/\Gamma$
by discrete subgroups $\Gamma$ represent the isometry groups of
the spatially homogeneous spacetimes. It turned out further, that
in all these cases besides one (the so called Kantowski-Sachs
model) the isometry group is the semidirect product
$G=\Sigma\rtimes O$. In this case we will call $\mathcal{T}_t$ a
{\it semidirect homogeneous vector bundle}. In particular, for six
dimensional FRW groups, four dimensional LRS groups and three
dimensional Bianchi groups $O=SO(3)$, $SO(2)$ and $\{1\}$,
respectively. The normal subgroups $\Sigma$ are nothing else than
$Bi(N)/\Gamma$.

In next sections we will work on the semidirect homogeneous vector
bundles. After establishing the necessary mathematical framework,
we will obtain results concerning the structure of $G$-invariant
homogeneous bi-distributions.

\subsection{On harmonic analysis in semidirect homogeneous vector
bundles}

In this section we will collect information on harmonic analysis
in $G$-homogeneous vector bundles $\mathcal{T}\to G/O$ where
$G=\Sigma\rtimes O$ which will be useful later in the work. This
does not pretend to be self-contained or systematic; quite the
contrary, we will introduce mainly what we were not able to find
in the literature. Otherwise references will be provided.

{\bf Semidirect homogeneous vector bundles.} Let $G=\Sigma\rtimes
O$, where $O$ is a compact connected type I Lie subgroup, and
$\Sigma$ a connected normal type I Lie subgroup. Moreover, we
demand that the modular function of $\Sigma$ has a non-trivial
kernel, so that the representation theories of both $\Sigma$ and
$G$ are well under control by Theorem 7.50 of
\cite{Folland199502}. We note that this is the case for all
Bianchi groups which are in fact the only candidates for $\Sigma$
in our context. Let $\Sigma=G/O$ have a Riemannian structure $h$
which is invariant under the left action of $G$. Let further
$\mathcal{T}\to \Sigma$ be an $n$-dimensional (real or complex)
vector bundle with standard fiber $V$ and a pseudo-Riemannian
fiber metric $\mathfrak{g}$. Let there be a smooth left action of
$G$ on $\mathcal{T}$ covering the left multiplication of $G$ on
the base, such that the fiber metric is invariant under that
action. Then we will call $\mathcal{T}$ a {\it semidirect
$G$-homogeneous vector bundle}. If we choose an orthonormal frame
$\{X_i\}$ of $T^*_1\Sigma$ (or $\{Y_i\}$ of $\mathcal{T}|_1$), and
drag it throughout $\Sigma$ using the transitive left action of
$G$, we will obtain $G$-invariant global smooth frame $\{X_i\}$ in
$T^*\Sigma$ (similarly, $\{Y_i\}$ in $\mathcal{T}$). Thus both
$T^*\Sigma$ and $\mathcal{T}$ are trivial bundles. Associated to
the Riemannian structure $h$ there is a Laplace operator $\Delta$
acting on sections $f\in C^\infty(\mathcal{T})$.

{\bf The regular and quasi-regular representations for the line
bundle.}\index{Quasi-regular representation} Suppose $\mathcal{T}$
from above is a line bundle, $n=1$. The left regular
representation $L_g$ of $G$ on $C(G)$ acts as
$$
L_gf(x)=f(g^{-1}x)\mbox{, }\forall g,x\in G.
$$
Because the Riemannian structure is $G$-invariant, the metric
measure $dx$ is a left Haar measure on $G$, and hence $L_g$ is a
unitary representation on $L^2(G)$.

Now any point $x\in G$ can be uniquely written as $x=x_\Sigma
x_O$, where $x_\Sigma\in \Sigma$ and $x_O\in O$. Let $dx_\Sigma$
be the metric driven left $G$-invariant measure on $\Sigma$, and
$dx_O$ the Lebesgue measure on $O$ normalized to $|O|=1$. Then
$dx=dx_\Sigma dx_O$ gives a left Haar measure on $G$. Functions
$f$ on $\Sigma$ are identified with their right $O$-invariant
extensions to $G$, i.e., $f(xo)=f(x)=f(xO)$, for any $x\in G$,
$o\in O$. Thus $C(\Sigma)\in C(G)$ (similarly $L^2(\Sigma)\in
L^2(G), etc.$) and we may consider the restriction $U_g$ of the
left regular representation $L_g$ on $C(G)$ to $C(\Sigma)$. Its
action will be given by
$$
U_gf(x_\Sigma O)=f(g^{-1}x_\Sigma O)\mbox{, }\forall x_\Sigma\in
\Sigma\mbox{, }g\in G.
$$
The representation $U_g$ of $G$ is the left quasi-regular
representation, and it is nothing else but the induced
representation $\mbox{Ind}_O^G1$. Note that for $O=\{1\}$ we
simply have $G=\Sigma$ and $L_g=U_g$.

Neither $L_g$ nor $U_g$ need to be irreducible. The central
decomposition of $L_g$ is
$$
L_g=\int^\oplus_{\hat G} d\nu(\pi)L_g(\pi),
$$
where $\nu(\pi)$ is the Plancherel measure and
$L_g(\pi)=\pi\otimes1$ is the primary representation composed of
$mult(\pi,L_g)=\dim\bar\pi\in[1,\infty]$ copies of $\pi$
\cite{Folland199502}. The central decomposition of $U_g$ will be
$$
U_g=\int^\oplus_{\hat G_\Sigma} d\mu(\pi)U_g(\pi),
$$
where $\hat G_\Sigma\subset\hat G$, $d\mu$ is the spectral measure
of $U_g$ and for $\mu$-almost all $\pi$, $U_g(\pi)$ is a multiple
of $\pi$ (multiplicities $mult(\pi,U_g)$ and the measure
$d\mu(\pi)$ need to be determined). The corresponding Hilbert
space decompositions are
$$
L^2(G)=\int_{\hat G}^\oplus
d\nu(\pi)\mathcal{H}_\pi\otimes\mathcal{H}_{\bar\pi}
$$
and
$$
L^2(\Sigma)=\int_{\hat G_\Sigma}^\oplus d\mu(\pi)\mathcal{H}(\pi),
$$
where
$\mathcal{H}(\pi)=\mathcal{H}_\pi\otimes\mathbb{C}^{mult(\pi,U_g)}\subset\mathcal{H}_\pi\otimes\mathcal{H}_{\bar\pi}$.
Here $\mathbb{C}^{mult(\pi,U_g)}$ symbolizes some Hilbert space of
dimension $mult(\pi,U_g)$ which is finite or infinite.

In the following we will deal with $U_g$ keeping in mind that in
case $G=\Sigma$ everything reduces to $L_g$.

{\bf The operator $\Pi_\pi$.} Consider for any $\pi\in\hat G$ the
bounded operator
$$
\Pi_\pi=\int_Odo\pi(o).
$$
Then $\Pi_\pi$ is self adjoint,
$$
\Pi_\pi^*=\int_Odo\pi(o)^*=\int_Odo\pi(o^{-1})=\Pi_\pi.
$$
Moreover, because $O$ is unimodular, we have
$$
\pi(o)\Pi_\pi=\pi(o)\int_Odo'\pi(o')=\int_Od(oo')\pi(oo')=\Pi_\pi=\Pi_\pi\pi(o)\mbox{,
}\forall o\in O,
$$
and hence $\Pi_\pi$ is a projection,
$$
\Pi_\pi^2=\int_Odo\pi(o)\Pi_\pi=\int_Odo\Pi_\pi=\Pi_\pi.
$$
$\Pi_\pi$ is a projection onto an invariant subspace of $\pi|_O$.
Recall the operator $D_\pi$ of \cite{Folland199502} which
satisfied $D_\pi\pi(x)=\Delta^{\frac{1}{2}}(x)\pi(x)D_\pi$, for
all $x\in G$. In particular, we find that
$D_\pi\pi(o)=\pi(o)D_\pi$ for all $o\in O$, and consequently,
$D_\pi\Pi_\pi=\Pi_\pi D_\pi$.

{\bf The Fourier transform in $G/O$.}\index{Fourier transform,
harmonic analytical} The Fourier transform in $\Sigma=G/O$
associated to $U_g$ is naturally the restriction of that on $G$
associated to $L_g$; for $\mu$-almost all $\pi\in\hat G_\Sigma$
$$
\hat f(\pi)=\pi(f)D_\pi\in\mathcal{H}(\pi).
$$
For any $f\in C_0(\Sigma)$ and $\mu$-almost all $\pi\in\hat
G_\Sigma$ we have
\begin{eqnarray}
\pi(f)=\int_\Sigma dx_\Sigma\int_Odx_Of(x_\Sigma
O)\pi(x_\Sigma)\pi(x_O)=\int_\Sigma dx_\Sigma f(x_\Sigma
O)\pi(x_\Sigma)\Pi_\pi.\label{Pi_f_Def}
\end{eqnarray}
As usual we have $\pi(U_g f)=\pi(L_gf)=\pi(g)\pi(f)$ for $g\in G$,
$f\in C_0(\Sigma)$. The convolution $f\ast h$ has the property
that if $f\in C_0(G)$ and $h\in C_0(\Sigma)$ then $f\ast h\in
C_0(\Sigma)$. Moreover, it satisfies $\pi(f\ast h)=\pi(f)\pi(h)$.

{\bf The case of arbitrary $\mathcal{T}$.} Let now $\dim V=n\ge
1$. The left quasi-regular representation of $G$ on
$C^\infty(\mathcal{T})$ acts as
$$
U^{\mathcal{T}}_gf(x)=g^{-1}f(g^{-1}x)\mbox{, }\forall f\in
C^\infty(\mathcal{T}).
$$
Recall the $G$-invariant orthonormal frame $\{Y_i\}_{i=1}^n$ in
$\mathcal{T}$ and write any $f\in C^\infty(\mathcal{T})$ as
$f=\sum f^iY_i$. Using that $U^{\mathcal{T}}_gY_i=Y_i$ we find
$$
U^{\mathcal{T}}_gf(x)=\sum_{i=1}^n U_gf^i\times Y_i,
$$
where $U_g$ is the left quasi-regular representation of $G$ on
$C^\infty(\Sigma)$. Thus $U^{\mathcal{T}}_g=\oplus_nU_g$, and the
harmonic analysis of $U^{\mathcal{T}}_g$ is the same as that of
$U_g$ except that each primary subrepresentation of
$U^{\mathcal{T}}_g$ is the $n$-fold copy of the corresponding
primary subrepresentation of $U_g$. Making the identification
$C^\infty_0(\mathcal{T})\ni f\to \{f^i\}\in\oplus_n
C^\infty_0(\Sigma)$ we find the Fourier transform of $f\in
C^\infty_0(\mathcal{T})$ to be
$$
\hat f(\pi)=\oplus_{i=1}^n\hat f^i(\pi),
$$
or to say in words, a matrix with $n$ times more columns than that
of a scalar function. The inverse Fourier transform will be
$$
f(x)=\sum_{i=1}^n\int_{\hat
G_\Sigma}d\mu(\pi)Tr\left[D_\pi\Pi_\pi\pi^*(x)\hat
f^i(\pi)\right]\times Y_i(x).
$$

\section{On the Fourier transform of distributions}

Here we will collect miscellaneous facts about distributions and
their Fourier transform, which we did not meet in the literature.
We continue working with the semidirect homogeneous vector bundle
$\mathcal{T}$ with notations established earlier.

Let $\hat{\mathcal{D}}^{\mathcal{T}}(\hat G_\Sigma)$ be the image
of $D(\mathcal{T})=C^\infty_0(\mathcal{T})$ under the harmonic
analytical Fourier transform $f(x_\Sigma)\to\hat f(\pi)$. As we
have already seen, $\hat f(\pi)=\oplus_n\hat f^i(\pi)$, hence
$\hat{\mathcal{D}}^{\mathcal{T}}(\hat
G_\Sigma)=\oplus_n\hat{\mathcal{D}}(\Sigma)$, where
$\hat{\mathcal{D}}(\Sigma)$ is the image under the Fourier
transform of $C^\infty_0(\Sigma)$.
$\hat{\mathcal{D}}^{\mathcal{T}}(\hat G_\Sigma)$ inherits the
topology of $D(\mathcal{T})$ via the Fourier transform, and one
can consider the Fourier transform of distributions
$D(\mathcal{T})'\ni u\to\hat
u\in\hat{\mathcal{D}}^{\mathcal{T}}(\hat G_\Sigma)'$ given by
$\hat u(\hat f)=u(f)$.

The Fourier transform has the remarkable property that it
interchanges the local and global behaviors. Namely, the local
irregularities of a function $f$ are reflected in the decay
properties of $\hat f(\pi)$ at large $\pi$, and conversely, the
behavior at infinity of $f$ determines the local regularity of
$\hat f(\pi)$. The precise description of these phenomena requires
a thorough functional analytical investigation, which we,
unfortunately, have no possibility to perform here.

It is widely known that any distribution restricted to a compact
region is of finite order. In \cite{Gelfand} the general structure
of distributions of finite order has been found for
$D(\mathbb{R}^n)$. Following a similar pattern we present here a
partial generalization of that result. By Proposition
\ref{CountNormTopEquiv} let us choose the topology $(X_i,2,l^2)$
for convenience.

\begin{proposition}\label{FinOrdDistribStruct}
Let $\mathcal{T}_K$ be an $n$-dimensional (complex)
pseudo-Riemannian vector bundle over a connected parallelizable
(pseudo-)Riemannian manifold $K$, and let $\nabla$ be a fiber
metric connection. Every $u\in D(\mathcal{T}_K)'$ of finite order
has a representation
$$
u(f)=\sum_{q\le k}(F_{\alpha,q},P_{\alpha,q}(X_i)f)_2\mbox{,
}\forall f\in D(\mathcal{T}),
$$
where $F_{\alpha,q}\in L^2(\mathcal{T}_K)$ and the smallest
possible such $k$ is the order of $u$.
\end{proposition}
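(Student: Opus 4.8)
The plan is to adapt the classical structure theorem for distributions of finite order on $\mathbb{R}^n$ (as in \cite{Gelfand}) to the parallelizable bundle setting, using the global frame $\{X_i\}$ that trivializes both $T^*K$ and $\mathcal{T}_K$ to turn the problem into one about $\mathbb{C}^n$-valued distributions whose test functions carry the Sobolev-type topology $(X_i,2,l^2)$. First I would fix, for each compact $K_0 \subset K$, the finite order $k$ of the restriction $u|_{K_0}$, and recall that the topology on $D(\mathcal{T}_{K_0})$ is generated by the seminorms $\|f\|_{2,k}^2 = \sum_{q \le k} \|P_{\alpha,q}(X_i)f\|_2^2$ where $P_{\alpha,q}$ runs over the monomials in the frame derivatives $X_i$ of total degree $q$ (here Proposition \ref{CountNormTopEquiv} is invoked to justify that this $l^2$-assembled Sobolev norm is equivalent to the sup-norm topology of finite order $k$). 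The point is that $u$, being of order $k$ on $K_0$, extends to a continuous linear functional on the completion of $D(\mathcal{T}_{K_0})$ in the single Hilbertian norm $\|\cdot\|_{2,k}$.

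Next I would apply the Riesz representation theorem in that Hilbert space. Consider the linear map $T: f \mapsto (P_{\alpha,q}(X_i)f)_{q\le k}$ from $D(\mathcal{T}_{K_0})$ into the finite direct sum $\bigoplus_{q\le k} L^2(\mathcal{T}_{K_0})$; by construction $T$ is an isometry onto its (closed) image $W$ when the domain carries the norm $\|\cdot\|_{2,k}$. The functional $u \circ T^{-1}$ is then bounded on $W$, extends by Hahn–Banach (or orthogonal projection) to all of $\bigoplus_{q\le k} L^2(\mathcal{T}_{K_0})$, and is therefore represented by an element $(F_{\alpha,q})_{q\le k}$ of that space via the inner product. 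Unwinding, $u(f) = \sum_{q\le k}(F_{\alpha,q}, P_{\alpha,q}(X_i)f)_2$ for all $f \in D(\mathcal{T}_{K_0})$, with $F_{\alpha,q} \in L^2(\mathcal{T}_{K_0})$. That $k$ is the \emph{smallest} exponent achievable follows because a representation of this form with all $F_{\alpha,q}=0$ for $q>k'$ would exhibit $u$ as continuous in the norm $\|\cdot\|_{2,k'}$, hence of order $\le k'$; so the minimal $k$ in the representation equals the order of $u$.

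The step I expect to be the main obstacle is passing from the \emph{local} statement on each compact $K_0$ to a \emph{global} one on all of $K$, since a priori the order of $u$ could be unbounded as $K_0$ exhausts $K$ — this is exactly why the proposition is only a \emph{partial} generalization and why the hypothesis ``$u$ of finite order'' (globally) is imposed. Granting global finite order $k$, I would take a locally finite cover $\{K_j\}$ of $K$ by relatively compact parallelizable charts with a subordinate smooth partition of unity $\{\chi_j\}$, represent each $\chi_j u$ on $K_j$ as above with $L^2$ densities $F_{\alpha,q}^{(j)}$ supported in $K_j$, and then sum: $F_{\alpha,q} = \sum_j F_{\alpha,q}^{(j)}$ is well-defined as the sum is locally finite, lies in $L^2(\mathcal{T}_K)$ once one checks square-summability across the cover (here one uses that $u$ has uniformly bounded order and a standard estimate controlling $\|F_{\alpha,q}^{(j)}\|_2$ by finitely many seminorms of $\chi_j$ times the operator norm of $u$), and reassembles $u$. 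The minor technical points — that the Leibniz expansion of $X_i(\chi_j f)$ keeps the total derivative order $\le k$, and that parallelizability is what lets the same frame $\{X_i\}$ and hence the same family of monomials $P_{\alpha,q}$ be used on every chart — are routine and I would only remark on them.
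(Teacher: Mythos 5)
Your core argument coincides with the paper's: fix the $(X_i,2,l^2)$ family of Hilbertian norms, isometrically map $D(\mathcal{T}_K)$ with $\|\cdot\|_k$ into $\Phi=\bigoplus_{q\le k}L^2(\mathcal{T}_K)$ via $f\mapsto\bigoplus_{q\le k}P_{\alpha,q}(X_i)f$, extend the transferred functional to all of $\Phi$ by Hahn--Banach, read off $(F_{\alpha,q})$ from Riesz, and argue minimality of $k$ exactly as you do. What differs is the localize-then-globalize scaffolding you wrap around this, and that is where there is a genuine gap. The paper takes ``$u$ of finite order $k$'' to mean that $u$ is bounded in the single norm $\|\cdot\|_k$ on the whole of $D(\mathcal{T}_K)$ with one constant; under that hypothesis the Hahn--Banach/Riesz step is carried out globally in one stroke and no cover of $K$ is needed. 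This reading is the one intended in Remark~\ref{LocFinOrd}, where the proposition is applied to restrictions of an arbitrary distribution to compact connected regions, for which the strong and local notions of finite order coincide. Your version works on each relatively compact $K_0$ and then proposes $F_{\alpha,q}=\sum_jF^{(j)}_{\alpha,q}$, but the $\ell^2$-summability of $\|F^{(j)}_{\alpha,q}\|_2$ across the cover is only asserted, not proved, and there is no estimate supplied that would control it: ``uniformly bounded order'' pins down a common $k$, not a common operator norm $C_j$ on each chart. If ``finite order'' is read in the weaker, purely local sense (same $k$, constants $C_j$ allowed to grow), the summability genuinely fails and the correct conclusion degrades to $F_{\alpha,q}\in L^2_{loc}(\mathcal{T}_K)$ rather than $L^2(\mathcal{T}_K)$. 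So the ``main obstacle'' you flag is either absent under the paper's hypothesis, or not actually overcome by the partition-of-unity argument as sketched; the paper avoids it entirely by its choice of meaning for finite order, and you should state explicitly that this uniform bound is being assumed before invoking Riesz.
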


\begin{proof}
By our choice
$$
\|f\|_k=\sqrt{\sum_{q\le k}\|P_{\alpha,q}(X_i)f\|_2^2}.
$$
Let $k$ be the order of $u$, i.e., $u$ is continuous in
$\|.\|_k$-norm. Define the following linear injective map
$$
\mathcal{V}:D(\mathcal{T}_K)\to\Phi=\bigoplus_{q\le
k}L^2(\mathcal{T}_K)
$$
by
$$
\mathcal{V}(f)=\bigoplus_{q\le k}P_{\alpha,q}(X_i)f.
$$
Then obviously $\|\mathcal{V}(f)\|_\Phi=\|f\|_k$. If we denote by
$\Psi=\mathcal{V}\left(D(\mathcal{T}_K)\right)\subset\Phi$, then
$u\circ\mathcal{V}^{-1}$ is a continuous functional on $\Psi$ with
the norm $\|.\|_\Phi$, and thus by Hahn-Banach theorem can be
extended to a continuous functional $F\in\Phi'$. But $\Phi$ is a
Hilbert space, thus $\Phi'=\Phi$ and $F\in\Phi$, and for any
$\phi\in\Phi$,
$$
F(\phi)=\sum_{q\le k}(F_{\alpha,q},\phi_{\alpha,q})_2\mbox{,
}F_{\alpha,q}\in L^2(\mathcal{T}).
$$
This yields our desired formula
$$
u(f)=\sum_{q\le k}(F_{\alpha,q},P_{\alpha,q}(X_i)f)_2.
$$
If such a formula held for a smaller $k$, then obviously the order
of $u$ would be smaller.
\end{proof}

Several variations of this proposition may be established by
choosing different norms. Note that the order of a distribution,
if finite, depends on the choice of the family of norms defining
the topology.

\begin{remark}\label{LocFinOrd} As already mentioned, any distribution is {\it
locally} of finite order, hence the proposition applies to the
restriction $u_K\in C^\infty_0(\mathcal{T}|_K)'$ of any $u\in
D(\mathcal{T})'$ to arbitrary compact connected region $K\subset
\Sigma$.
\end{remark}

We come back to our homogeneous bundle $\mathcal{T}$ and proceed
to the Fourier description of distributions $u\in D(\mathcal{T})'$
of finite order, which again can be applied for restrictions to
compact regions.

\begin{proposition}\label{FinOrdDistribFourierStruct}
Any distribution $u\in D(\mathcal{T})'$ of finite order is given
by
$$
u(f)=\int_{\hat G_\Sigma}d\mu(\pi)Tr\left[\hat u(\pi)^*\hat
f(\pi)\right],
$$
where $\hat u(\pi):\mathbb{C}^{mult(\pi,U_g)*n}\to\mathcal{H}_\pi$
is a $\mu$-locally integrable field of Hilbert-Schmidt operators.
(Note that the trace operator includes also the summation by fiber
indices $i=1,..,n$, which now enumerate blocks of columns.)
\end{proposition}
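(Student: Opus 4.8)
The plan is to take the Fourier transform of the finite-order structure formula of Proposition~\ref{FinOrdDistribStruct}. Any $u\in D(\mathcal{T})'$ of finite order is continuous in some norm $\|\cdot\|_k$, and $\Sigma$ is connected, Riemannian and parallelizable (the invariant frames $\{X_i\}$ of $T^*\Sigma$ and $\{Y_i\}$ of $\mathcal{T}$ trivialize the bundles), so Proposition~\ref{FinOrdDistribStruct} applies with $K=\Sigma$ and yields sections $F_{\alpha,q}\in L^2(\mathcal{T})$ with $u(f)=\sum_{q\le k}(F_{\alpha,q},P_{\alpha,q}(X_i)f)_2$ for all $f\in D(\mathcal{T})$. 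First I would rewrite each summand via the Plancherel formula for the harmonic analytical Fourier transform on $\mathcal{T}$: from $U^{\mathcal{T}}_g=\oplus_n U_g$ together with $U_g\subset L_g$ and the group Plancherel theorem of \cite{Folland199502} one gets $(F,h)_2=\int_{\hat G_\Sigma}d\mu(\pi)\,\mathrm{Tr}[\hat F(\pi)^*\hat h(\pi)]$ for $F,h\in L^2(\mathcal{T})$, the right-hand side being the fibrewise Hilbert--Schmidt pairing (with the trace absorbing the fibre index as in the inverse transform formula above).

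Next I would use the covariance of the Fourier transform under the invariant differential operators $P_{\alpha,q}(X_i)$. Since $\nabla$ and the frame $\{Y_i\}$ are $G$-invariant, the connection coefficients in that frame are constant, so $P_{\alpha,q}(X_i)$ is a constant-coefficient polynomial in the invariant derivations; differentiating $\pi(U_gf)=\pi(g)\pi(f)$ and using $D_\pi\pi(x)=\Delta^{1/2}(x)\pi(x)D_\pi$ shows that $\widehat{P_{\alpha,q}(X_i)f}(\pi)=\mathcal{P}_{\alpha,q}(\pi)\,\hat f(\pi)$ (one-sided operator multiplication), where $\mathcal{P}_{\alpha,q}(\pi)$ is the same polynomial evaluated on the operators $\mathbf{d}\pi(X_i)$ of the infinitesimal representation, modulo bounded corrections from the constant connection matrices and the modular character. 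Feeding this into the Plancherel formula and moving $\mathcal{P}_{\alpha,q}(\pi)$ across the trace by $\mathrm{Tr}[A^*BC]=\mathrm{Tr}[(B^*A)^*C]$, I would put
$$
\hat u(\pi)=\sum_{q\le k}\mathcal{P}_{\alpha,q}(\pi)^*\,\hat F_{\alpha,q}(\pi),
$$
which gives the asserted identity $u(f)=\int_{\hat G_\Sigma}d\mu(\pi)\,\mathrm{Tr}[\hat u(\pi)^*\hat f(\pi)]$.

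The hard part is to show that $\pi\mapsto\hat u(\pi)$ really is a $\mu$-locally integrable field of Hilbert--Schmidt operators. Plancherel already gives that $\hat F_{\alpha,q}(\pi)$ is Hilbert--Schmidt for $\mu$-almost all $\pi$ with $\int_{\hat G_\Sigma}\|\hat F_{\alpha,q}(\pi)\|_{HS}^2\,d\mu(\pi)=\|F_{\alpha,q}\|_2^2<\infty$, hence $\|\hat F_{\alpha,q}\|_{HS}$ is locally $\mu$-integrable by Cauchy--Schwarz; the only obstruction is that $\mathcal{P}_{\alpha,q}(\pi)$ contains the unbounded operators $\mathbf{d}\pi(X_i)$. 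To tame these I would note that the invariant Laplacian is a quadratic expression $\Delta=-\sum_i\nabla_{X_i}^2+(\text{zeroth order})$ in the invariant derivations, so that its Fourier image is one-sided multiplication by $\sum_i\mathbf{d}\pi(X_i)^*\mathbf{d}\pi(X_i)$ plus a bounded term, and this must coincide with the operator $\Delta_\pi$ by which $\Delta$ acts on the component $\mathcal{H}(\pi)$ of $L^2(\Sigma)$. On any $\mu$-compact region of $\hat G_\Sigma$ — equivalently, a region where the spectral parameter $\lambda(\pi)$ of $\Delta$ is bounded, which is the natural notion of bounded region under the conventional Fourier transform since there $\lambda$ is a proper analytic function — the operator $\Delta_\pi$ is bounded, hence each positive summand $\mathbf{d}\pi(X_i)^*\mathbf{d}\pi(X_i)\le\Delta_\pi+\text{const}$ is bounded and so $\|\mathbf{d}\pi(X_i)\|$ is bounded by a constant depending only on the region. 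Consequently $\|\hat u(\pi)\|_{HS}$ is dominated there by a fixed multiple of $\sum_{q\le k}\|\hat F_{\alpha,q}(\pi)\|_{HS}$, which shows at once that $\hat u(\pi)$ is Hilbert--Schmidt $\mu$-almost everywhere and that the field is $\mu$-locally integrable. Finally, minimality of $k$ as the order of $u$ follows exactly as in Proposition~\ref{FinOrdDistribStruct}: a representation with strictly smaller $k$ would, after inverse transform, display $u$ as continuous in a smaller norm. I expect the estimate controlling $\mathbf{d}\pi(X_i)$ by $\Delta_\pi$ — i.e. subordinating the first-order invariant operators to the invariant Laplacian fibrewise over $\hat G_\Sigma$ — to be the genuine technical core of the argument.
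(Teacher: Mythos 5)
Your argument follows the paper's proof essentially step by step: invoke Proposition~\ref{FinOrdDistribStruct} to write $u(f)=\sum_{q\le k}(F_{\alpha,q},P_{\alpha,q}(X_i)f)_2$, apply Plancherel, compute $\widehat{X_if}(\pi)=-\partial_i\pi\,\hat f(\pi)$ by differentiating under the integral sign, and collect terms into $\hat u(\pi)=\sum_{q\le k}\left[P_{\alpha,q}(-\partial_i\pi)\right]^*\hat F_{\alpha,q}(\pi)$. The paper then simply stops and declares the formula proved; you go further and attempt to verify explicitly that $\hat u(\pi)$ is a $\mu$-locally integrable field of Hilbert--Schmidt operators, which is commendable since the paper asserts this without argument.

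Unfortunately, that extra step contains a genuine gap. You claim that on a $\mu$-compact subset of $\hat G_\Sigma$ the operator $\Delta_\pi$ is bounded, identifying $\mu$-compactness in the unitary dual with boundedness of the spectral parameter $\lambda$. These are different things: $\pi$ and $\lambda$ are distinct parameters of the adapted Fourier transform, and the paper itself later allows $\hat\Delta(\pi)$ to be a ``possibly unbounded self-adjoint operator'' with spectrum $\sigma(\pi)$ for each fixed $\pi$. When the primary components $\mathcal{H}(\pi)$ are infinite-dimensional -- which happens already for $\Sigma$ a Heisenberg group (Bianchi~II), where the restricted Laplacian on a Schr\"odinger representation has the unbounded harmonic-oscillator spectrum -- the operators $\mathbf{d}\pi(X_i)$ are unbounded on $\mathcal{H}_\pi$ pointwise in $\pi$, so restricting to a compact region of $\hat G_\Sigma$ does not tame them. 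Consequently your domination $\mathbf{d}\pi(X_i)^*\mathbf{d}\pi(X_i)\le\Delta_\pi+\mathrm{const}$ does not yield a uniform bound, and the claimed Cauchy--Schwarz estimate on $\|\hat u(\pi)\|_{HS}$ over compacta collapses. Establishing the Hilbert--Schmidt and local-integrability properties of $\hat u(\pi)$ in full generality would require a different argument (or a weaker reading of the statement, closer to the paper's subsequent informal remark that multiplication by $\partial_i\pi^*$ may land only in a locally integrable class); as written, the estimate you identify as ``the genuine technical core'' does not go through.
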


\begin{proof}
Let $k$ be the order of $u$. Choose $\{X_i\}$ to be the generators
of left translations on $C^\infty(\mathcal{T})$ and let by
Proposition \ref{FinOrdDistribStruct} write $u$ as
$$
u(f)=\sum_{q\le k}(F_{\alpha,q},P_{\alpha,q}(X_i)f)_2.
$$
Consider the Fourier transform
$$
\widehat{X_if}(\pi)=\int_\Sigma dx_\Sigma\left(\lim_{t\to
0}\frac{(U_{\exp(-t\xi_i)}-1)f(x_\Sigma)}{t}\right)\pi(x_\Sigma)\Pi_\pi
D_\pi
$$
where $\xi_i$ is the corresponding element of the Lie algebra of
$\Sigma$. The integral runs over a compact region, and is
therefore uniformly absolutely convergent with the Hilbert-Schmidt
norm, thus we can interchange the limit with the integral,
$$
\widehat{X_if}(\pi)=\lim_{t\to 0}\frac{1}{t}\int_\Sigma
dx_\Sigma(U_{\exp(-t\xi_i)}-1)f(x_\Sigma)\pi(x_\Sigma)\Pi_\pi
D_\pi=
$$
$$
=\lim_{t\to 0}\frac{\pi(\exp(-t\xi_i))-1}{t}\hat f(\pi).
$$
On the right hand we see nothing else but the generator of the
derived representation of $\pi$,
$$
\lim_{t\to 0}\frac{\pi(\exp(-t\xi_i))-1}{t}=-\partial_i\pi,
$$
whence we find
$$
\widehat{X_if}(\pi)=-\partial_i\pi\hat f(\pi).
$$
As a result we have
$$
\widehat{P_{\alpha,q}(X_i)f}(\pi)=P_{\alpha,q}(-\partial_i\pi)\hat
f(\pi),
$$
and thereby
$$
u(f)=\sum_{q\le k}\int_{\hat G_\Sigma}d\mu(\pi)Tr\left[\hat
F_{\alpha,q}(\pi)^*P_{\alpha,q}(-\partial_i\pi)\hat
f(\pi)\right]=\int_{\hat G_\Sigma}d\mu(\pi)Tr\left[\hat
u(\pi)^*\hat f(\pi)\right],
$$
where
$$
\hat u(\pi)=\sum_{q\le
k}\left[P_{\alpha,q}(-\partial_i\pi)\right]^*\hat
F_{\alpha,q}(\pi).
$$
This completes the proof.
\end{proof}

Such a result should not be surprising. If the measurable
functions $F_{\alpha,q}$ were $q$ times differentiable within the
space of locally integrable functions, then we could
hypothetically use integration by parts to make all the terms in
the formula of Proposition \ref{FinOrdDistribStruct} of order 0,
which would correspond to a regular distribution. The failure of
the derivatives of $F_{\alpha,q}$ to remain locally integrable is
reflected in the fact, that multiplication of
$\widehat{F}_{\alpha,q}(\pi)$ by $\partial_i\pi^*$ makes it not
square integrable any more, but possibly only locally integrable.
This reflects the local-to-global interchange made by the Fourier
transform: higher frequencies feel local irregularities.

The image $\hat{\mathcal{D}}^{\mathcal{T}}(\hat G_\Sigma)$ of
compactly supported smooth sections under the Fourier transform is
of considerable interest. In harmonic analysis it is described by
various Paley-Wiener type theorems. Although there are refined
Paley-Wiener theorems for adapted Fourier transforms for certain
classes of semisimple or solvable groups, there seems to be no
such theory for the general abstract setup. Next we present a
partial answer to the problem, namely, a criterion for smoothness
for sufficiently decaying functions, which gives hints about how
the general solution might look like.

\begin{proposition}\label{FourImgRapidDecPi}
For a function $f\in L^2(\mathcal{T})$ the following two
conditions are equivalent:

(i) for any polynomial $P(X_i)$ of generators $\{X_i\}$ with
constant coefficients, $P(X_i)f\in L^2(\mathcal{T})$

(ii) $\hat f(\pi)$ decays at infinity of $\hat G_\Sigma$ faster
than the inverse of any polynomial in the generators
$\partial_i\pi^*$
\end{proposition}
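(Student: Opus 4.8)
The plan is to run both implications through the two facts already in hand: the operator identity extracted inside the proof of Proposition \ref{FinOrdDistribFourierStruct}, namely $\widehat{X_i f}(\pi)=-\partial_i\pi\,\hat f(\pi)$, which upon iteration gives $\widehat{P(X_i)f}(\pi)=P(-\partial_i\pi)\,\hat f(\pi)$ for every constant-coefficient polynomial $P$; and the Plancherel isometry for the Fourier transform on $\mathcal{T}\to\Sigma=G/O$ introduced above, $\|g\|_{L^2(\mathcal{T})}^2=\int_{\hat G_\Sigma}d\mu(\pi)\,\|\hat g(\pi)\|_{HS}^2$, applied componentwise in the $G$-invariant frame $\{Y_i\}$ (recall the Fourier transform splits as $\oplus_n$). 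Condition (ii) is to be read as the intrinsic statement that for every polynomial $P$ in the derived-representation generators $\partial_i\pi^*=-\partial_i\pi$ the field $\pi\mapsto P(-\partial_i\pi)\hat f(\pi)$ is square-integrable over $(\hat G_\Sigma,\mu)$ in Hilbert--Schmidt norm; this is the honest meaning of ``decay at infinity'' of $\hat G_\Sigma$, since the size of the $\partial_i\pi$ grows with $\pi$ (for instance $\sum_i\partial_i\pi^*\partial_i\pi$ is the Fourier symbol of the $G$-invariant Laplacian $-\sum_i X_i^*X_i$ and its spectrum leaves every bounded set as $\pi$ runs to infinity in $\hat G_\Sigma$).

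\textbf{(i)$\Rightarrow$(ii).} First I would upgrade the identity $\widehat{P(X_i)f}(\pi)=P(-\partial_i\pi)\hat f(\pi)$, established in Proposition \ref{FinOrdDistribFourierStruct} for $f\in C_0^\infty(\mathcal{T})$, to all $f\in L^2(\mathcal{T})$ with $P(X_i)f\in L^2(\mathcal{T})$. This is a standard mollification/density argument: choose $f_m\in C_0^\infty(\mathcal{T})$ with $f_m\to f$ and $P(X_i)f_m\to P(X_i)f$ in $L^2(\mathcal{T})$ (convolution on $G$ with a smooth approximate identity works, as the $X_i$ are left-invariant); then $\hat f_m\to\hat f$ and $\widehat{P(X_i)f_m}\to\widehat{P(X_i)f}$ in the Fourier-image $L^2$-space, and passing to a $\mu$-a.e.\ convergent subsequence while using that $P(-\partial_i\pi)$ is a measurable field of closed operators transports the identity to $f$. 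Plancherel then gives $\int_{\hat G_\Sigma}d\mu(\pi)\,\|P(-\partial_i\pi)\hat f(\pi)\|_{HS}^2=\|P(X_i)f\|_{L^2(\mathcal{T})}^2<\infty$ for every $P$, which is exactly (ii).

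\textbf{(ii)$\Rightarrow$(i).} Conversely, given (ii), for each polynomial $P$ the field $P(-\partial_i\pi)\hat f(\pi)$ lies in the Fourier-image $L^2$-space, hence by the Parseval isometry is the Fourier transform of a unique $g_P\in L^2(\mathcal{T})$. It remains to identify $g_P$ with the distributional section $P(X_i)f$: for a test section $\phi\in C_0^\infty(\mathcal{T})$ one has $(P(X_i)f)(\phi)=f(P(X_i)^\dagger\phi)$ with $P(X_i)^\dagger$ the formal adjoint (again a polynomial in the $X_i$, which are skew-adjoint for the invariant measure), and Fourier-transforming $\phi$ and moving the operators across the trace pairing by the identity $\widehat{X_i\phi}(\pi)=-\partial_i\pi\hat\phi(\pi)$ available for smooth $\phi$ yields $f(P(X_i)^\dagger\phi)=\int_{\hat G_\Sigma}d\mu(\pi)\,Tr[(P(-\partial_i\pi)\hat f(\pi))^*\hat\phi(\pi)]=g_P(\phi)$. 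Hence $P(X_i)f=g_P\in L^2(\mathcal{T})$, which is (i).

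\textbf{Main obstacle.} The substance is functional-analytic bookkeeping rather than anything conceptual: the $\partial_i\pi$ are unbounded on each $\mathcal{H}_\pi$, so one must fix a common invariant core (the smooth vectors of $\pi$, or the span of its matrix coefficients), check that $\pi\mapsto P(-\partial_i\pi)\hat f(\pi)$ really is a measurable field and that the operators involved are closed, and then justify every interchange of limit, $\mu$-integral and unbounded operator in the two steps above. Equally, one must commit to the reading of ``decay at infinity of $\hat G_\Sigma$'' given in the first paragraph: in the absence of a canonical topology at infinity on an abstract dual, square-integrability of $P(-\partial_i\pi)\hat f(\pi)$ against every polynomial $P$ is the only unambiguous formulation, and recognizing it as the correct analogue of the Euclidean Sobolev characterization is itself part of the work.
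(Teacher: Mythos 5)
Your proposal is correct and takes essentially the same route as the paper: both hinge on the identity $\widehat{P(X_i)f}(\pi)=P(-\partial_i\pi)\hat f(\pi)$ extracted from the proof of Proposition~\ref{FinOrdDistribFourierStruct} together with the Plancherel isometry, and read condition~(ii) as square-integrability of $P(-\partial_i\pi)\hat f(\pi)$ for every polynomial $P$. The paper states this in a single line, leaving implicit the density/mollification step and the duality argument for the converse that you carefully spell out.
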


\begin{proof}
As we have seen in the proof of the previous proposition,
$$
\widehat{P(X_i)f}=P(-\partial_i\pi)^*\hat f(\pi),
$$
and the requirement that $\widehat{P(X_i)f}\in L^2(\hat G_\Sigma)$
for any $P(X_i)$ is equivalent to the assertion (ii) of the
proposition.
\end{proof}

We can go a step further and establish a weaker necessary
condition for a distribution to be given by a smooth integral
kernel. For this purpose we want to remind a few definitions on a
more abstract level.

Let $\mathcal{D}(S)$ be a test function space. We have
$\mathcal{D}(S)\subset L^\infty(S)$ and therefore
$L^\infty(S)'\subset\mathcal{D}(S)'$. Let $\{\eta_i\}$ be a finite
system of linear maps $\eta_i:S\to S$. A distribution
$u\in\mathcal{D}(S)'$ is of {\it rapid decay} in $\{\eta_i\}$ if
for any polynomial $P(\eta_i)$ of variables $\{\eta_i\}$ it holds
$u(P(\eta_i).)\in L^\infty(S)'$. We will symbolically write this
as $u=\mathfrak{o}(\{\eta_i\}^{-\infty})$. If $u$ is given by a
locally integrable kernel, and $\{\eta_i\}$ are coordinate
operators, then this definition obviously reduces to the usual
criterion for functions of rapid decay.

\begin{proposition}
For a distribution $u\in\mathcal{D}(\mathcal{T})'$ from
$\hat{\mathcal{D}}^{\mathcal{T}}(\hat G_\Sigma)'\ni\hat
u=\mathfrak{o}(\{\partial_i\pi\}^{-\infty})$ it follows that $u$
has a smooth integral kernel.
\end{proposition}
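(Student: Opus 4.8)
The plan is to reduce the statement to a regularity bootstrap and then transport everything through the harmonic analytical Fourier transform. Since $\mathcal{T}$ is a trivial bundle over the parallelizable Riemannian manifold $\Sigma$ and smoothness is a local property, it suffices to show that $P(X_i)u$ is a continuous (indeed bounded) section for \emph{every} polynomial $P$ in the left-invariant generators $\{X_i\}$; a standard Sobolev embedding / elliptic-regularity argument (a suitable sum of squares of the $X_i$ is elliptic on $\Sigma$) then upgrades this to $u\in C^\infty(\mathcal{T})$, which is exactly the assertion that $u$ has a smooth integral kernel. So the real work is to control all $X_i$-derivatives of $u$ uniformly.

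First I would translate the hypothesis into the Fourier picture. Recall from the proof of Proposition~\ref{FinOrdDistribFourierStruct} the identity $\widehat{X_i f}(\pi)=-\partial_i\pi\,\hat f(\pi)$ valid for $f\in\mathcal{D}(\mathcal{T})$. Dualizing, and absorbing the transpose/modular corrections --- which are harmless because the modular function of $\Sigma$ was assumed to have a non-trivial kernel, so the operator $D_\pi$ and hence all such corrections are controlled by Theorem~7.50 of \cite{Folland199502} --- one gets $\widehat{P(X_i)u}=Q(\partial_i\pi)^*\,\hat u$ for a polynomial $Q$ of the same degree as $P$ (the modular terms only modify lower-order coefficients). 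Consequently the hypothesis $\hat u=\mathfrak{o}(\{\partial_i\pi\}^{-\infty})$ says exactly that for every polynomial $P$ the functional $\widehat{P(X_i)u}$ extends to a bounded functional on $L^\infty(\hat G_\Sigma)$, i.e. $\widehat{P(X_i)u}\in L^\infty(\hat G_\Sigma)'$.

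Next I would feed this into the inverse Fourier transform. For any $v\in\mathcal{D}(\mathcal{T})'$ the inversion pairing reads $v(x)=\sum_{i=1}^n\int_{\hat G_\Sigma}d\mu(\pi)\,Tr\left[D_\pi\Pi_\pi\pi^*(x)\hat v^i(\pi)\right]Y_i(x)$; for fixed $x$ the integrand is nothing but the evaluation of the functional $\hat v$ against the element $\pi\mapsto D_\pi\Pi_\pi\pi^*(x)$, which lies in $L^\infty(\hat G_\Sigma)$ uniformly for $x$ in a compact set because $\pi(x)$ is unitary and the $D_\pi\Pi_\pi$ factors are precisely the ones making inversion a pairing with $\hat{\mathcal{D}}^{\mathcal{T}}(\hat G_\Sigma)$. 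Applying this with $v=P(X_i)u$, the hypothesis guarantees $\hat v=\widehat{P(X_i)u}\in L^\infty(\hat G_\Sigma)'$, so $P(X_i)u$ is represented by a bounded section which, by a dominated-convergence argument in $x$, is moreover continuous --- for every $P$. Differentiating under the integral (legitimate since one extra $X_j$ only inserts one more factor $-\partial_j\pi$, still absorbed by the rapid-decay hypothesis) then yields $u\in C^\infty(\mathcal{T})$. Alternatively one can localize to a relatively compact $K$, invoke Remark~\ref{LocFinOrd} and Proposition~\ref{FinOrdDistribStruct}--Proposition~\ref{FinOrdDistribFourierStruct} applied to $P(X_i)u$ of arbitrarily large order $k$, and read off smoothness from the bootstrap in $k$.

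The step I expect to be the main obstacle is the passage from ``$\widehat{P(X_i)u}$ is a bounded functional on $L^\infty(\hat G_\Sigma)$ for all $P$'' to ``the inversion integral is a genuine function of $x$, uniform on compacta, and may be differentiated under the integral sign.'' This hinges on (a) controlling the possibly unbounded factor $D_\pi$, where the standing hypothesis on the modular function of $\Sigma$ is essential, and (b) a uniform-in-$x$ dominated-convergence estimate using unitarity of $\pi(x)$. One must also keep in mind that $L^\infty(\hat G_\Sigma)'$ is strictly larger than $L^1(\hat G_\Sigma)$, so the object representing $\widehat{P(X_i)u}$ need not be an $L^1$-field; but only boundedness of the functional, not its representability by an integrable kernel, enters the inversion pairing, so this causes no difficulty.
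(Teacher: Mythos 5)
Your proposal follows essentially the same route as the paper's proof: both reduce smoothness of $u$ to the assertion that, for every polynomial $P$ in the left-invariant generators, the distribution $P(X_i)u$ can be evaluated pointwise, and both then translate to the Fourier picture via $\widehat{X_if}=-\partial_i\pi\,\hat f$, finally closing the argument with the two shared ingredients: the rapid-decay hypothesis $\hat u=\mathfrak{o}(\{\partial_i\pi\}^{-\infty})$ and the fact that $\pi(x)\Pi_\pi D_\pi\in L^\infty(\hat G_\Sigma)$. The cosmetic difference is that the paper phrases the pointwise evaluability as a limit over a delta-approximating sequence $f_q\to\delta(x-m)$, identifying $\hat\delta_m=\pi(m)\Pi_\pi D_\pi$ directly, whereas you unfold the inverse Fourier formula and argue uniform dominated convergence on compacta; these are the same computation in different dress. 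One small tightening: your appeal to Sobolev embedding / elliptic regularity is unnecessary --- once $P(X_i)u$ is shown to be continuous for \emph{every} polynomial $P$, smoothness of $u$ follows immediately, with no bootstrap needed, and the paper indeed skips that step. Note also that the boundedness of the field $\pi\mapsto\pi(x)\Pi_\pi D_\pi$ is asserted with the same brevity in the paper as in your proposal, so this is not a gap you introduced.
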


\begin{proof}
That $u$ is smooth means that all derivatives of all fiber
components $u^j$ are continuous. In other words, for any
polynomial $P$ in the generators $\{X_i\}$, the distributions
$P(X_i)u^j$ can be evaluated pointwise. A precise statement can be
given as follows. $u$ is smooth if and only if for any polynomial
$P(X_i)$, point $m\in \Sigma$ and sequence of test functions
$f_q\to\delta(x-m)$ in $C_0^\infty(\Sigma)'$, the following limit
exists for all $j=1,...,n$ and is finite,
$\lim_{q\to\infty}u^j(P(-X_i)f_q)$. The Fourier transform of the
distribution $\delta_m=\delta(x-m)$ can be easily read from the
Fourier inversion formula,
$$
\hat\delta_m(\hat f)=\int_{\hat
G_\Sigma}d\mu(\pi)Tr\left[D_\pi\Pi_\pi\pi^*(m)\hat f(\pi)\right].
$$
That means $f_q\to\delta(x-m)$ is equivalent to $\hat
f_q\to\pi(m)\Pi_\pi D_\pi$ in the weak sense. Hence
$$
\widehat{P(-X_i)f_q}\to P(\partial_i\pi)\pi(m)\Pi_\pi D_\pi
$$
in the weak topology. It follows
\begin{eqnarray}\label{uSmoothCondLim}
\lim_{q\to\infty}u^j(P(-X_i)f_q)=\hat
u^j\left(P(\partial_i\pi)\pi(m)\Pi_\pi D_\pi\right)
\end{eqnarray}
whenever one of the sides converges.

Now suppose $\hat u=\mathfrak{o}(\{\partial_i\pi\}^{-\infty})$.
Then for any $\hat f\in L^\infty(\hat G_\Sigma)$ (i.e., $\|\hat
f(\pi)\|\in L^\infty(\hat G_\Sigma)$ in the usual sense) we have
$$
\hat u^j\left(P(\partial_i\pi)\hat f^j(\pi)\right)<\infty\mbox{,
}j=1,...,n.
$$
In particular, $\pi(m)\Pi_\pi D_\pi\in L^\infty(\hat G_\Sigma)$,
whence (Eq.\ref{uSmoothCondLim}) follows.
\end{proof}

We are incline to think that this necessary condition is not far
from the desirable equivalent condition. This is, however, an open
problem in harmonic analysis, and we only hope to be able to give
a satisfactory answer in the future at least in the context we are
interested in.

\section{The adapted Fourier transform}

We start by noting that because the function $m^\star(t)$ is a
function of time only, the eigenfunctions of $D_{\Sigma_t}$ are
nothing else but the eigenfunctions of the Laplace operator
$\Delta_t$. In the first chapter we introduced the eigenfunction
decomposition associated to any self adjoint operator as the
Laplace operator $\Delta$,
$$
f\to\tilde f(\alpha)=\zeta_\alpha(f),
$$
where $\zeta_\alpha$-s are the generalized eigenfunctions of
$\Delta$. Putting additional structure related with particular
geometries one arrives at various Fourier transforms, which are
very practical in many respects. On the other hand, the abstract
harmonic analytical Fourier transform is a powerful tool for
analyzing general problems and properties, but its machinery is
functional analytically complicated for use. These two theories
are, however, related, although the exact relations have not been
sufficiently explored in the literature so far except for compact
groups. In the compact case the eigenfunctions of $\Delta$ are the
matrix elements of the irreducible representations for some choice
of the basis, and the two techniques can be unified. Each choice
of the basis results in a Fourier transform which is adapted to
it, hence such transforms are sometimes called adapted Fourier
transforms. In the non-compact case functional analytical
complications arise, though intuitively the situation remains
similar. In this section we will try to construct adapted Fourier
transforms at least on our semidirect homogeneous bundle
$\mathcal{T}$.\index{Fourier transform, adapted}

The Laplace operator $\Delta$ is invariant under $G$ and hence
commutes with $U^{\mathcal{T}}_g$. This means on each primary
component it acts as a multiplication from the right by a possibly
unbounded self-adjoint operator $\hat\Delta(\pi)$,
$$
\widehat{\Delta f}(\pi)=\hat f(\pi)\hat\Delta(\pi).
$$
For any $f\in L^2(\mathcal{T})$ we have that $\Delta f$ is a
distribution of order at most two. By Proposition
\ref{FinOrdDistribFourierStruct} it means that the multiplication
of any Hilbert-Schmidt operator $\hat f(\pi)$ by $\hat\Delta(\pi)$
from the right leaves it again Hilbert-Schmidt. Let
$\sigma(\pi)\subset\mathbb{R}$ be the spectrum of the self-adjoint
operator $\hat\Delta(\pi)$ as acting from the right (this spectrum
is non-positive, because $\Delta$ is an elliptic operator). For
each $\lambda\in\sigma(\pi)$ let $\hat\xi_{\pi,\lambda,r,s}$ be
the generalized eigenfunctions of $\hat\Delta(\pi)$, i.e.,
distributions satisfying
$\hat\xi_{\pi,\lambda,r,s}\hat\Delta(\pi)=\lambda\hat\xi_{\pi,\lambda,r,s}$
which are linearly independent and complete in $\mathcal{H}(\pi)$
for $r\in R_\pi\subset\mathbb{R}$ and $s\in
S^n_{\pi,\lambda}\subset\mathbb{R}$ (they can be constructed from
delta functions using the spectral theorem). Now consider the
following distributions in the Fourier space,
$$
\hat\zeta_{\pi,\lambda,r,s}(\pi')=\delta(\pi-\pi')\hat\xi_{\pi,\lambda,r,s}.
$$
Their preimages are distributions $\zeta_{\pi,\lambda,r,s}\in
D(\mathcal{T})'$ which are generalized eigenfunctions of $\Delta$,
and by elliptic regularity theorem, are smooth sections in
$\mathcal{T}$. Thus we have found, that the adapted Fourier
transform $\tilde f(\pi,\lambda,r,s)$ is nothing else but the
coefficients of $\hat f(\pi)$ as expended in the system
$\hat\xi_{\pi,\lambda,r,s}$. It is worth noting that $r$
parameterizes $\mathcal{H}_\pi$, and $\lambda$, $s$ parameterize
$\mathbb{C}^{mult(\pi,U_g)}*n$. Actually, $S^n_{\pi,\lambda}$
consists of $n$ copies of some set $S_{\pi,\lambda}$.

The choice of the system $\hat\xi_{\pi,\lambda,r,s}$ is rather
arbitrary and leaves room for adaptations. The first adaptation we
wish to make is the following. For any $\zeta_{\pi,\lambda,r,s}$
we want $\bar\zeta_{\pi,\lambda,r,s}=\zeta_{\pi',\lambda',r',s'}$
for some other parameters. Obviously $\lambda=\lambda'$, and it is
easy to see from the Fourier inversion formula, that this amounts
to requiring that $\bar\xi_{\pi,\lambda,r,s}$ enters the system
$\xi_{\bar\pi,\lambda,r',s'}$ for the representation $\bar\pi$
with some other parameters $r'$, $s'$. The representation
$\bar\pi$ may lie in the same equivalence class $[\pi]$ or not.

Lie groups are analytic manifolds, and all the group and algebra
structure is given by analytic functions in any analytic atlas. In
particular, the eigenfunction problem
$\Delta\zeta_{\pi,\lambda,r,s}=\lambda\zeta_{\pi,\lambda,r,s}$ is
an analytic elliptic equation, and the solutions
$\zeta_{\pi,\lambda,r,s}(x)$ are therefore analytic functions in
$x$. If $\Sigma$ is compact, then $\hat G_\Sigma$ is discrete, and
each $\sigma(\pi)$ is also discrete. Representations are finite
dimensional, hence $r$ and $s$ run over finite sets. The set
$\tilde\Sigma=\{\alpha=(\pi,\lambda,r,s)\}$ can be considered a
discrete manifold symbolically divided into $n$ components as
corresponding to each copy of $S_{\pi,\lambda}$. The space
$\hat{\mathcal{D}}^{\mathcal{T}}(\hat G_\Sigma)$ corresponds now
to the space $\tilde D(\tilde\Sigma)$ of functions on
$\tilde\Sigma$, which are of rapid decay in $\lambda$, and also
decay sufficiently fast in $\pi$ by Proposition
\ref{FourImgRapidDecPi}.

If $\Sigma$ is non-compact, suppose there exists a subset $\tilde
K\subset\hat G_\Sigma$ such that $\mu(\hat G_\Sigma\setminus\tilde
K)=0$ and $\tilde K$ can be cast into an analytic manifold. Then
we can restrict our Fourier transform from $\hat G_\Sigma$ to
$\tilde K$ without violation of the Plancherel equality. Suppose
further that the set $\tilde\Sigma=\{\alpha=(\pi,\lambda,r,s)\}$
can be made an analytic manifold consisting of $n$ disjoint
components as in the compact case. Each component itself may have
several connected components if $1<mult(\pi,U_g)<\infty$, in which
case $s$ will run over a discrete set. Then we can choose
$\zeta_{\pi,\lambda,r,s}$ to be analytic in all its parameters (if
$s$ is discrete, analyticity in $s$ is void), so that
$\hat{\mathcal{D}}^{\mathcal{T}}(\hat G_\Sigma)$ will correspond
to the space $\tilde D(\tilde\Sigma)$ of some analytic functions
on $\tilde\Sigma$ which have at least above mentioned decay
properties in $\lambda$ and $\pi$, but also are $L^2$ in $r$, and
in $s$ if the latter is continuous.

Finally let us define a symbolic involution $\alpha\to-\alpha$ on
$\tilde\Sigma$ satisfying $\zeta_{-\alpha}=\bar\zeta_\alpha$.
Clearly this involution will preserve $\lambda$. Now if the
necessary assumptions are satisfied, we arrive at a conventional
Fourier transform. In the next publication we will see that in the
majority of situations in cosmology these assumptions are valid,
and that will enable us to exploit the machinery of mode
decomposition to our cosmological models.

\subsection{Invariant bi-distributions}

In this section we will try to analyze the structure of
bi-distributions
$w\in\left(\mathcal{D}(\mathcal{T})\otimes\mathcal{D}(\mathcal{T})\right)'$
which are invariant under the left quasi-regular action
$U^{\mathcal{T}}_g$ of $G$ on $\mathcal{D}(\mathcal{T})$,
$$
w(U^{\mathcal{T}}_gf,U^{\mathcal{T}}_gh)=w(f,h)\mbox{, }\forall
f,h\in\mathcal{D}(\mathcal{T}),
$$
and compare with results obtained earlier in the
literature.\index{Invariant bi-distribution}

Decomposing each $f=\sum f^iY_i$, $f^i\in C^\infty_0(\Sigma)$, we
find for $u\in\mathcal{D}(\mathcal{T})'$ and
$w\in\left(\mathcal{D}(\mathcal{T})\otimes\mathcal{D}(\mathcal{T})\right)'$
$$
u(f)=\sum_{i=1}^nu^i(f^i)\mbox{,
}w(f,h)=\sum_{i,j=1}^nw^{ij}(f^i,h^j),
$$
$$
u^i\in C^\infty_0(\Sigma)'\mbox{,
}w^{ij}\in\left(C^\infty_0(\Sigma)\otimes
C^\infty_0(\Sigma)\right)',
$$
so that the problem reduces to that for scalar distributions.

The following proposition establishes the general form of the
$G$-invariant (or homogeneous) bi-distributions. Our approach is
greatly inspired by \cite{Gelfand_Vilenkin1964} where this
analysis is performed for $\mathbb{R}^n$.

\begin{proposition}\label{InvBiDistrib} Every $w\in
\left(C^\infty_0(\Sigma)\otimes C^\infty_0(\Sigma)\right)'$
satisfying $w(U_gf,U_gh)=w(f,h)$, $\forall f,h\in
C^\infty_0(\Sigma)$, $g\in G$, has the form
$$
w(f,h)=u_w\left(\bar f^*\ast h\right)
$$
for some $u_w\in C^\infty_0(\Sigma)'$. And conversely, any $u_w\in
C^\infty_0(\Sigma)'$ gives rise to such an invariant
bi-distribution $w$.
\end{proposition}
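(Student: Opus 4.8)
The plan is to view $w$ not merely as a bilinear form but, using nuclearity of $C_0^\infty(\Sigma)$, as a genuine distribution $\bar w\in\big(C_0^\infty(\Sigma)\otimes C_0^\infty(\Sigma)\big)'$ on $\Sigma\times\Sigma$ with $w(f,h)=\bar w(f\otimes h)$, and then to exploit the invariance geometrically, in the spirit of \cite{Gelfand_Vilenkin1964}. The assumed $G$-invariance contains in particular invariance of $\bar w$ under the diagonal action of the normal translation subgroup $\Sigma\subset G$, i.e.\ under $g\cdot(x,y)=(gx,gy)$, $g\in\Sigma$. This action on $\Sigma\times\Sigma$ is free and proper, and its quotient map is identified with the submersion $q\colon\Sigma\times\Sigma\to\Sigma$, $q(x,y)=x^{-1}y$ (whose level sets are precisely the orbits). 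I would then invoke the standard fact that a distribution invariant under a free proper Lie group action is the pullback of a unique distribution on the quotient, so that $\bar w=q^{*}u_w$ for a unique $u_w\in C_0^\infty(\Sigma)'$.

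It then remains to unwind $q^{*}u_w$ on a decomposable test function $f\otimes h$. Parametrizing the fibre $q^{-1}(z)=\{(x,xz):x\in\Sigma\}$ by $x$, integration over the fibres sends $f\otimes h$ to the function $z\mapsto\int_\Sigma f(x)\,h(xz)\,dx$ (the invariant measure on $\Sigma$), and the substitution $x\mapsto x^{-1}$ rewrites this as $(\bar f^{*}\ast h)(z)$; this is the step where the modular function of $\Sigma$ — which, by the standing hypothesis, need not be trivial, only to have non-trivial kernel — gets absorbed into the involution $f\mapsto\bar f^{*}$, which is exactly why the statement is phrased with $\bar f^{*}$ rather than with the function $x\mapsto f(x^{-1})$ (the two agree for the unimodular Bianchi groups). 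Combining, $w(f,h)=\bar w(f\otimes h)=u_w(\bar f^{*}\ast h)$, as asserted.

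For the converse, given any $u_w\in C_0^\infty(\Sigma)'$ one sets $w(f,h):=u_w(\bar f^{*}\ast h)$: since $(f,h)\mapsto\bar f^{*}\ast h$ is separately continuous from $C_0^\infty(\Sigma)\times C_0^\infty(\Sigma)$ into $C_0^\infty(\Sigma)$, the form $w$ is bilinear and separately continuous, hence — again by nuclearity — an element of $\big(C_0^\infty(\Sigma)\otimes C_0^\infty(\Sigma)\big)'$, and the identity $\overline{U_g f}^{*}\ast(U_g h)=\bar f^{*}\ast h$ for $g\in\Sigma$ makes its $\Sigma$-invariance manifest (invariance under the remaining $O$-part of $G$, which acts on $\Sigma$ by conjugation, corresponds to $O$-invariance of $u_w$, consistently in both directions). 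The step I expect to be the main obstacle is making the descent $\bar w=q^{*}u_w$ honest: one cannot realize a prescribed $\psi\in C_0^\infty(\Sigma)$ as a single $\bar f^{*}\ast h$ with compactly supported $f,h$, so the cleanest rigorous route is to write $\psi=\sum_k\bar f_k^{*}\ast h_k$ as a \emph{finite} sum (available through a Dixmier--Malliavin type factorization of $C_0^\infty(\Sigma)$) and to put $u_w(\psi):=\sum_k w(f_k,h_k)$; the heart of the matter is then well-definedness, i.e.\ that $\sum_k\bar f_k^{*}\ast h_k=0$ forces $\sum_k w(f_k,h_k)=0$, and it is exactly here that the $\Sigma$-invariance of $w$ enters in an unavoidable way — the analogue, at this level of generality, of the key lemma of \cite{Gelfand_Vilenkin1964}. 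Continuity of the resulting $u_w$ and the bookkeeping with the modular function are then routine.
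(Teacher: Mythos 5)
Your route is genuinely different from the paper's, so a comparison is in order. The paper's mechanism is an explicit change of variables on $\Sigma\times\Sigma$ that \emph{builds the $O$-averaging in}, essentially $\phi(x,y)\mapsto\psi_\phi(x,y)=\int_O do\,\phi(x,\,x\,o\,y)$; after this substitution the $G$-invariance reappears as translation invariance of $\tilde v(\,\cdot\,,h)$ in the first slot alone, for each fixed $h$, and one simply invokes the one-variable fact that a left-invariant scalar distribution on a Lie group is a multiple of Haar measure. No factorisation lemma is ever needed, because the argument never tries to invert the fibre map $(f,h)\mapsto \bar f^\star\star h$. You instead invoke descent of an invariant distribution under the free proper diagonal $\Sigma$-action and implement it rigorously via a Dixmier--Malliavin finite-sum factorisation; this is a heavier hammer, and you correctly identify that the well-definedness of $u_w(\psi)=\sum_k w(f_k,h_k)$ is exactly where the real content sits --- the paper's construction sidesteps it entirely. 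There is also a more substantive discrepancy you should not gloss over: the quotient map $q(x,y)=x^{-1}y$ for the diagonal $\Sigma$-action, unwound on $f\otimes h$, yields $\int_\Sigma f(x)h(xz)\,dx$, i.e.\ the $\Sigma$-convolution $\bar f^{\,\star}\star h$ in the paper's notation, \emph{not} the $G$-convolution $\bar f^{\,*}\ast h$ appearing in the proposition, which contains the extra inner integral $\int_O dx_O\,h(x_\Sigma x_O y_\Sigma O)$. (Since $O$ acts on $\Sigma$ by conjugation, these are not the same operation.) With your construction the resulting $u_w$ must therefore be $O$-conjugation-invariant for the converse to hold, and to recover the form actually stated you would still need the conversion carried out in the Remark following the proposition, which trades the $\Sigma$-convolution form (with $O$-invariant $u'_w$) for the $G$-convolution form (with unconstrained $u_w$). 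Finally, a caution for the non-unimodular Bianchi groups: the modular functions of $\Sigma$ and of $G=\Sigma\rtimes O$ differ, so the involutions $^\star$ and $^*$ carry different modular factors; your observation that the modular function ``gets absorbed'' is right in spirit, but $\bar f^\star$ and $\bar f^*$ are not interchangeable without comment.
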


\begin{proof}
Recall that for scalar functions $U_gf(x_\Sigma
O)=f(g^{-1}x_\Sigma O)$. By the nuclear theorem $w$ can be
uniquely extended to $\tilde w\in C^\infty_0(\Sigma\times
\Sigma)'$ via embedding
$$
C^\infty_0(\Sigma)\otimes C^\infty_0(\Sigma)\ni f(x_\Sigma)\otimes
h(y_\Sigma)\to f(x_\Sigma)h(y_\Sigma)\in C^\infty_0(G\times G).
$$
That
$$
w(f(g^{-1}x_\Sigma O),h(g^{-1}y_\Sigma O))=w(f,h)
$$
by continuity implies that
$$
\tilde w(\phi(g^{-1}x_\Sigma O,g^{-1}y_\Sigma O))=\tilde
w(\phi(x_\Sigma,y_\Sigma))\mbox{, } \forall\phi\in
C^\infty_0(\Sigma\times \Sigma).
$$
Define the linear automorphism
$$
C^\infty_0(\Sigma\times
\Sigma)\ni\phi(x_\Sigma,y_\Sigma)\to\psi_\phi(x_\Sigma,y_\Sigma)\in
C^\infty_0(\Sigma\times \Sigma)
$$
by
$$
\psi_\phi(x_\Sigma,y_\Sigma)=\int_Odx_O\phi(x_\Sigma,x_\Sigma
x_Oy_\Sigma O)=\int_Odx_O\phi(x_\Sigma,x_\Sigma Ox_Oy_\Sigma O),
$$
and the pullback of $\tilde w$ under this automorphism by $\tilde
v$, $\tilde v(\psi_\phi)=\tilde w(\phi)$. If
$\phi_g(x_\Sigma,y_\Sigma)=\phi(g^{-1}x_\Sigma O,g^{-1}y_\Sigma
O)$ then
$$
\psi_{\phi_g}(x_\Sigma,y_\Sigma)=\int_Odx_O\phi(g^{-1}x_\Sigma
O,g^{-1}x_\Sigma Ox_Oy_\Sigma O)=\psi_\phi(g^{-1}x_\Sigma
O,y_\Sigma).
$$
Now
$$
\tilde w(\phi_g)=\tilde v(\psi_{\phi_g})=\tilde
v(\psi_\phi)=\tilde w(\phi),
$$
thus
$$
\tilde v(\psi_\phi(x_\Sigma,y_\Sigma))=\tilde
v(\psi_\phi(g^{-1}x_\Sigma O,y_\Sigma))\mbox{, }\forall g\in G.
$$
Consider the restriction $v$ of $\tilde v$ to
$C^\infty_0(\Sigma)\otimes C^\infty_0(\Sigma)$. The last equation
implies $v(f(g^{-1}x_\Sigma O),h(y_\Sigma))=v(f,h)$, $\forall
f,h\in C^\infty_0(\Sigma)$. If we fix $h$, then $v(.,h)\in
C^\infty_0(\Sigma)'$ is a distribution which is invariant under
all translations, and is thus given by a constant kernel,
$v(f,h)=u_w(h)\int_\Sigma dx_\Sigma f(x_\Sigma)$, for some
$u_w:C^\infty_0(\Sigma)\to\mathbb{C}$. On the other hand, if we
fix $f$, then continuity in $h$ implies $u_w\in
C^\infty_0(\Sigma)'$. Because the integral $\int_\Sigma dx_\Sigma
f(x_\Sigma)$ runs over a compact region, it can be transferred
into $u_w$, i.e., $v(f,h)=u_w\left(\int_\Sigma dx_\Sigma
f(x_\Sigma)h(y_\Sigma)\right)$. This in turn implies by
continuity, that $\tilde
v(\psi(x_\Sigma,y_\Sigma))=u_w\left(\int_\Sigma
dx_\Sigma\psi(x_\Sigma,y_\Sigma)\right)$. Finally we arrive at
$$
w(f,h)=\tilde w(f(x_\Sigma)h(y_\Sigma))=\tilde
v(f(x_\Sigma)\int_Odx_Oh(x_\Sigma x_oy_\Sigma O))=
$$
$$
=u_w\left(\int_\Sigma dx_\Sigma f(x_\Sigma)\int_Odx_Oh(x_\Sigma
x_Oy_\Sigma O)\right)=u_w(\bar f^*\ast h).
$$
The converse statement is obvious.
\end{proof}

For a distribution
$w\in\left(\mathcal{D}(\mathcal{T})\otimes\mathcal{D}(\mathcal{T})\right)'$
this will mean
$$
w(f,h)=\sum_{i,j=1}^nu_w^{ij}\left((\bar f^i)^*\ast h^j\right).
$$

\begin{remark}
Note that every $G$-invariant bi-distribution
$w\in\left(C^\infty_0(\Sigma)\otimes C^\infty_0(\Sigma)\right)'$
is in particular $\Sigma$-invariant. Let $f\ast h$ ($f\star h$)
and $f^*$ ($f^\star$) denote the convolution and the involution
with respect to $G$ ($\Sigma$), respectively. Then
$$
w(f,h)=u_w\left(\int_\Sigma dx_\Sigma
f(x_\Sigma)\int_Odx_Oh(x_\Sigma x_Oy_\Sigma O)\right)=
$$
$$
=u_w\left(\int_Odx_OL_{x_O^{-1}}\bar f^\star\star h(y_\Sigma
O)\right)=u'_w(\bar f^\star\star h)
$$
for some other $u'_w\in\left(C^\infty_0(\Sigma)\otimes
C^\infty_0(\Sigma)\right)'$ as expected.
\end{remark}

Let $\hat{\mathcal{D}}(\hat G_\Sigma)$ be the image of
$C^\infty_0(\Sigma)$ under the harmonic analytical Fourier
transform $f(x_\Sigma)\to\hat f(\pi)$. As an obvious corollary we
arrive at the form of an invariant bi-distribution in the Fourier
space.

\begin{corollary}\label{InvBiDistribFourier} A $G$-invariant bi-distribution $w\in\left(C^\infty_0(\Sigma)\otimes
C^\infty_0(\Sigma)\right)'$ in the Fourier space is given by
$$
w(f,h)=\hat u_w(\pi(\bar f^*)\hat h(\pi))=\hat u_w(\pi(\bar
f)^*\hat h(\pi))
$$
for some $\hat u_w\in\hat{\mathcal{D}}(\hat G_\Sigma)'$.
\end{corollary}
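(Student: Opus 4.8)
The plan is to obtain the statement by transplanting Proposition \ref{InvBiDistrib} to the Fourier side, so that there is essentially nothing to prove beyond bookkeeping. By that proposition every $G$-invariant $w\in\left(C^\infty_0(\Sigma)\otimes C^\infty_0(\Sigma)\right)'$ is of the form $w(f,h)=u_w(\bar f^*\ast h)$ for a unique $u_w\in C^\infty_0(\Sigma)'$, so it suffices to Fourier transform the section $\bar f^*\ast h$ and to carry $u_w$ over to the Fourier image.

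First I would note that $\bar f^*$ is the $G$-involution of a function in $C^\infty_0(\Sigma)\subset C^\infty_0(G)$ and hence lies in $C^\infty_0(G)$ (it need not be right $O$-invariant), while $h\in C^\infty_0(\Sigma)$; therefore by the mixed convolution property recorded just after (Eq.\ref{Pi_f_Def}) one has $\bar f^*\ast h\in C^\infty_0(\Sigma)$ and $\pi(\bar f^*\ast h)=\pi(\bar f^*)\pi(h)$ for $\mu$-almost every $\pi$. Multiplying on the right by $D_\pi$ and using $\hat h(\pi)=\pi(h)D_\pi$ gives $\widehat{\bar f^*\ast h}(\pi)=\pi(\bar f^*)\hat h(\pi)$. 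Since the Fourier inversion formula exhibits $f\mapsto\hat f$ as injective on $C^\infty_0(\Sigma)$, and since $\hat{\mathcal{D}}(\hat G_\Sigma)$ carries by definition the topology transported from $C^\infty_0(\Sigma)$, the prescription $\hat u_w(\hat g)\doteq u_w(g)$ well-defines a distribution $\hat u_w\in\hat{\mathcal{D}}(\hat G_\Sigma)'$; substituting the identity above then yields the first equality $w(f,h)=\hat u_w(\pi(\bar f^*)\hat h(\pi))$.

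The second equality is just the remark that $\pi$ is a $*$-representation of the convolution $*$-algebra, so $\pi(g^*)=\pi(g)^*$ for integrable $g$; applied to $g=\bar f$ this gives $\pi(\bar f^*)=\pi(\bar f)^*$, and the converse direction follows at once from the converse part of Proposition \ref{InvBiDistrib}. The only points that require a little care are the $D_\pi$-twist built into the definition of the Fourier transform on $G/O$ and keeping straight which convolution factor lives in $C^\infty_0(G)$ and which in $C^\infty_0(\Sigma)$; the identity $\pi(g^*)=\pi(g)^*$ is where the standing hypothesis that the modular function of $\Sigma$ has non-trivial kernel (via Theorem 7.50 of \cite{Folland199502}) is implicitly used. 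I do not anticipate any genuine obstacle, which is why the result is stated as a corollary.
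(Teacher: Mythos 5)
Your proof is correct and supplies exactly the bookkeeping the paper leaves implicit when it calls this ``an obvious corollary''; the paper itself gives no proof, and the chain Proposition~\ref{InvBiDistrib} $\to$ mixed convolution property $\to$ the transported distribution $\hat u_w(\hat g)=u_w(g)$ is the intended route. One small inaccuracy in your closing remark: the identity $\pi(g^*)=\pi(g)^*$ holds for any unitary representation of any locally compact group and does not rest on the hypothesis that the modular function of $\Sigma$ has nontrivial kernel; that assumption is invoked to make the Plancherel machinery of Theorem~7.50 of \cite{Folland199502} available (in particular the existence and, later, the invertibility of $D_\pi$), and it is used essentially only in Proposition~\ref{FinMultUnimodScalar}.
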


An immediate consequence of Proposition
\ref{FinOrdDistribFourierStruct} is the following
\begin{corollary}\label{InvFinOrdBiDistribFourier} Under the assumptions of Proposition
\ref{FinOrdDistribFourierStruct}, a $G$-invariant bi-distribution
$w_K\in\left(C^\infty_0(K)\otimes C^\infty_0(K)\right)'$ with
$K\subset \Sigma$ compact is given by
$$
w_K(f,h)=\int_{\hat G_\Sigma}d\mu(\pi)Tr\left[\hat
u_K(\pi)^*\pi(\bar f)^*\hat h(\pi)\right].
$$
\end{corollary}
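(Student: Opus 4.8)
The plan is to chain together three results already established above: Proposition~\ref{InvBiDistrib} (equivalently Corollary~\ref{InvBiDistribFourier}), which gives the global form of an invariant bi-distribution, Remark~\ref{LocFinOrd}, which says that the relevant scalar distribution is locally of finite order, and Proposition~\ref{FinOrdDistribFourierStruct}, which supplies the Fourier description of a finite order distribution. By Proposition~\ref{InvBiDistrib} the $G$-invariant bi-distribution $w$ (of which $w_K$ is the restriction to $C^\infty_0(K)\otimes C^\infty_0(K)$) has the form $w(f,h)=u_w(\bar f^*\ast h)$ for a single $u_w\in C^\infty_0(\Sigma)'$. I would first restrict the arguments to $f,h\in C^\infty_0(K)$ with $K\subset\Sigma$ compact. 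Then $g=\bar f^*\ast h$ lies in $C^\infty_0(\Sigma)$ by the convolution property recalled above (applied to $\bar f^*\in C_0(\Sigma)\subset C_0(G)$ and $h\in C_0(\Sigma)$), and $\supp g\subset\supp(\bar f^*)\cdot\supp h\subset K^{-1}K$, which, since $\Sigma$ is a subgroup, is a compact subset of $\Sigma$ depending only on $K$. Enlarging it if necessary to a compact connected region $\tilde K\supset K^{-1}K$ (possible because $\Sigma$ is connected) we arrange that $g\in C^\infty_0(\tilde K)$ for all admissible $f,h$, with $\tilde K$ fixed once $K$ is fixed.

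Next I would invoke Remark~\ref{LocFinOrd}: the restriction of $u_w$ to $C^\infty_0(\tilde K)$ is of finite order, and since $\Sigma$ is parallelizable (the homogeneous bundle is trivial, as shown earlier) Proposition~\ref{FinOrdDistribFourierStruct} applies to it and produces a $\mu$-locally integrable field $\hat u_K(\pi)$ of Hilbert--Schmidt operators with
$$
u_w(g)=\int_{\hat G_\Sigma}d\mu(\pi)\,Tr\!\left[\hat u_K(\pi)^*\,\hat g(\pi)\right],\qquad\forall g\in C^\infty_0(\tilde K).
$$
It then remains to substitute $g=\bar f^*\ast h$ and to compute $\hat g(\pi)$. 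Using $\hat g(\pi)=\pi(g)D_\pi$, the convolution identity $\pi(f\ast h)=\pi(f)\pi(h)$ and $\pi(f^*)=\pi(f)^*$ — exactly the manipulation already carried out in the passage leading to Corollary~\ref{InvBiDistribFourier} — one obtains $\widehat{\bar f^*\ast h}(\pi)=\pi(\bar f^*)\hat h(\pi)=\pi(\bar f)^*\hat h(\pi)$, and hence
$$
w_K(f,h)=\int_{\hat G_\Sigma}d\mu(\pi)\,Tr\!\left[\hat u_K(\pi)^*\,\pi(\bar f)^*\,\hat h(\pi)\right],\qquad\forall f,h\in C^\infty_0(K),
$$
which is the assertion. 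The $n$-component bundle version follows componentwise through the remark after Proposition~\ref{InvBiDistrib}, i.e.\ by writing $w_K(f,h)=\sum_{i,j}w_K^{ij}(f^i,h^j)$ and applying the scalar case to each $w_K^{ij}$.

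The step I expect to require the most care is the support bookkeeping in the first paragraph. One must be sure that the compact region $\tilde K$ on which $u_w$ has to be controlled is determined by $K$ alone and not by the individual test functions, and that $\bar f^*\ast h$ genuinely remains inside the subgroup $\Sigma$ rather than spilling into $G$; one should also make $\tilde K$ connected (and, if one wishes to quote Proposition~\ref{FinOrdDistribFourierStruct} in the form stated, treat it via Remark~\ref{LocFinOrd} as a restriction to a compact connected region of the parallelizable $\Sigma$). Once $\tilde K$ is fixed, everything else is a direct insertion into results that are already available, so no further genuinely new argument is needed.
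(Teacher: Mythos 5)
Your argument follows the paper's proof exactly: the paper's entire proof consists of noting that $\supp\{f\ast h\}\subset O(\supp\{f\})^{-1}\supp\{h\}O$ (hence the relevant test functions $\bar f^*\ast h$ live in a fixed compact subset of $\Sigma$ depending only on $K$) and then invoking Proposition~\ref{FinOrdDistribFourierStruct} via Remark~\ref{LocFinOrd}, which is precisely what you do after first recalling Proposition~\ref{InvBiDistrib}. The one imprecision in your write-up is the support estimate $\supp(\bar f^*)\cdot\supp h\subset K^{-1}K$: since $f,h$ live on $\Sigma$ but are convolved as right $O$-invariant functions on $G=\Sigma\rtimes O$, the support of $\bar f^*$ in $G$ is $OK^{-1}$, not $K^{-1}$, so the correct bound is $OK^{-1}KO$ (projected to $\Sigma$ using normality of $\Sigma$), exactly as the paper records; this does not affect your conclusion, since the set is still compact and depends only on $K$, but it is the one detail the paper's one-line proof singles out, so it is worth getting right.
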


\begin{proof}
It suffices to note that
$$
\mbox{supp}\{f\ast h\}\subset
O(\mbox{supp}\{f\})^{-1}\mbox{supp}\{h\}O,
$$
and to apply Proposition \ref{FinOrdDistribFourierStruct}.
\end{proof}

Finally we establish a generalization of the results by
\cite{Lueders_Roberts_1990} for FRW spacetimes.

\begin{proposition}\label{FinMultUnimodScalar}
Suppose that the group $G$ is such that all multiplicities
$mult(\pi,U_g)$ are finite. Then any $G$-invariant bi-distribution
$w\in\left(D(\mathcal{T})\otimes D(\mathcal{T})\right)'$ has the
form
$$
w(f,h)=\int_{\hat
G_\Sigma}d\mu(\pi)Tr\left[(\hat{\bar{f}}(\pi)\hat u(\pi))^*\hat
h(\pi)\right],
$$
where $\hat u(\pi)$ is a $\mu$-locally measurable field of
$\left[mult(\pi,U_g)\cdot n\right]\times\left[mult(\pi,U_g)\cdot
n\right]$ complex matrices.
\end{proposition}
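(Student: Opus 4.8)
The plan is to combine the structural result for invariant bi-distributions (Proposition \ref{InvBiDistrib}, in its Fourier form Corollary \ref{InvBiDistribFourier}) with the Fourier description of finite-order distributions restricted to compact regions (Proposition \ref{FinOrdDistribFourierStruct} and Remark \ref{LocFinOrd}), and then use the finiteness of multiplicities to convert the Hilbert-Schmidt operator fields into honest finite matrices so that the trace expression makes literal sense globally. Concretely, by the reduction $w(f,h)=\sum_{i,j}w^{ij}(f^i,h^j)$ it suffices to treat the scalar case $w\in\left(C^\infty_0(\Sigma)\otimes C^\infty_0(\Sigma)\right)'$, and Proposition \ref{InvBiDistrib} gives $w(f,h)=u_w(\bar f^*\ast h)$ for a single $u_w\in C^\infty_0(\Sigma)'$.

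First I would localize: fix an exhausting sequence of compact connected regions $K_m\subset\Sigma$ with $K_m\subset\mathring{K}_{m+1}$, and let $u_{w,m}$ be the restriction of $u_w$ to $C^\infty_0(K_m)$. By Remark \ref{LocFinOrd} each $u_{w,m}$ is of finite order, so Proposition \ref{FinOrdDistribFourierStruct} applies and yields $u_{w,m}(\phi)=\int_{\hat G_\Sigma}d\mu(\pi)Tr\left[\hat u_m(\pi)^*\hat\phi(\pi)\right]$ with $\hat u_m(\pi)$ a $\mu$-locally integrable field of Hilbert-Schmidt operators. Feeding $\phi=\bar f^*\ast h$ and using $\widehat{\bar f^*\ast h}(\pi)=\pi(\bar f^*)\hat h(\pi)=\pi(\bar f)^*\hat h(\pi)$ together with the trace identity $Tr[\hat u_m(\pi)^*\pi(\bar f)^*\hat h(\pi)]=Tr[(\pi(\bar f)\hat u_m(\pi))^*\hat h(\pi)]=Tr[(\hat{\bar f}(\pi)\hat u_m(\pi))^*\hat h(\pi)]$ (the last step using $\hat{\bar f}(\pi)=\pi(\bar f)D_\pi$ and absorbing $D_\pi$, which commutes with $\Pi_\pi$, into $\hat u_m$), one obtains, on each $K_m$, precisely the claimed formula with $\hat u(\pi)$ replaced by a region-dependent field $\hat u_m(\pi)$. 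Since the $u_{w,m}$ are mutually consistent restrictions of the single distribution $u_w$, the fields $\hat u_m(\pi)$ agree $\mu$-a.e.\ on overlaps and patch to a single $\mu$-locally measurable field $\hat u(\pi)$; for any fixed $f,h\in C^\infty_0(\Sigma)$ the support of $\bar f^*\ast h$ lies in some $K_m$, so the global formula holds.

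The point where the hypothesis $mult(\pi,U_g)<\infty$ enters — and which I expect to be the main technical obstacle — is the identification of $\hat u(\pi)$ as a field of genuine $\left[mult(\pi,U_g)\cdot n\right]\times\left[mult(\pi,U_g)\cdot n\right]$ matrices rather than merely Hilbert-Schmidt operators $\mathbb{C}^{mult(\pi,U_g)\ast n}\to\mathcal{H}_\pi$ as produced by Proposition \ref{FinOrdDistribFourierStruct}. Here I would argue that in the combination $(\hat{\bar f}(\pi)\hat u(\pi))^*\hat h(\pi)$ both $\hat{\bar f}(\pi)$ and $\hat h(\pi)$ are of the form $\pi(\cdot)\Pi_\pi D_\pi$ — operators factoring through the finite-dimensional space $\Pi_\pi\mathcal{H}_\pi\otimes\mathbb{C}^{mult(\pi,U_g)\ast n}$ (finite because $\Pi_\pi$ has finite rank: $\Pi_\pi$ projects onto the $O$-invariant vectors, and $mult(\pi,U_g)=\dim(\Pi_\pi\mathcal{H}_\pi)$ by the induced-representation/Frobenius reciprocity picture of $U_g=\mathrm{Ind}_O^G 1$, which is finite by assumption). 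Hence only the compression of $\hat u(\pi)$ to this finite-dimensional subspace contributes, and replacing $\hat u(\pi)$ by $\Pi_\pi\hat u(\pi)\Pi_\pi$ (in the appropriate tensor sense, also compressing the $\mathbb{C}^{mult(\pi,U_g)\ast n}$ factor) changes nothing in the formula; this compression is exactly a field of $\left[mult(\pi,U_g)\cdot n\right]\times\left[mult(\pi,U_g)\cdot n\right]$ matrices, which is what is asserted. The measurability is inherited from that of $\hat u_m(\pi)$ and the smooth dependence of $\Pi_\pi$, $D_\pi$ on $\pi$; the converse direction (that any such matrix field gives a $G$-invariant bi-distribution) is immediate by running the computation backwards, as in the final sentence of the proof of Proposition \ref{InvBiDistrib}.
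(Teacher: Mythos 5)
Your proposal is correct and takes essentially the same route as the paper. The paper's own proof invokes Corollary~\ref{InvFinOrdBiDistribFourier} (which is itself exactly the combination of Proposition~\ref{InvBiDistrib} with Remark~\ref{LocFinOrd} and Proposition~\ref{FinOrdDistribFourierStruct} that you re-derive inline), performs the same absorption of $D_\pi$ using $\pi(f)=\hat f(\pi)D_\pi^{-1}$, and uses the same consistency-of-restrictions argument to patch the $\hat u_K(\pi)$ over an exhausting family of compacts. Where the paper simply observes that $\hat{\bar f}(\pi)^*\hat h(\pi)$ is already a $mult(\pi,U_g)\times mult(\pi,U_g)$ matrix (so only a finite compression of the Hilbert--Schmidt field survives the trace), you spell out why this compression is finite via $\Pi_\pi$ having finite rank and Frobenius reciprocity $mult(\pi,U_g)=\dim\Pi_\pi\mathcal{H}_\pi$; that is a useful expansion of a step the paper leaves implicit, not a different argument.
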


\begin{proof}
Let start with the case $w\in\left(C^\infty_0(\Sigma)\otimes
C^\infty_0(\Sigma)\right)'$. The condition that the modular
function of $\Sigma$ has a nontrivial kernel ensures that the
formula (7.49) of (Folland) is valid, so that for $\mu$-almost all
$\pi$ the operator $D_\pi$ is invertible (injective). Therefore we
can write $\pi(f)=\hat f(\pi)D_\pi^{-1}$, so that $\pi(\bar
f)^*\hat h(\pi)=D_\pi^{-1}\hat{\bar{f}}(\pi)^*\hat h(\pi)$ where
$\hat{\bar{f}}(\pi)^*\hat h(\pi)$ is a $mult(\pi,U_g)\times
mult(\pi,U_g)$ complex matrix. Now for any compact $K\subset
\Sigma$ by Corollary \ref{InvFinOrdBiDistribFourier} we find that
the restriction $w_K$ of $w$ to $C^\infty_0(K)\otimes
C^\infty_0(K)$ is given by
$$
w_K(f,h)=\int_{\hat G_\Sigma}d\mu(\pi)Tr\left[\hat
u_K'(\pi)^*D_\pi^{-1}\hat{\bar{f}}(\pi)^*\hat
h(\pi)\right]=\int_{\hat G_\Sigma}d\mu(\pi)Tr\left[\hat
u_K(\pi)^*\hat{\bar{f}}(\pi)^*\hat h(\pi)\right],
$$
where $\hat u_K(\pi)$ is a $mult(\pi,U_g)\times mult(\pi,U_g)$
complex matrix. Choosing a larger compact $K\subset K'\subset
\Sigma$ we will arrive at another matrix $\hat u_{K'}(\pi)$. But
when restricted to $K$, $w_{K'}$ must coincide with $w_K$, hence
$\hat u_{K'}(\pi)=\hat u_K(\pi)$. Thus the matrix $\hat u_K(\pi)$
is the same for any $K$, and the formula holds for the entire $w$.

Now for $w\in\left(D(\mathcal{T})\otimes D(\mathcal{T})\right)'$
we have
$$
w(f,h)=\int_{\hat
G_\Sigma}d\mu(\pi)\sum_{i,j=1}^nTr\left[(\hat{\bar{f}}^i(\pi)\hat
u^{ij}(\pi))^*\hat h^j(\pi)\right]=\int_{\hat
G_\Sigma}d\mu(\pi)Tr\left[(\hat{\bar{f}}(\pi)\hat u(\pi))^*\hat
h(\pi)\right],
$$
which completes the proof.
\end{proof}

In the case of FRW spacetimes all the assumptions of the last
proposition are satisfied. In particular all $mult(\pi,U_g)=1$ and
for the scalar case we find that any $G$-invariant bi-distribution
is given by a locally measurable scalar field $\hat u(\pi)$.

\section{Acknowledgements}

The author expresses his thank to the Institute for Mathematics in
the Sciences and in particular to the International Max Planck
Research School for the organizational and material support of the
PhD project which this work is a part of. Invaluable is the
scientific patronage of the supervisor of this PhD project Prof.
Dr. Rainer Verch, whose remarks and suggestions used in this work
are too many to be listed here. Special gratitude is addressed to
Prof. Dr. Gerald Folland for extremely helpful e-mail
correspondence.

\appendix

\section{Space structures. Distributions}

Let us start with introducing symmetric metric products
$$
\langle f,h\rangle_M=\int_Md\mu_g(x)\langle
f(x),h(x)\rangle_g\mbox{, }f\in\mathcal{E}(\mathcal{T})\mbox{,
}h\in\mathcal{D}(\mathcal{T}),
$$
$$
\langle f,h\rangle_{\Sigma_t}=\int_\Sigma d\mu_h(\vec x)\langle
f(\vec x),h(\vec x)\rangle_g\mbox{,
}f\in\mathcal{E}(\mathcal{T}_t)\mbox{,
}h\in\mathcal{D}(\mathcal{T}_t).
$$
The pseudo-Riemannian metric $\langle,\rangle_\mathfrak{g}$
induces a Krein space structure on $V$, the typical fiber of
$\mathcal{T}$.\index{Krein space, involution} Whence there is a
Krein involution $\check\Gamma$, such that
$(u,v)_\mathfrak{g}=\langle\bar u,\check\Gamma
v\rangle_\mathfrak{g}$, $u,v\in V$, is a positive definite
hermitian inner product. This gives rise to positive definite
hermitian inner products
$$
(f,h)_M=\int_Md\mu_g(x)(f(x),h(x))_\mathfrak{g}\mbox{,
}f\in\mathcal{E}(\mathcal{T})\mbox{,
}h\in\mathcal{D}(\mathcal{T}),
$$
$$
(f,h)_{\Sigma_t}=\int_\Sigma d\mu_h(\vec x)(f(\vec x),h(\vec
x))_\mathfrak{g}\mbox{, }f\in\mathcal{E}(\mathcal{T}_t)\mbox{,
}h\in\mathcal{D}(\mathcal{T}_t).
$$
The completion of spaces $\mathcal{D}(\mathcal{T})$ and
$\mathcal{D}(\mathcal{T}_t)$ with respect to these products
becomes the Hilbert spaces $L^2(\mathcal{T})$ and
$L^2(\mathcal{T}_t)$, respectively. The tangent space $T_pM$ at a
point $p\in M$ with the Lorentzian metric $g$ is another example
of a Krein space. In the same spirit one defines the positive
definite inner product $(,)_g$ on $T_pM$. The metric $h$ on
$T_p\Sigma$ is Riemannian, so the construction of $(,)_h$ is
straightforward. Note that $\langle,\rangle_\mathfrak{g}$ and $g$
together give pseudo-Riemannian metrics on all product bundles
$T^*M\otimes...\otimes T^*M\otimes\mathcal{T}$ (respectively,
$\langle,\rangle_\mathfrak{g}$ and $h$ on
$T^*\Sigma\otimes...\otimes T^*\Sigma\otimes\mathcal{T}_t$). All
the resulting standard fibers are again Krein spaces, and can be
given inner products $(,)_g$ in the same fashion. These in their
turn produce products $(,)_M$ and $(,)_{\Sigma_t}$ on the
respective sections.

The perfect countably Banach topology of the test function spaces
$\mathcal{D}(\mathcal{T})$ and $\mathcal{D}(\mathcal{T}_t)$ can be
given as usual (e.g.,\cite{BarGinouxPfaffle200703}). However, as
we are going to perform a spectral analysis, we will need nuclear
countably Hilbert space structure, to which we proceed
\cite{Maurin1972}. Let $\mathcal{O}\subset M$ be a compact region.
Let
$$
\mathcal{D}_\mathcal{O}(\mathcal{T})=\left\{f\in\mathcal{D}(\mathcal{T})\mbox{:
}supp\{f\}\subset\mathcal{O}\right\}
$$
and define the family of positive definite inner products
$(,)_{\mathcal{O},p}$ on $\mathcal{D}_\mathcal{O}(\mathcal{T})$ by
$$
(f,h)_{\mathcal{O},p}=\sum_{q\le
p}((\nabla)^qf,(\nabla)^qh)_M\mbox{, }\forall
f,h\in\mathcal{D}_\mathcal{O}(\mathcal{T})\mbox{,
}p,q\in\mathbb{N},
$$
which induces a family of norms $\|.\|_{\mathcal{O},p}$. One can
show that this family of norms is growing and consistent, and
gives the same topology as the usual one. Let us give
$\mathcal{D}_\mathcal{O}(\mathcal{T})$ a countably Hilbert space
structure in the following sense,
$$
\mathcal{D}_\mathcal{O}(\mathcal{T})=\bigcap_{\mathbb{N}}\overline{\mathcal{D}_\mathcal{O}(\mathcal{T})}^{(,)_{\mathcal{O},p}}.
$$
It can be shown, that thus constructed countably Hilbert space
$\mathcal{D}_\mathcal{O}(\mathcal{T})$ is nuclear. Let now
$$
\mathcal{O}_1\subset...\subset\mathcal{O}_n\subset...\subset M
$$
be an infinite family of growing compact regions. Then give
$\mathcal{D}(\mathcal{T})$ the inductive limit topology
$$
\mathcal{D}(\mathcal{T})=\lim_{n\to\infty}\mathcal{D}_{\mathcal{O}_n}(\mathcal{T}).
$$
Here we are done. Distributions $\mathcal{D}(\mathcal{T})'$ and
operations on them can be defined as usual. The same construction
can be done for $\mathcal{D}(\mathcal{T}_t)$ with minor
modifications.

At the end let us consider the choice of the topology in detail.
In the literature one usually chooses the family of norms
$\|.\|_p$ (or sometimes a family of seminorms $|\!(.)\!|_p$; from
these seminorms one can make norms by
$\|.\|_p=\sum_{q<p}|\!(.)\!|_q$ or $\|.\|_p=\sup_{q<p}|\!(.)\!|_q$
etc.) rather arbitrarily in accordance with the setup of the
problem, and it is tacitly assumed but not everywhere proven, that
all such choices give equivalent topologies. Let us for
consistency present here a proof of this fact. The zest of the
proof (the usage of the Sobolev embedding theorem) was suggested
by G. Folland.

\begin{proposition}\label{CountNormTopEquiv}
Let $\mathcal{T}\xrightarrow{\pi}M$ be an $n$ dimensional
pseudo-Riemannian vector bundle over the $d$-dimensional
parallelizable pseudo-Riemannian manifold $M$ with positive metric
product $(,)_\mathfrak{g}$ constructed as above, so that we have
well defined $L^m$ norms $|\|.\||_m$ for $1\le m\le\infty$ on
$\mathcal{D}(\mathcal{T})$. Let $\nabla$ be a connection on
$\mathcal{T}$. Let
\begin{romanlist}
\item $X_1...X_d$ be a system of first order smooth differential
operators on $C^\infty(\mathcal{T})$ which span the tangent space
$T^*M$ everywhere

\item the seminorms be given by $|\!(
f)\!|_{\alpha,q}=|\|P_{\alpha,q}(X_i)f\||_m$, where
$P_{\alpha,q}(X_i)$ are various monomials of order $q$ in
$\{X_i\}$, $f\in\mathcal{D}(\mathcal{T})$

\item the family of norms be given as
$\|f\|_p=|\|\{|\!(f)\!|_{\alpha,q}\}_{q\le p}\||_{l^k}$, or by a
superposition of different $|\|.\||_{l^k}$, $1\le k\le\infty$.
\end{romanlist}

Then the topology of $\mathcal{D}(\mathcal{T})$ defined by this
family of norms is independent of the decisions (i) to (iii).
\end{proposition}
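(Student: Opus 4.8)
The plan is to exploit that the test-function topology on $\mathcal{D}(\mathcal{T})$ is assembled, via the inductive-limit construction of Appendix A, from the spaces $\mathcal{D}_\mathcal{O}(\mathcal{T})$ of sections supported in a fixed compact $\mathcal{O}$; hence it suffices to show that on each $\mathcal{D}_\mathcal{O}(\mathcal{T})$ every family of norms produced by the recipe (i)--(iii) induces one and the same locally convex topology. Rather than compare two arbitrary recipe families directly, I would fix once and for all the reference family $\|f\|^{(p)}_{\mathcal{O}}=\bigl(\sum_{q\le p}\|(\nabla)^q f\|_{L^2(\mathcal{O})}^2\bigr)^{1/2}$ already used in Appendix A and prove every recipe family equivalent to it; equivalence of the topologies then follows by transitivity and passage to the inductive limit. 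Concretely, for each $p$ one must produce an integer $p'$ and a constant $C$ with $\|f\|^{\mathrm{rec}}_p\le C\,\|f\|^{(p')}_{\mathcal{O}}$ and $\|f\|^{(p)}_{\mathcal{O}}\le C\,\|f\|^{\mathrm{rec}}_{p'}$ for all $f\in\mathcal{D}_\mathcal{O}(\mathcal{T})$.

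Three elementary mechanisms do most of the work. First, on the finite-measure set $\mathcal{O}$ one has $L^{m_1}(\mathcal{O})\hookrightarrow L^{m_2}(\mathcal{O})$ whenever $m_1\ge m_2$, and — the point emphasized by Folland — the Sobolev embedding theorem, applied in a finite coordinate cover of $\mathcal{O}$ with a subordinate partition of unity, gives $W^{k,m_2}(\mathcal{O})\hookrightarrow L^{m_1}(\mathcal{O})$ for \emph{any} $m_1,m_2\in[1,\infty]$ as soon as $k>d$; thus an $L^m$ norm of a $q$-th order quantity is controlled by $L^2$ norms of $(q+k)$-th order quantities and vice versa, with a loss of at most a fixed number $k$ of derivatives. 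Second, any two of the first-order systems in play — two admissible frames $\{X_i\}$, $\{X_i'\}$, and the $d$ component operators of $\nabla$ in a chart — are related on $\mathcal{O}$ by matrices of smooth, hence bounded, functions; consequently a degree-$q$ monomial $P_{\alpha,q}(X_i')$ expands into a finite sum of monomials of degree $\le q$ in $\{X_j\}$ with bounded smooth coefficients, the lower-order terms arising precisely when derivatives fall on the transition coefficients. Third, the monomials of order $\le p$ form a \emph{finite} family, so all of their $\ell^k$-aggregations, and finite superpositions of such as permitted in (iii), are mutually equivalent with constants depending only on the number of monomials.

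Assembling these yields the two inequalities. For $\|f\|^{\mathrm{rec}}_p\le C\,\|f\|^{(p')}_{\mathcal{O}}$: use the third mechanism to replace $\|f\|^{\mathrm{rec}}_p$ by a plain sum of the seminorms $\|P_{\alpha,q}(X_i)f\|_{L^m}$, $q\le p$; rewrite each $P_{\alpha,q}(X_i)f$ by the second mechanism as a bounded combination of $(\nabla)^{q'}f$, $q'\le q$; and bound each $L^m$ norm by $L^2$ norms of $(\nabla)^{q''}f$, $q''\le q+k$, by the Sobolev half of the first mechanism. The reverse inequality $\|f\|^{(p)}_{\mathcal{O}}\le C\,\|f\|^{\mathrm{rec}}_{p'}$ is the one with genuine content: one must recover every $(\nabla)^q f$, $q\le p$, from the monomials $P_{\alpha,q'}(X_i)f$. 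Since $\{X_i\}$ is pointwise a coframe, the principal symbols of the degree-$q$ monomials span $\mathrm{Sym}^q(T_x^*M)$ at every $x\in\mathcal{O}$, so $(\nabla)^q f$ is a bounded-coefficient combination of the $P_{\alpha,q}(X_i)f$ modulo terms of order $<q$; an induction on $q$ clears the lower-order remainders, and a final application of the first mechanism (trivial if $m\ge2$, a Sobolev step costing $k$ extra derivatives if $m<2$) converts the resulting $L^m$ estimates into $L^2$ estimates, giving $\|f\|^{(p)}_{\mathcal{O}}\le C\|f\|^{\mathrm{rec}}_{p+k}$.

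The main obstacle is precisely this symbol–and–induction step in the reverse direction: one has to verify carefully that finitely many monomials in a frame $\{X_i\}$ generate, modulo strictly lower order and with coefficients bounded on $\mathcal{O}$, the full $q$-jet of a section, and to keep the smooth transition matrices under control through the induction on the order $q$. Once that is in place the remainder is soft — Hölder and Sobolev on a compact set render the exchange of the exponent $m$ completely harmless, and finiteness of the monomial set makes the choice of $\ell^k$-aggregation irrelevant.
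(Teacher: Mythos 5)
Your proposal follows essentially the same route as the paper: both reduce to a fixed compact $\mathcal{O}$, both deploy Hölder on a finite-measure set, the Sobolev embedding theorem (the step attributed to Folland in both), boundedness of the smooth transition matrices between any two spanning first-order systems on $\mathcal{O}$, and finiteness of the monomial set to equate all $\ell^k$-aggregations. The only organizational difference is that the paper chains eight pairwise equivalences, each varying one of (i)--(iii) while freezing the other two, whereas you fix the reference family $(\nabla_i,2,\ell^2)$ and compare every recipe to it directly; also, your ``principal symbols span $\mathrm{Sym}^q$ plus induction'' framing of the frame-change step is a heavier formulation of the paper's simpler remark that one may just swap the roles of $\{X_i\}$ and $\{Y_i\}$, since both are spanning systems and hence the transition matrices are invertible with smooth, hence bounded, entries on $\mathcal{O}$.
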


\begin{proof}
For convenience denote by $(X_i,m,\ast)$ the triple of choices at
points (i),(ii) and (iii). Then $(X_i,m,\ast)\sim(X_i',m',\ast')$
will mean that this two topologies are equivalent.

As the topology of $\mathcal{D}(\mathcal{T})$ is the inductive
limit of various $\mathcal{D}(\mathcal{T}_K)$ with
$\mathcal{T}_K=\pi^{-1}(K)$, $K\subset M$ compact, it suffices to
prove the assertion for an arbitrary $\mathcal{D}(\mathcal{T}_K)$.
The topologies given by two families of norms $\{\|.\|_p\}$ and
$\{\|.\|_p'\}$ are equivalent if and only if these two systems of
norms are themselves equivalent, i.e., $\forall p$, $\exists
q(p),r(p)>0$, $0<C_p,C_p'\in\mathbb{R}$ such that $\|.\|_p\le
C_p\|.\|_{q(p)}$ and $\|.\|_p'\le C_p'\|.\|_{r(p)}'$. Let us start
with the point (iii). Suppose the choices (i) and (ii) are fixed,
i.e., consider $(X_i,m,\ast)$ and $(X_i,m,\ast')$. Then all
possible choices in (iii) give equivalent systems of norms because
of the elementary inequalities
$$
|\|\{|\!(f)\!|_{\alpha,q}\}_I\||_{l^\infty}\le...\le|\|\{|\!(f)\!|_{\alpha,q}\}_I\||_{l^k}\le...\le|\|\{|\!(f)\!|_{\alpha,q}\}_I\||_{l^1}\le
N_I|\|\{|\!(f)\!|_{\alpha,q}\}_I\||_{l^\infty},
$$
where $N_I$ is the number of terms in the index set $I$. These
inequalities can be applied consecutively to estimate any
composite norm by, say, $|\|.\||_{l^\infty}$. An example of a
composite norm is $\|f\|_p=\sup_{q\le p}|\|\nabla^q f\||_\infty$.
We found that $(X_i,m,\ast)\sim(X_i,m,\ast')$.

Now let $1\le m\le\infty$ at (ii) and $k=\infty$ at (iii) be
chosen, and choose two systems of operators $\{X_i\}$ and
$\{Y_i\}$ at point (i) to construct the families of norms
$\{\|.\|_p\}$ and $\{\|.\|_p'\}$, respectively. This corresponds
to $(X_i,m,l^\infty)$ and $(Y_i,m,l^\infty)$. Because $\{X_i\}$
spans $T^*M$, there are functions $c_{ij}(x)\in C^\infty(M)$ and
smooth fields of homomorphisms $\tilde\Gamma_i\in
C^\infty(Hom(\mathcal{T},\mathcal{T}))$ with
$Y_i(x)=\sum_jc_{ij}(x)X_j(x)+\tilde\Gamma_i$. Using this for any
monomial $P_{\alpha,q}(Y_i)$ we get
$$
P_{\alpha,q}(Y_i)f=\sum_\beta
c_{\alpha,q}^\beta(x)Q_{\alpha,q}^\beta(X_i)f,
$$
where $c_{\alpha,q}^\beta(x)\in C^\infty(M)$ and
$Q_{\alpha,q}^\beta(X_i)$ are monomials of order less or equal
$q$. The number of summands is less than, say, $(4d)^q$. It
follows by Minkowsky inequality
$$
|\!(
f)\!|_{\alpha,q}'=|\|P_{\alpha,q}(Y_i)f\||_m\le\sum_\beta|\|c_{\alpha,q}^\beta(x)Q_{\alpha,q}^\beta(X_i)f\||_m,
$$
and then by H\"older inequality
$$
\sum_\beta|\|c_{\alpha,q}^\beta(x)Q_{\alpha,q}^\beta(X_i)f\||_m\le
C_{\alpha,q}\sum_\beta|\|Q_{\alpha,q}^\beta(X_i)f\||_m=C_{\alpha,q}\sum_\beta|\!(
f)\!|_{\alpha(\alpha,q,\beta),q(\alpha,q,\beta)},
$$
where $0<C_{\alpha,q}=\sup_\beta|\|c_{\alpha,q}^\beta\||_\infty$.
In other words, the seminorms of order $q$ of the second system
can be estimated by linear combinations of seminorms of the first
system of the same or lower order. Then
$$
\|f\|_p'=\sup_{q\le p}|\!(f)\!|_{\alpha,q}'\le C_p\sup_{q\le
p}\sum_\beta|\!(
f)\!|_{\alpha(\alpha,q,\beta),q(\alpha,q,\beta)}\le
$$
$$
\le C_p(4d)^p\sup_{q\le p}|\!(
f)\!|_{\alpha(\alpha,q,\beta),q(\alpha,q,\beta)}\le
C_p(4d)^p\sup_{q\le p}|\!(f)\!|_{\alpha,q}=C_p(4d)^p\|f\|_p,
$$
where $0<C_p=\sup_{q\le p}C_{\alpha,q}$. For the other direction
of the estimate we simply need to switch $\{X_i\}$ and $\{Y_i\}$.
Thus these two topologies are equivalent,
$(X_i,m,l^\infty)\sim(Y_i,m,l^\infty)$.

Finally let $X_i=\nabla_i$ (components with respect to a global
orthonromal frame in $T^*M$) be chosen at (i), and
$\|.\|=|\|\{|\!(.)\!|_{\alpha,q}\}_{q\le p}\||_{l^2}$ at (iii). We
construct two families of norms by choosing $1\le m<\infty$ and
$m'=\infty$ at (ii) for $\|.\|_p$ and $\|.\|_p'$, respectively.
This can be symbolized as $(\nabla_i,m,l^2)$ and
$(\nabla_i,\infty,l^2)$. Because $K$ is compact, by an application
of H\"older inequality we obtain
$$
|\|.\||_m\le C_m|\|.\||_\infty
$$
for some $0<C_m\in\mathbb{R}$, and hence obviously
$$
\|.\|_p\le C_m\|.\|_p'\mbox{, }p\in\mathbb{N}_0.
$$
The opposite inequality requires an application of Sobolev
embedding theorem for compact manifolds
\cite{Hebey1999},\cite{Taylor1991}. Denote the Sobolev norms
(which are equivalent to those in \cite{Hebey1999})
$$
|\|f\||_{W^{p,m}}=\sqrt{\sum_{q\le p}|\|\nabla^q f\||_m^2}.
$$
Then an application of Sobolev embedding theorem gives
$$
|\|.\||_{W^{0,\infty}}=|\|.\||_\infty\le D|\|.\||_{W^{d,1}}
$$
for some $0<D\in\mathbb{R}$. By another application of H\"older
inequality we find
$$
|\|.\||_{W^{d,1}}\le|\|.\||_{W^{d,2}},
$$
and therefore
$$
|\|.\||_\infty\le D\sqrt{\sum_{q\le d}|\|\nabla^q f\||_2^2}.
$$
Next
$$
|\|\nabla^q f\||_2^2=\sum_\alpha|\|P_{\alpha,q}(X_i)f\||_2^2,
$$
and finally
$$
\|f\|_p'=\sqrt{\sum_{q\le p}|\|P_{\alpha,q}(X_i)f\||_\infty^2}\le
D\sqrt{\sum_{q\le p}\sum_{j\le d}|\|\nabla^j
P_{\alpha,q}(X_i)f\||_2^2}=
$$
$$=D\sqrt{\sum_{q\le p}\sum_{j\le
d}|\|P_{\beta,j}(X_i)P_{\alpha,q}(X_i)f\||_2^2}\le
D\sqrt{\sum_{q\le p+d}|\|P_{\alpha,q}(X_i)f\||_2^2}\le
$$
$$
\le DE\sqrt{\sum_{q\le
p+d}|\|P_{\alpha,q}(X_i)f\||_m^2}=DE\|f\|_{p+d}',
$$
where in the last inequality again H\"olders inequality was used
with some $0<E\in\mathbb{R}$. Thus we have shown that choosing any
$1\le m<\infty$ is equivalent to choosing $m=\infty$ at point
(ii), i.e., $(\nabla_i,m,l^2)\sim(\nabla_i,\infty,l^2)$.

Write
$$
(X_i,m,\ast)\sim(X_i,m,l^\infty)\sim(\nabla_i,m,l^\infty)\sim(\nabla_i,m,l^2)\sim(\nabla_i,\infty,l^2)\sim(\nabla_i,m',l^2)\sim
$$
$$
\sim(\nabla_i,m',l^\infty)\sim(X_i',m',l^\infty)\sim(X_i',m',\ast').
$$
The proof is complete.
\end{proof}

\section{On the time dependent harmonic oscillator}

Here we will concentrate on some properties of the solutions of
the smooth complex time dependent harmonic oscillator
equation\index{Time dependent harmonic oscillator}
\begin{eqnarray}
\ddot T(s)+\Lambda(s)T(s)=0\label{THDO}
\end{eqnarray}
where $\Lambda(s)$ is a smooth complex function on the real line.
This equation is under attention since a long time, but some
results are not that easily available today (at least for us).

We start with an easy remark. Denote by
$$
W[Q,R](s)=\begin{pmatrix}
Q(s) & \dot Q(s)\\
R(s) & \dot R(s)
\end{pmatrix}
$$
the Wronski matrix of two solutions $Q$ and $R$.

\begin{remark}\label{TbyW_QR}
Let $Q,R$ be two linearly independent solutions of
(Eq.\ref{THDO}), and $T$ an arbitrary solution. Then from the
conservation of $\det W[Q,T]$ and $\det W[R,T]$ it is easy to find
$$
\begin{pmatrix}
\dot T(s)\\
-T(s)
\end{pmatrix}
=W[Q,R]^{-1}(s)\times W[Q,R](0)\times
\begin{pmatrix}
\dot T(0)\\
-T(0)
\end{pmatrix}=
$$
$$
=\det W[Q,R]^{-1}(0)
\begin{pmatrix}
\dot R(s) & -\dot Q(s)\\
- R(s) & Q(s)
\end{pmatrix}\times
\begin{pmatrix}
Q(0) & \dot Q(0)\\
R(0) & \dot R(0)
\end{pmatrix}\times
\begin{pmatrix}
\dot T(0)\\
-T(0)
\end{pmatrix}.
$$
Thus having at hand two such particular solutions $Q,R$, we have a
control over arbitrary solutions $T$ in terms of their initial
data.
\end{remark}

Our first task is to obtain a control over the magnitude of the
solution $T$ on a given compact interval $\mathcal{R}$ in terms of
its initial data $T(0)$ and $\dot T(0)$. This is done by the so
called energy estimate. Define the energy of a solution $T$ by
$$
\mathcal{W}[\breve T](s)=\frac{1}{2}|\dot
T|^2(s)+\frac{1}{2}\Re\Lambda(s)|T(s)|^2.
$$
If $\Re\Lambda>0$ on $\mathcal{R}$ then $2\mathcal{W}[\breve T]$
dominates $\Re\Lambda|T|^2$ and $|\dot T|^2$, and obtaining bounds
on $\mathcal{W}[\breve T]$ we automatically get bounds on $|T|$
and $|\dot T|$.

\begin{proposition}\label{EnergyEstCompl}
For arbitrary solution $T$ of
$$
\ddot T(s)+\Lambda(s)T(s)=0,
$$
with smooth complex valued $\Lambda(s)$ having a positive real
part (i.e., $\Re\Lambda(s)>0$) on a compact interval
$\mathcal{R}$, the energy function $\mathcal{W}[\breve T](s)$
satisfies the estimate
$$
\mathcal{W}[\breve
T](0)e^{-\int_0^sd\sigma(\frac{2|\Im\Lambda(\sigma)|}{\sqrt{\Re\Lambda(\sigma)}}+|\partial_s\ln\Re\Lambda(\sigma)|)}\le\mathcal{W}[\breve
T](s)\le\mathcal{W}[\breve
T](0)e^{\int_0^sd\sigma(\frac{2|\Im\Lambda(\sigma)|}{\sqrt{\Re\Lambda(\sigma)}}+|\partial_s\ln\Re\Lambda(\sigma)|)}
$$
for all $s\in\mathcal{R}$.
\end{proposition}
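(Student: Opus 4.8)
The plan is to differentiate the energy $\mathcal{W}[\breve T](s)$ along a solution and bound the logarithmic derivative $\bigl|\tfrac{d}{ds}\ln\mathcal{W}[\breve T](s)\bigr|$ by the integrand appearing in the exponent, after which Gr\"onwall's inequality (in its elementary integrated form) gives both the upper and lower bounds simultaneously. First I would compute, using $\ddot T=-\Lambda T$,
$$
\frac{d}{ds}\mathcal{W}[\breve T]=\Re(\dot{\bar T}\ddot T)+\Re\Lambda\,\Re(\bar T\dot T)+\tfrac12(\partial_s\Re\Lambda)|T|^2
=-\Re\Lambda\,\Re(\bar T\dot T)-\Im\Lambda\,\Im(\bar T\dot T)+\Re\Lambda\,\Re(\bar T\dot T)+\tfrac12(\partial_s\Re\Lambda)|T|^2,
$$
so that the two $\Re\Lambda\,\Re(\bar T\dot T)$ terms cancel and one is left with
$$
\frac{d}{ds}\mathcal{W}[\breve T]=-\Im\Lambda\,\Im(\bar T\dot T)+\tfrac12(\partial_s\Re\Lambda)|T|^2 .
$$
This cancellation is the one genuinely delicate computational point; everything else is estimation.

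Next I would bound each of the two remaining terms by a multiple of $\mathcal{W}[\breve T]$. Since $\Re\Lambda>0$ on $\mathcal{R}$, the energy dominates its pieces: $|T|^2\le \tfrac{2}{\Re\Lambda}\mathcal{W}[\breve T]$, $|\dot T|^2\le 2\mathcal{W}[\breve T]$, hence $|\Im(\bar T\dot T)|\le|T||\dot T|\le \tfrac{2}{\sqrt{\Re\Lambda}}\mathcal{W}[\breve T]$. Therefore
$$
\Bigl|\frac{d}{ds}\mathcal{W}[\breve T]\Bigr|\le |\Im\Lambda|\cdot\frac{2}{\sqrt{\Re\Lambda}}\mathcal{W}[\breve T]+\tfrac12|\partial_s\Re\Lambda|\cdot\frac{2}{\Re\Lambda}\mathcal{W}[\breve T]
=\Bigl(\frac{2|\Im\Lambda|}{\sqrt{\Re\Lambda}}+|\partial_s\ln\Re\Lambda|\Bigr)\mathcal{W}[\breve T].
$$
Writing $g(\sigma)=\tfrac{2|\Im\Lambda(\sigma)|}{\sqrt{\Re\Lambda(\sigma)}}+|\partial_s\ln\Re\Lambda(\sigma)|$, which is continuous and hence integrable on the compact interval $\mathcal{R}$, this says $-g\,\mathcal{W}\le \tfrac{d}{ds}\mathcal{W}\le g\,\mathcal{W}$.

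Finally, since $\mathcal{W}[\breve T](s)>0$ for all $s\in\mathcal{R}$ (it is a sum of two nonnegative terms, and if it vanished at some point then both $T$ and $\dot T$ would vanish there, forcing $T\equiv0$, which we may exclude or handle trivially), I can divide by $\mathcal{W}$ and integrate from $0$ to $s$:
$$
\Bigl|\ln\frac{\mathcal{W}[\breve T](s)}{\mathcal{W}[\breve T](0)}\Bigr|=\Bigl|\int_0^s\frac{d}{d\sigma}\ln\mathcal{W}[\breve T](\sigma)\,d\sigma\Bigr|\le\int_0^s g(\sigma)\,d\sigma
$$
(with the obvious convention on orientation when $s<0$). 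Exponentiating yields
$$
\mathcal{W}[\breve T](0)\,e^{-\int_0^s g(\sigma)d\sigma}\le\mathcal{W}[\breve T](s)\le\mathcal{W}[\breve T](0)\,e^{\int_0^s g(\sigma)d\sigma},
$$
which is exactly the claimed estimate. The main obstacle is purely bookkeeping: getting the cancellation in $\tfrac{d}{ds}\mathcal{W}$ right and then being careful that all the dominations by $\mathcal{W}$ are valid precisely because $\Re\Lambda>0$ on $\mathcal{R}$; no deeper idea is needed.
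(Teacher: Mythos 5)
Your proof is correct and follows essentially the same route as the paper: differentiate the energy, observe that the $\Re\Lambda\,\Re(\bar T\dot T)$ contributions cancel, bound the surviving terms by $\mathcal{W}$ using $|\dot T|^2\le 2\mathcal{W}$ and $|T|^2\le 2\mathcal{W}/\Re\Lambda$, and integrate the resulting logarithmic-derivative bound. The only difference is cosmetic: the paper recasts the equation as a real two-dimensional system $\ddot{\breve T}+\hat\Lambda\breve T=0$ and uses the symmetric/antisymmetric split $\hat\Lambda=\hat\Lambda^++\hat\Lambda^-$ of the coefficient matrix to exhibit the same cancellation, whereas you carry out the identical computation directly in $\mathbb{C}$ via complex conjugation.
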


\begin{proof}
Write $T(s)=R(s)+iS(s)$, $\Lambda(s)=\Theta(s)+i\Xi(s)$, and
insert into the equation. We will get the following system of real
equations,
$$
\begin{cases}
\ddot R(s)+\Theta(s)R(s)-\Xi(s)S(s)=0,\\
\ddot S(s)+\Theta(s)S(s)+\Xi(s)R(s)=0.
\end{cases}
$$
We can cast this into a real vector equation
$$
\ddot{\breve{T}}(s)+\hat\Lambda(s)\breve T(s)=0
$$
by denoting
$$
\breve T(s)=(R(s),S(s))^\top,
$$
and
$$
\hat\Lambda(s)=
\begin{pmatrix}
\Theta(s) & -\Xi(s)\\
\Xi(s) & \Theta(s)
\end{pmatrix}=\hat\Lambda^+(s)+\hat\Lambda^-(s)=
\Theta(s)\id+\Xi(s)\begin{pmatrix}
0 & -1\\
1 & 0
\end{pmatrix},
$$
where $\hat\Lambda^\pm$ denote the symmetric and antisymmetric
parts. The energy function equals
$$
\mathcal{W}[\breve
T](s)=\frac{1}{2}\dot{\breve{T}}^2(s)+\frac{1}{2}\breve
T^\top(s)\hat\Lambda(s)\breve
T(s)=\frac{1}{2}\dot{\breve{T}}^2(s)+\frac{1}{2}\breve
T^\top(s)\hat\Lambda^+(s)\breve T(s).
$$
On the interval $\mathcal{R}$ we have $\mathcal{W}[\breve T](s)>0$
as by the assumption $\Theta(s)>0$. One can easily find that
$$
\dot{\mathcal{W}}[\breve T](s)=\breve
T^\top(s)\hat\Lambda^-(s)\dot{\breve{T}}(s)+\frac{1}{2}\breve
T^\top(s)\dot{\hat{\Lambda}}^+(s)\breve T(s),
$$
whence it follows
$$
\left|\dot{\mathcal{W}}[\breve T](s)\right|\le|\Xi(s)||\breve
T(s)||\dot{\breve{T}}(s)|+|\partial_s\ln\Theta(s)|\mathcal{W}[\breve
T](s).
$$
By definition of $\mathcal{W}[\breve T]$ and positivity of
$\Theta$ we have $|\dot{\breve{T}}(s)|\le\sqrt{2\mathcal{W}[\breve
T](s)}$ and $|\breve T(s)|\le\sqrt{2\mathcal{W}[\breve
T](s)/\Theta(s)}$ on $\mathcal{R}$. It follows then
$$
\left|\partial_s\ln\mathcal{W}[\breve
T](s)\right|\le\frac{2|\Xi(s)|}{\sqrt{\Theta(s)}}+|\partial_s\ln\Theta(s)|,
$$
and integrating this we finally arrive at
$$
\mathcal{W}[\breve
T](0)e^{-\left|\int_0^sd\sigma(\frac{2|\Xi(\sigma)|}{\sqrt{\Theta(\sigma)}}+|\partial_s\ln\Theta(\sigma)|)\right|}\le\mathcal{W}[\breve
T](s)\le\mathcal{W}[\breve
T](0)e^{\left|\int_0^sd\sigma(\frac{2|\Xi(\sigma)|}{\sqrt{\Theta(\sigma)}}+|\partial_s\ln\Theta(\sigma)|)\right|},
$$
precisely as in the statement.
\end{proof}

If however $\Lambda$ is not guaranteed to be positive, then on
those regions where it is negative the magnitude of the solutions
is expected to behave exponentially. We are able to capture that
exponential factor by the following beautiful trick.

\begin{proposition}\label{KappaTrick}
For any $0<\kappa\in\mathbb{R}$, any solution of the equation
$$
\ddot T(s)+\Lambda(s)T(s)=0
$$
can be represented as $T(s)=\tau(\frac{1}{\kappa}\Th(\kappa
s))\ch(\kappa s)$, where $\tau(z)$ is a solution of the equation
$$
\ddot\tau(z)+\Omega(z)\tau(z)=0
$$
with
$$
\Omega(z)=\frac{\kappa^2+\Lambda(\frac{1}{\kappa}\aTh(\kappa
z))}{(1-\kappa^2z^2)^2}\mbox{,
}z\in(-\frac{1}{\kappa},\frac{1}{\kappa}).
$$
\end{proposition}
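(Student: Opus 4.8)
The plan is to read the claimed identity as a simultaneous change of the independent variable $s\mapsto z$ and the dependent variable $T\mapsto\tau$, and to verify by the chain rule that it carries one equation into the other. First I would record the elementary facts about the substitution. Setting $z(s)=\frac1\kappa\Th(\kappa s)$ one has $\kappa z=\Th(\kappa s)$, hence $s=\frac1\kappa\aTh(\kappa z)$, so $s\mapsto z$ is a smooth bijection of $\mathbb{R}$ onto $(-\frac1\kappa,\frac1\kappa)$ with smooth inverse; this already makes the correspondence $T\leftrightarrow\tau$ well defined and bijective on solution spaces. Moreover $1-\kappa^2z^2=1-\Th^2(\kappa s)=\ch^{-2}(\kappa s)$, so $\dot z(s)=\ch^{-2}(\kappa s)=1-\kappa^2z^2$ and $\ddot z(s)=-2\kappa^2 z\,\dot z$. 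Writing $\phi(s)=\ch(\kappa s)$, I would also note $\phi''=\kappa^2\phi$ and, using $\sh(\kappa s)=\ch(\kappa s)\Th(\kappa s)=\kappa z\,\phi$, that $\phi'=\kappa^2 z\,\phi$.

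Next I would substitute $T(s)=\phi(s)\,\tau(z(s))$ into $\ddot T+\Lambda T=0$. Differentiating twice,
\[
T''=\phi''\tau+2\phi'\dot z\,\dot\tau+\phi\big(\dot z^2\ddot\tau+\ddot z\,\dot\tau\big),
\]
and then inserting $\phi''=\kappa^2\phi$, $\phi'=\kappa^2 z\phi$ and $\ddot z=-2\kappa^2 z\dot z$, the two terms proportional to $\dot\tau$, namely $2\kappa^2 z\phi\dot z\,\dot\tau$ and $-2\kappa^2 z\phi\dot z\,\dot\tau$, cancel. What survives is $T''=\kappa^2\phi\,\tau+\phi\,\dot z^2\,\ddot\tau$, so
\[
0=T''+\Lambda T=\phi\Big[\dot z^2\,\ddot\tau+\big(\kappa^2+\Lambda\big)\tau\Big].
\]
Since $\phi(s)=\ch(\kappa s)$ never vanishes, this is equivalent to $\ddot\tau+\dfrac{\kappa^2+\Lambda}{\dot z^2}\,\tau=0$; substituting $\dot z^2=(1-\kappa^2z^2)^2$ and $\Lambda$ evaluated at $s=\frac1\kappa\aTh(\kappa z)$ gives precisely $\ddot\tau+\Omega(z)\tau=0$ with the $\Omega$ in the statement.

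There is no serious obstacle here; the only point requiring attention is the bookkeeping in the chain-rule computation, in particular checking that the prefactor $\phi=\ch(\kappa s)$ is exactly the one that kills the first-order term in $\tau$ — this is forced by the relation $2\phi'/\phi=-\ddot z/\dot z$, both sides equalling $2\kappa^2 z$, which is the content of the trick. To finish, I would run the computation backwards: every solution $\tau$ of $\ddot\tau+\Omega\tau=0$ on $(-\frac1\kappa,\frac1\kappa)$ produces, via $T(s)=\tau\big(\frac1\kappa\Th(\kappa s)\big)\ch(\kappa s)$, a solution of the original equation on $\mathbb{R}$, and since the map is bijective, the asserted representation is available for an arbitrary solution $T$.
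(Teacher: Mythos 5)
Your proof is correct and follows essentially the same route as the paper: the paper performs the identical substitution $\kappa z=\Th(\kappa s)$, $T(s)=\tau\,\ch(\kappa s)$ but organizes it in two steps (first changing the dependent variable to get $\ddot\tau+2\kappa\Th(\kappa s)\dot\tau+(\kappa^2+\Lambda)\tau=0$, then changing the independent variable to eliminate the first-order term), whereas you carry out the composite substitution in one chain-rule computation. The content is the same; your version has the small advantage of exhibiting the cancellation $2\phi'/\phi=-\ddot z/\dot z$ that makes the prefactor $\ch(\kappa s)$ the right choice.
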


\begin{proof}
The proof is elementary once we already know the clue: the
substitution of variables $\kappa z=\Th(\kappa s)$. The
substitution $T(s)=\tau(s)\ch(\kappa s)$ into the original
equation gives
$$
\ddot\tau(s)+2\kappa\Th(\kappa
s)\dot\tau(s)+(\kappa^2+\Lambda(s))\tau(s)=0,
$$
then the substitution $s\to z$ yields the final formulas.
\end{proof}

Let us say a couple of words about this. If $\Re\Lambda$ has a
minimal negative value $-c$ on some domain, then it suffices to
set $\kappa=\sqrt{c}$ to reduce the problem to an oscillatory
equation for $\rho$. The upper bound of the rate of exponential
expansion is precisely given by the square root of the minimal
negative value of $\Re\Lambda$.

Finally we combine these two statements to find an explicit
uniform bound on an arbitrary solution $T$. Let the compact
interval $\mathcal{R}$ containing 0 be fixed, and set
$$
A_\mathcal{R}=\sup_\mathcal{R}|\Im\Lambda|\mbox{,
}c_\mathcal{R}=\inf_\mathcal{R}\Re\Lambda\mbox{,
}\kappa=\sqrt{1+|\min\{0,c_\mathcal{R}\}|}\mbox{,
}B_\mathcal{R}=\sup_\mathcal{R}\left|\partial_s\ln\left(\kappa^2+\Re\Lambda\right)\right|,
$$
$$
D_\mathcal{R}=\sup_\mathcal{R}(\kappa^2+\Re\Lambda)\mbox{,
}e_\mathcal{R}=\inf_\mathcal{R}(\kappa^2+\Re\Lambda)=1+\max\{0,c_\mathcal{R}\},
$$
$$
L_\mathcal{R}=\left(2A_\mathcal{R}+\ch^2(\kappa|\mathcal{R}|)B_\mathcal{R}+2\kappa\sh(2\kappa|\mathcal{R}|)\right)
$$
(we suppressed the index $\mathcal{R}$ of $\kappa$ for
convenience).

\begin{corollary}\label{TEstAbstract}
For an arbitrary solution $T$ it holds
$$
|T(s)|\le|T(0)|\sqrt{\frac{D_\mathcal{R}}{e_\mathcal{R}}}e^{L_\mathcal{R}}\ch(\kappa|\mathcal{R}|)+|\dot
T(0)|\frac{1}{\sqrt{e_\mathcal{R}}}e^{L_\mathcal{R}}\ch(\kappa|\mathcal{R}|)
$$
for all $s\in\mathcal{R}$.
\end{corollary}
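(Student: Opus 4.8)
The plan is to chain the two preparatory results of this appendix. Proposition~\ref{KappaTrick} lets us absorb the (possibly negative) infimum of $\Re\Lambda$ into a hyperbolic weight, turning the equation into one whose symbol has a strictly positive real part, after which Proposition~\ref{EnergyEstCompl} supplies the energy estimate.

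First I would apply Proposition~\ref{KappaTrick} with precisely the constant $\kappa=\sqrt{1+|\min\{0,c_\mathcal{R}\}|}$ of the statement, writing $T(s)=\tau\bigl(\tfrac1\kappa\Th(\kappa s)\bigr)\ch(\kappa s)$ with $\ddot\tau(z)+\Omega(z)\tau(z)=0$ and $\Omega(z)=(\kappa^2+\Lambda)/(1-\kappa^2z^2)^2$. On $\mathcal{R}$ the new variable $z=\tfrac1\kappa\Th(\kappa s)$ runs over a compact interval $\mathcal{R}'\ni 0$ on which $1-\kappa^2z^2=\ch^{-2}(\kappa s)\in(0,1]$, so that $\Re\Omega(z)=(\kappa^2+\Re\Lambda)/(1-\kappa^2z^2)^2\ge\kappa^2+\Re\Lambda\ge e_\mathcal{R}\ge1>0$; this is exactly the hypothesis of Proposition~\ref{EnergyEstCompl} for $\tau$ on $\mathcal{R}'$, and in addition $\Re\Omega(0)=\kappa^2+\Re\Lambda(0)\le D_\mathcal{R}$, which is what will enter the initial energy.

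Next I would run Proposition~\ref{EnergyEstCompl} on $\tau$: it bounds $\mathcal{W}[\breve\tau](z)$ by $\mathcal{W}[\breve\tau](0)$ times $\exp\bigl|\!\int_0^z(2|\Im\Omega|/\sqrt{\Re\Omega}+|\partial_z\ln\Re\Omega|)\bigr|$. The core estimate is to dominate this exponent in terms of the constants $A_\mathcal{R},B_\mathcal{R},\kappa,|\mathcal{R}|$ assembled into $L_\mathcal{R}$. For this I would change variables back to $s$ via $dz=(1-\kappa^2z^2)\,ds=\ch^{-2}(\kappa s)\,ds$: the $(1-\kappa^2z^2)$ factors cancel favourably, the first term collapsing to $\int 2|\Im\Lambda|/\sqrt{\kappa^2+\Re\Lambda}$ (controlled by $A_\mathcal{R}$ since $\kappa^2+\Re\Lambda\ge e_\mathcal{R}\ge1$), while from $\partial_z\ln\Re\Omega=\ch^2(\kappa s)\,\partial_s\ln(\kappa^2+\Re\Lambda)+2\kappa\sh(2\kappa s)$ the two summands feed, respectively, the $\ch^2(\kappa|\mathcal{R}|)B_\mathcal{R}$ and the $\kappa\sh(2\kappa|\mathcal{R}|)$ contributions after elementary monotonicity bounds on $\ch,\sh$ over $\mathcal{R}$. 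I expect this bookkeeping to be the main obstacle: nothing deep, but one must shepherd every boundary contribution — including the drift term $2\kappa\Th(\kappa s)\dot\tau$ produced by the substitution in Proposition~\ref{KappaTrick} and the logarithmic derivative of the weight — so that it lands inside $L_\mathcal{R}$.

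Finally I would unwind back to $T$. Evaluating at $s=z=0$ gives $\tau(0)=T(0)$ and $\dot\tau(0)=\dot T(0)$ (the $z$-derivative at $0$ equals the $s$-derivative since $dz/ds=1$ there), hence $\mathcal{W}[\breve\tau](0)=\tfrac12|\dot T(0)|^2+\tfrac12\Re\Omega(0)|T(0)|^2\le\tfrac12\bigl(|\dot T(0)|^2+D_\mathcal{R}|T(0)|^2\bigr)$. Combining $|T(s)|=|\tau(z)|\ch(\kappa s)\le\ch(\kappa|\mathcal{R}|)\sqrt{2\mathcal{W}[\breve\tau](z)/\Re\Omega(z)}\le\ch(\kappa|\mathcal{R}|)\sqrt{2\mathcal{W}[\breve\tau](z)/e_\mathcal{R}}$ with the energy bound and the exponent estimate, and then using $\sqrt{a+b}\le\sqrt a+\sqrt b$ to split $\sqrt{|\dot T(0)|^2+D_\mathcal{R}|T(0)|^2}$, yields exactly the two-term bound of Corollary~\ref{TEstAbstract}.
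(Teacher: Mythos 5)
Your proof is correct and yields exactly the claimed constants, but it takes a slightly different route from the paper's. The paper applies Proposition~\ref{KappaTrick} and Proposition~\ref{EnergyEstCompl} to the fundamental system $Q,R$ with initial data $Q(0)=1,\dot Q(0)=0$ and $R(0)=0,\dot R(0)=1$, bounds $|Q(s)|$ and $|R(s)|$ separately, and then combines via Remark~\ref{TbyW_QR} through $T(s)=T(0)Q(s)+\dot T(0)R(s)$. You instead apply \ref{KappaTrick} directly to $T$, so the initial energy is $\mathcal{W}[\breve\tau](0)=\tfrac12|\dot T(0)|^2+\tfrac12\Re\Omega(0)|T(0)|^2$, and you split the resulting single bound into the two-term form using $\sqrt{a+b}\le\sqrt a+\sqrt b$. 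The two strategies are equivalent in content — the fundamental system is the paper's way of separating the $T(0)$ and $\dot T(0)$ contributions, while subadditivity of the square root achieves the same separation more compactly — and both land on identical constants $\sqrt{D_\mathcal{R}/e_\mathcal{R}}\,e^{L_\mathcal{R}}\ch(\kappa|\mathcal{R}|)$ and $e^{L_\mathcal{R}}\ch(\kappa|\mathcal{R}|)/\sqrt{e_\mathcal{R}}$. One small clarification: the drift term $2\kappa\Th(\kappa s)\dot\tau$ you mention is already eliminated inside Proposition~\ref{KappaTrick} by the subsequent change of variable $\kappa z=\Th(\kappa s)$; once you arrive at $\ddot\tau(z)+\Omega(z)\tau(z)=0$ there is no drift left to shepherd, and the only bookkeeping needed is the bound of the exponent $\bigl|\int_0^z\bigl(2|\Im\Omega|/\sqrt{\Re\Omega}+|\partial_z\ln\Re\Omega|\bigr)\bigr|\le 2L_\mathcal{R}$, which is indeed where all the hyperbolic factors in $L_\mathcal{R}$ come from, as the paper carries out by bounding the integrand on the (short, $\tfrac{2}{\kappa}\Th(\kappa|\mathcal{R}|)$-long) $z$-interval.
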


\begin{proof}
Consider the linearly independent solutions $Q$ and $R$ given by
initial data
$$
Q(0)=1\mbox{, }\dot Q(0)=0\mbox{, }R(0)=0\mbox{, }\dot R(0)=1.
$$
Using Proposition \ref{KappaTrick} represent them as
$Q(s)=\xi(z(s))\ch(\kappa s)$ and $R(s)=\rho(z(s))\ch(\kappa s)$,
where $\xi(z)$ and $\rho(z)$ are solutions of the equation
$$
\ddot\tau(z)+\Omega(z)\tau(z)=0
$$
with
$$
\Omega(z)=\frac{\kappa^2+\Lambda(\frac{1}{\kappa}\aTh(\kappa
z))}{(1-\kappa^2z^2)^2}.
$$
Using $z(0)=0$ and
\begin{eqnarray}
\frac{d}{ds}\left[\tau(z(s))\ch(\kappa
s)\right]=\frac{\dot\tau(z(s))}{\ch(\kappa
s)}+\tau(z(s))\sh(\kappa s)\kappa,\label{dTbydtau}
\end{eqnarray}
we find
$$
\xi(0)=1\mbox{, }\dot \xi(0)=0\mbox{, }\rho(0)=0\mbox{, }\dot
\rho(0)=1.
$$
Note that $\Re\Omega(s)=\kappa^2+\Re\Lambda\ge1$, thus Proposition
\ref{EnergyEstCompl} is applicable for $\xi$ and $\rho$. We have
$\mathcal{W}[\xi](0)=\frac{1}{2}\Re\Omega(0)$ and
$\mathcal{W}[\rho](0)=\frac{1}{2}$. Now
$$
\frac{d}{dz}\ln\Re\Omega(z)=\frac{ds}{dz}(s)\frac{d}{ds}\ln\Re\Omega(z(s))=\ch^2(\kappa
s)\frac{d}{ds}\ln\left((\kappa^2+\Re\Lambda(s))\ch^4(\kappa
s)\right)=
$$
$$
=\ch^2(\kappa
s)\frac{d}{ds}\ln\left(\kappa^2+\Re\Lambda(s)\right)+2\kappa\sh(2\kappa
s).
$$
Then it follows
$$
\left|\int_0^zd\sigma(\frac{2|\Im\Omega(\sigma)|}{\sqrt{\Re\Omega(\sigma)}}+|\partial_z\ln\Re\Omega(\sigma)|)\right|\le\frac{2}{\kappa}\Th(\kappa|\mathcal{R}|)\left(2A_\mathcal{R}+\ch^2(\kappa|\mathcal{R}|)B_\mathcal{R}+2\kappa\sh(2\kappa|\mathcal{R}|)\right)\le
$$
$$
\le2\left(2A_\mathcal{R}+\ch^2(\kappa|\mathcal{R}|)B_\mathcal{R}+2\kappa\sh(2\kappa|\mathcal{R}|)\right)=2L_\mathcal{R}.
$$
By Proposition \ref{EnergyEstCompl} we have
$$
\mathcal{W}[\xi](z)\le\mathcal{W}[\xi](0)e^{2L_\mathcal{R}}\mbox{,
}\mathcal{W}[\rho](z)\le\mathcal{W}[\rho](0)e^{2L_\mathcal{R}},
$$
which entails
$$
|\xi(z)|\le\sqrt{\frac{\kappa^2+\Re\Lambda(0)}{\kappa^2+\Re\Lambda(s(z))}}e^{L_\mathcal{R}}\mbox{,
}|\dot\xi(z)|\le\sqrt{\kappa^2+\Re\Lambda(0)}e^{L_\mathcal{R}},
$$
$$
|\rho(z)|\le\frac{1}{\sqrt{\kappa^2+\Re\Lambda(s(z))}}e^{L_\mathcal{R}}\mbox{,
}|\dot\rho(z)|\le e^{L_\mathcal{R}}.
$$
For $Q$ and $R$ we get
$$
|Q(s)|\le\sqrt{\frac{D_\mathcal{R}}{e_\mathcal{R}}}e^{L_\mathcal{R}}\ch(\kappa|\mathcal{R}|)\mbox{,
}|R(s)|\le\frac{1}{\sqrt{e_\mathcal{R}}}e^{L_\mathcal{R}}\ch(\kappa|\mathcal{R}|),
$$
and using (Eq.\ref{dTbydtau})
$$
|\dot
Q(s)|\le\sqrt{D_\mathcal{R}}e^{L_\mathcal{R}}\left(1+\frac{\kappa\sh(\kappa|\mathcal{R}|)}{\sqrt{e_\mathcal{R}}}\right)\mbox{,
}|\dot R(s)|\le
e^{L_\mathcal{R}}\left(1+\frac{\kappa\sh(\kappa|\mathcal{R}|)}{\sqrt{e_\mathcal{R}}}\right).
$$
Finally let $T$ be an arbitrary solution of the original equation.
Applying Remark \ref{TbyW_QR} for $Q$,$R$ and $T$ we find
$$
T(s)=T(0)Q(s)+\dot T(0)R(s),
$$
and hence
$$
|T(s)|\le|T(0)|\sqrt{\frac{D_\mathcal{R}}{e_\mathcal{R}}}e^{L_\mathcal{R}}\ch(\kappa|\mathcal{R}|)+|\dot
T(0)|\frac{1}{\sqrt{e_\mathcal{R}}}e^{L_\mathcal{R}}\ch(\kappa|\mathcal{R}|),
$$
as asserted.
\end{proof}

\section{A result from functional calculus}

In this section we will obtain a result using the theory of
holomorphic functional calculus of strip type operators. We are
grateful to M. Haase for very useful comments on this theory, and
refer to his book \cite{Haase2006} for all the information
necessary in this section.

Let $\mathbb{H}_a=\{z\in\mathbb{C}: |\Im z|<a\}$ denote the
symmetric strip of height $a>0$. If for an (unbounded) operator
$A$ on the Banach space $\mathcal{X}$ we have $A\in\Strip(a)$,
then we can apply the holomorphic functional
calculus\index{Functional calculus} of $A$ given by
$$
F(A)=\frac{1}{2\pi i}\int_{\gamma_a}dzf(z)\mathfrak{R}(z,A)\mbox{,
}\forall F\in\mathcal{M}[\mathbb{H}_a],
$$
where $\gamma_a=\partial\mathbb{H}_a$ oriented positively
(counterclockwise), and $\mathfrak{R}(z,A)$ is the resolvent of
$A$ for $z\in\mathbb{C}$. Define
$$
\mathcal{A}(\mathbb{H}_a)=\{F\in\Hol(\mathbb{H}_a): \exists
N\in\mathbb{N}\mbox{ s.t. }F=O(|\Re z|^N)\},
$$
and
$$
\mathcal{A}[\mathbb{H}_a]=\bigcup_{b>a}\mathcal{A}(\mathbb{H}_b).
$$

Now let $D_{\Sigma_t}=-\Delta+m^\star(t,\vec x)$ be the known real
lower semi-bounded operator acting on the vector bundle
$\mathcal{T}_t$ over a Riemannian manifold $\Sigma_t$, and let
$K\subset\Sigma_t$ be a compact region. Denote
$$
\mathcal{D}(K)=\{f\in\mathcal{D}(\mathcal{T}_t): \supp f\subset
K\}.
$$
Then we have the following result.

\begin{proposition}\label{FuncCalcProp}
For any $F\in\mathcal{A}[\mathbb{H}_0]$ and $f\in\mathcal{D}(K)$
it follows
$$
F(D_{\Sigma_t})f\in\mathcal{D}(K).
$$
\end{proposition}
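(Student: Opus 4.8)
The plan is to realize $F(D_{\Sigma_t})$ through Haase's holomorphic functional calculus for strip-type operators and then to treat separately the two assertions hidden in the statement: that $F(D_{\Sigma_t})f$ is smooth, and that it is supported in $K$.

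First I would verify that $D:=D_{\Sigma_t}=-\Delta_t+m^\star(t,\cdot)$ is of strip type of arbitrarily small height. Being self-adjoint and lower semi-bounded, its spectrum lies in $[c,\infty)\subset\mathbb{R}$ for some $c$, so for every $a>0$ the resolvent $\mathfrak{R}(z,D)$ exists on $\mathbb{C}\setminus\mathbb{H}_a$ and obeys the bounds required for $D\in\Strip(a)$. Hence the primary (elementary) calculus applies, and for $F\in\mathcal{A}[\mathbb{H}_0]$ — holomorphic on some $\mathbb{H}_b$, $b>0$, with $F=O(|\Re z|^N)$ — the regularized calculus of \cite{Haase2006} produces a closed operator $F(D)=e(D)^{-1}(eF)(D)$, using a regularizer such as $e(z)=(z-i\mu)^{-M}$ with $\mu>b$ and $M>N+1$; note that $e(D)^{-1}=(D-i\mu)^M$ is a differential operator and $e(D)$ is injective with dense range. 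Since also $z\mapsto z^{k}F(z)\in\mathcal{A}[\mathbb{H}_0]$, the composition rule gives $D^{k}F(D)\supseteq(z^{k}F)(D)$, and on the common domain the holomorphic calculus agrees with the Borel calculus of the self-adjoint operator $D$.

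For the regularity claim: $f\in\mathcal{D}(K)$ is smooth and compactly supported, so applying the differential operator $D$ repeatedly gives $f\in\bigcap_{k}\Dom(D^{k})$, with each $D^{k}f\in\mathcal{D}(K)$ (a differential operator does not enlarge supports). Then $F(D)f$ is defined and, by the composition rule, again lies in $\bigcap_{k}\Dom(D^{k})$; since $D$ is elliptic of order two, $D^{k}F(D)f\in L^2(\mathcal{T}_t)$ forces $F(D)f\in H^{2k}_{\mathrm{loc}}(\mathcal{T}_t)$ by elliptic regularity, and intersecting over $k$ together with Sobolev embedding yields $F(D)f\in C^\infty(\mathcal{T}_t)$. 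This disposes of the regularity half.

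The hard part is $\supp F(D)f\subset K$. Here I would fix an arbitrary $g\in\mathcal{D}(\mathcal{T}_t)$ with $\overline{\supp g}\cap K=\emptyset$ and show $\langle F(D)f,g\rangle=0$. Using the Borel calculus, $\langle F(D)f,g\rangle=\int_c^\infty F(\lambda)\,d\langle E_\lambda f,g\rangle$, the integral converging absolutely because the complex measure $d\langle E_\lambda f,g\rangle$ decays faster than any power ($\int\lambda^{2k}\,d\langle E_\lambda f\rangle<\infty$ for all $k$, $f$ being smooth). All its moments vanish: $\int\lambda^{k}\,d\langle E_\lambda f,g\rangle=\langle D^{k}f,g\rangle=0$ for every $k\ge 0$, since $D^{k}f$ is supported in $K$ while $g$ is not. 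Thus $\langle F(D)f,g\rangle=0$ follows as soon as $d\langle E_\lambda f,g\rangle$ is determined by its moments, and I expect this Stieltjes moment-determinacy to be the genuine obstacle: super-polynomial decay of the spectral data is not in itself sufficient, so one must either use that the cosmologically relevant $\Sigma_t$ are compact — in which case the support statement is vacuous and only the regularity above remains — or invoke the standing analyticity hypotheses of the section, under which the spectral data decay at a rate satisfying Carleman's condition and the argument closes. A fallback worth pursuing is to approximate $F$ on $\Spec(D)$ by polynomials $P_n$ with $P_n(D)f\to F(D)f$ in $L^2$ and to use that the sections supported in $K$ form an $L^2$-closed subspace while each $P_n(D)f$ is supported in $K$; the delicate point there is again to produce such $P_n$ converging against the spectral measure of $f$ while respecting the polynomial growth of $F$.
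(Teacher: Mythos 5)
Your route is genuinely different from the paper's. The paper never touches the spectral measure of $D_{\Sigma_t}$ on $L^2(\mathcal{T}_t)$; instead, exploiting the nuclear countably Hilbert structure fixed by Proposition~\ref{CountNormTopEquiv}, it introduces the scale $\mathcal{H}_p=\overline{\mathcal{D}(K)}^{(,)_p}$ so that $\mathcal{D}(K)=\bigcap_p\mathcal{H}_p$, restricts $D_{\Sigma_t}$ to a symmetric operator $D_p$ on each $\mathcal{H}_p$, takes a self-adjoint extension $A_p$ (von Neumann), and shows via the regularizer $e(z)=(z-i\lambda)^{-(N+2)}$ that $F(A_p)$ maps $\mathcal{H}_{p+2(N+2)}$ into $\mathcal{H}_p$. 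Intersecting over $p$ then handles smoothness and support in one stroke, because the support constraint is built into the spaces $\mathcal{H}_p$; there is nothing separate to prove about supports. You, in contrast, split the claim into smoothness (which your elliptic regularity bootstrap disposes of correctly) and support, and attack the latter directly through the spectral measure $d\langle E_\lambda f,g\rangle$ and the moment problem.

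Your instinct to flag the moment step as the genuine obstacle is right, and in fact it exposes a difficulty that the paper's own proof does not really escape. The paper's final step — $F(D_{\Sigma_t})f=F(A_p)f$ because the two operators agree on $\mathcal{D}(K)$ — is exactly the place where ``same differential expression'' and ``same self-adjoint operator'' part ways: two self-adjoint operators on different Hilbert spaces can agree on a common core and still have inequivalent functional calculi (free versus Dirichlet Laplacian is the standard example). And the support conclusion can fail outright for admissible $F$: take $F(z)=(z-i)^{-1}\in\mathcal{A}(\mathbb{H}_{1/2})\subset\mathcal{A}[\mathbb{H}_0]$; then $F(D_{\Sigma_t})$ is a resolvent of a second-order elliptic operator, whose Schwartz kernel is generically exponentially decaying but of full support, so $F(D_{\Sigma_t})f\notin\mathcal{D}(K)$ when $\Sigma_t$ is non-compact. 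A moment-determinacy argument cannot rescue this, because the statement being aimed for is not true at this level of generality. If you wish to close the support half by your method you must restrict $F$ (e.g.\ to entire functions of finite exponential type, where finite propagation speed gives the support estimate), or else follow the paper into the $\mathcal{H}_p$-scale but then supply a genuine comparison of the calculi of $A_p$ and $D_{\Sigma_t}$ rather than inferring it from agreement on a core.
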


\begin{proof}
Let the nuclear topology be given by $(X_i,2,l^2)$, i.e., for any
$p\in\mathbb{N}_0$ we set
$$
(f,h)_p=\sum_{q\le p}(Q_{\alpha,q}(X_i)f,Q_{\alpha,q}(X_i)h)_{L^2}
$$
and consider the induced norms $\|.\|_p$. Define the Hilbert
spaces
$$
\mathcal{H}_p=\overline{\mathcal{D}(K)}^{(,)_p},
$$
then by the property of the countably normed spaces we have
$$
\mathcal{H}_p\subset\mathcal{H}_q\mbox{, }q<p,
$$
$$
\mathcal{D}(K)=\bigcap_{p=0}^\infty\mathcal{H}_p.
$$
Fix $p$, and define the operator $D_p$ on $\mathcal{H}_p$ by
setting $D_p f=D_{\Sigma_t} f$ for all
$f\in\Dom(D_{\Sigma_t})\cap\mathcal{H}_p$, then
$\Dom(D_p)\supset\mathcal{H}_{p+2}$ is a dense subspace of
$\mathcal{H}_p$. Then $D_p$ is a real symmetric operator, and
hence by von Neumann's theorem possesses a self-adjoint extension
$A_p$ which needs not be lower semi-bounded. The self-adjoint
operator $A_p$ has a purely real spectrum, thus $A_p\in\Strip(0)$.
Let $\mathcal{A}(\mathbb{H}_a)\ni F(z)=O(|\Re z|^N)$, then for a
sufficiently large $a<\lambda\in\mathbb{R}$, the function
$e(z)=(z-i\lambda)^{-(N+2)}$ will regularize $F$ on
$\mathbb{H}_a$. In particular, we will have
$[eF](A_p)\in\mathcal{B}(\mathcal{H}_p)$. Then
$F(A_p)=(A_p-i\lambda)^{N+2}[eF](A_p)=[eF](A_p)(A_p-i\lambda)^{N+2}$,
from where it follows that $\Dom(A_p^{N+2})\subset\Dom(F(A_p))$.
From the definition of $A_p$ it is clear that
$\mathcal{H}_{p+2(N+2)}\subset\Dom(A_p^{N+2})$, whence
$\mathcal{H}_{p+2(N+2)}\subset\Dom(F(A_p))$. Thus we have
established, that whenever $f\in\mathcal{H}_{p+2(N+2)}$, then
necessarily $F(A_p)f\in\mathcal{H}_p$. Now if
$f\in\mathcal{D}(K)$, then for any $p\ge0$ we have
$f\in\mathcal{H}_{p+2(N+2)}$, and hence $F(A_p)f\in\mathcal{H}_p$.
Meanwhile for any $p\ge0$, the self-adjoint operator
$D_{\Sigma_t}$ agrees with $A_p$ on $\mathcal{D}(K)$. Therefore
also their functional calculi agree,
$F(D_{\Sigma_t})f=F(A_p)f\in\mathcal{H}_p$. Thus
$$
F(D_{\Sigma_t})f\in\bigcap_{p=0}^\infty\mathcal{H}_p=\mathcal{D}(K),
$$
which completes the proof.
\end{proof}

\bibliographystyle{ieeetr}
\bibliography{lib}

\end{document}